\newcommand{\N}{{\mathbb{N}}}
\newcommand{\R}{{\mathbb{R}}}
\newcommand{\calC}{\mathcal{C}}
\newcommand{\calD}{\mathcal{D}}
\newcommand{\calQ}{\mathcal{Q}}
\newcommand{\ol}{\overline}
\newcommand{\wti}{\widetilde  }
\newcommand{\loc}{\text{\rm{loc}}}
\newcommand{\la}{\langle}
\newcommand{\ra}{\rangle}
\newcommand{\veps}{\varepsilon}
\newcommand{\re}{\mathrm{Re}}
\newcommand{\im}{\mathrm{Im}}
\DeclareMathOperator{\supp}{supp}
\DeclareMathOperator{\distr}{distr}
\DeclareMathOperator{\dist}{dist}
\newcommand{\ind}{\mathds{1}} 
\newtheorem{theorem}{Theorem}[section]
\newtheorem{lemma}[theorem]{Lemma}
\theoremstyle{definition}
\newtheorem{definition}[theorem]{Definition}
\newtheorem{assumption}[theorem]{Assumption}
\newtheorem{remark}[theorem]{Remark}
\newtheorem{remarks}[theorem]{Remarks}
\numberwithin{theorem}{section}
\numberwithin{equation}{section}
\newcounter{smalllist}
\newenvironment{SL}{\begin{list}{{\rm\alph{smalllist})}}{%
\setlength{\topsep}{0mm}\setlength{\parsep}{0mm}\setlength{\itemsep}{0mm}%
\setlength{\labelwidth}{2em}\setlength{\leftmargin}{2em}\usecounter{smalllist}%
}}{\end{list}}
\thanks{\copyright 2022 by the authors. Faithful reproduction of this article,
       in its entirety, by any means is permitted for non-commercial purposes}
\keywords{Resonances, virtual levels, threshold eigenvalues}
\subjclass[2010]{81Q05 (primary);  35Q40, 81V45 (secondary)}
\begin{document}

\title[Quantum Systems at The Brink]{Quantum Systems at the Brink: Existence of Bound States, Critical Potentials and Dimensionality}

\author{Dirk Hundertmark}
 
\address{Department of Mathematics, Institute for Analysis, Karlsruhe Institute of Technology, 76128 Karlsruhe, Germany, and Department of Mathematics, Altgeld Hall, University of Illinois at Urbana-Champaign, 1409 W. Green Street, Urbana, IL 61801, USA
 }
 \email{dirk.hundertmark@kit.edu}
    
\author{Michal Jex}
 
\address{Department of Physics, Faculty of Nuclear Sciences and Physical Engineering, Czech Technical University in Prague, B\v rehov\'a 7, 11519 Prague, Czech Republic, and CEREMADE, Dauphine University, Place du Maréchal de Lattre de Tassigny, 75775 Paris Cedex 16, France
}
  \email{michal.jex@fjfi.cvut.cz}  
\author{Markus Lange}
 
\address{ SISSA, Mathematics Area, Via Bonomea 265, 34136 Trieste, Italy
}
 \email{mlange@sissa.it}

\begin{abstract}
One of the crucial properties of a quantum system is the existence of 
bound states. While the existence of eigenvalues below zero, i.e., 
below the essential spectrum, is well understood, the situation of 
zero energy bound states at the edge of the essential spectrum is far 
less understood. 
We present necessary and sufficient conditions for Schr\"odinger operators 
to have a zero energy bound state. Our sharp criteria show that the 
existence and non-existence of zero energy ground states 
depends strongly on the dimension and the asymptotic behavior of the 
potential. There is a spectral phase transition with dimension 
four being critical. 
\end{abstract}

\maketitle
{\hypersetup{linkcolor=black}\tableofcontents}

\section{Introduction}

The existence of bound states plays a crucial role for the properties of quantum systems.  Of special importance is the ground state, i.e., the eigenfunction corresponding to the lowest eigenvalue of the Hamiltonian describing the system. In this paper we consider a Schr\"odinger operator of the form 
\begin{equation}\label{Hamiltonian}
H=-\Delta+V
\end{equation}
on $L^2(\mathbb R^d)$ where $V\in L^1_{loc}(\R^d)$ is a real--valued potential such that the operator $H$ is a 
well--defined self--adjoint realization of the formal differential operator $-\Delta+V$ which is bounded from 
below. Moreover, we need that eigenfunctions of $H$ are continuous. The precise conditions are given in Assumption \ref{assumption} below. 

We are particularly interested in the special case when the ground state energy of the Schr\"odinger operator $H$ is at the threshold of the essential spectrum. By shifting the potential by a constant, one can assume that the essential spectrum of 
$H$ starts at zero. One also often assumes that  the potential 
$V$ decays to zero at infinity such that the essential spectrum 
$\sigma_{\text{ess}}(H)=[0,\infty)$,  see 
Remark \ref{rem:essential spectrum}. 
Under these conditions the zero--energy level is 
at the edge of two regions with very distinct behavior: the point and the continuous spectrum. 
It is well-known that positive eigenvalues embedded in the continuum appear only due to a special combination of oscillations 
and slow decay of the potential. This goes back to \cite{VonNeuWig-paper-1929}, see also \cite{DenKis07,FraSim17,IonJer03,Sim69} and the references therein. Also, with the help of the min-max theorem, the existence and non-existence of eigenvalues below zero is well-understood, see, e.g., \cite{ReeSim4}.

Whether zero is actually a threshold eigenvalue, i.e., an eigenvalue at \emph{the edge of the continuum} is a very difficult problem, in general. Early results on existence or non-existence of zero-energy eigenvalues go back to
 \cite{Agm70, JeKa79, Ken89, Kno78-1, Kno78-2, Lie81, New77,Ram87, Ram88, Sim81}. 
In \cite{JeKa79} the authors studied the behavior of resonances and eigenstates at the zero--energy threshold in $d=3$. Furthermore, 
 based on the remark of a referee they note that resonances  cannot exist in dimensions $d>4$ based on properties of Riesz potential. However their approach is not applicable for $d=4$. 
 For slowly decaying \emph{negative} potentials which, amongst other conditions, obey $V(x)\sim -c|x|^{-\gamma}$ for some $c>0$ and  $0<\gamma<2$ in the limit $|x|\to\infty$, the non-existence of zero energy eigenstates was shown in \cite{DerSki09,FouSki04}, while it was noted in \cite{bol85} that a long range Coulomb part can create zero energy eigenstates, see also \cite{Nak94, Yaf82}. An analysis of eigenstates 
 and resonances at the threshold for the case of certain nonlocal operators appeared in \cite{KaLo20}. 

In \cite{BenYar90} it has been shown that, for Schr\"odinger operators on $L^2(\R^3)$ with spherical 
 symmetric potentials $V\in L^p(\R^3)$ with $p>3/2$ whose positive 
 part satisfies $V_+(x)\le 3/(4|x|^{2})$ 
 for $|x|$ large enough,  zero is not an eigenvalue corresponding 
 to a positive square integrable ground state eigenfunction. 
 This extends to potentials with  
 $V_+\leq |x|^{-2}\left(3/4+\ln^{-1}(|x|)\right)$ near infinity in $\mathbb R^3$,  
 the constants $3/4$ and $1$ are optimal. 
 For similar results see \cite{GriGar07}, which reproved a slightly 
 weaker non-existence result compared to  
 \cite{BenYar90} and  additionally showed that if 
 $V(x)\ge C|x|^{-2}$ for some constant $C>3/4$ and $|x|$ large then zero is an eigenvalue for critical potential, see Definition~\ref{def:VirtualLevel}. 
 Thus a repulsive part can stabilize zero energy bound states of quantum systems.  
 
 Strictly speaking, the paper \cite{BenYar90} deals with continuous potentials on $\R^3$ but they note that the 
 condition  $V\in L^p(\R^3)$ with $p>3/2$ is enough to guarantee continuity of ground states, due to a Harnack inequality for positive eigenfunctions.   
We also note that compactly supported zero--energy eigenfunctions were 
constructed in \cite{KenNad00,KoTa02} for potential $V\in L^p(\R^d)$ 
with $p<d/2$ and compact support. For these potentials, a 
Harnack inequality for the ground state cannot hold. 

In this paper we significantly extend all previous results, in particular the ones of \cite{BenYar90} and 
\cite{GriGar07}, by proving a  \emph{family of sharp criteria} for the existence and non--existence of zero energy ground states at \emph{the edge of the essential spectrum} 
for Schr\"odinger operators in \emph{arbitrary} dimensions. 
In particular, our results apply to 
Schr\"odinger operators with a so--called virtual level at zero energy and they explain when such 
a virtual level is a true ground state or when it is a resonance. 

Our results clearly explain why increasing the dimension makes it easier for a virtual level to be a 
true ground state. In particular, our work explains why dimension $d=4$ is \emph{critical}. 
Dimension four shares some similarity 
with the case of lower dimensions but  \emph{higher order corrections} from our criteria   
are needed to settle this case.  

\smallskip

Our main assumption on the potential $V$ are given by 
\begin{assumption}\label{assumption}
  The potential $V$ is in the local Kato--class $K_{d,loc}(\R^d)$ 
  and the negative part $V_-= \sup(-V,0)$ is relatively 
  form small w.r.t. $-\Delta +V_+$, i.e., there exist 
  $0\le a<1$ and $b\ge 0$ such that 
  \begin{equation}\label{eq:form small}
  	\la \psi, V_-\psi \ra = \|V_-^{1/2}\psi\|^2 
  		\le a(\|\nabla \psi\|^2 +\|\sqrt{V_+}\psi\|^2) + b\|\psi\|^2
  \end{equation}
  for all $\psi\in H^1(\R^d)\cap\calD(\sqrt{V_+})$. Here 
  $\calD(\sqrt{V_+})$ is the domain of the multiplication operator $\sqrt{V_+}$ on $L^2(\R^d)$, 
  also called the form domain of $V_+$ and often written as $\calQ(V_+)$. 
\end{assumption}

Note that what we call \textit{relatively form small} is usually called \textit{relatively form-bounded with relative bound $a < 1$}. 
We will call a potential $W$ infinitesimally form bounded (w.r.t. $-\Delta+V_+$) if for all $a>0$ there exist $b\ge 0$ such that the positive and negative parts 
$W_\pm$ satisfy \eqref{eq:form small} (with $V_-$ replaced by $W_\pm$).    

\begin{remark}
  The (local) Kato--class $K_{d,loc}(\R^d)\subset L^1_{loc}(\R^d)$, whose definition is recalled below, see \eqref{eq:Kato class}, contains most, if not all physically relevant potentials. 
  This assumption is only made to guarantee that all 
  weak local eigenfunctions of $H$ are continuous, 
  see \cite{AizSim82, Simader90, Sim82}. 
  
  One could relax the assumption that $V\in K^d_{loc}(\R^d)$ to $V\in L^1_{loc}(\R^d)$, if some other 
  condition guaranteed that weak local 
  eigenfunctions of $H$ are continuous. In fact, it would be sufficient to have that  
  eigenfunctions are locally bounded and that a ground state of $H$ is bounded 
  away from from zero on compact sets. 
  As will become clear from the proofs, we can allow for severe local singularities. 
  E.g., it is enough to assume that $V$ is in the local Kato--class outside some compact set $K\subset\R^d$. 
\end{remark}

If $V_\pm\in L^1_{loc}$ and \eqref{eq:form small} holds, 
the KLMN theorem shows that there exists a unique 
self-adjoint operator $H$, informally given by the differential operator 
$-\Delta+V$, such that its 
quadratic form, which with a slight abuse of notation we write as 
  \begin{equation}\label{eq:quadratic form H intro}
  	\la \psi, H\psi \ra \coloneqq
  		\la \nabla\psi,\nabla\psi\ra 
  			+ \la \sqrt{V_+}\psi, \sqrt{V_+}\psi\ra 
			- \la \sqrt{V_-}\psi, \sqrt{V_-}\psi\ra \,
  \end{equation}
  is well-defined for $\psi\in \calQ(H)\coloneqq H^1(\R^d)\cap\calQ(V_+)$. 
  Moreover, it is closed and bounded from below on the quadratic form domain $\calQ(H)$. See also the discussion at the 
  beginning of the next section.

\smallskip
To formulate our main results we recall the definition of the 
iterated logarithms $\ln_n$ defined, for natural numbers $n\in\N$, by  
$\ln_1(r)\coloneqq\ln(r)$ for $r>0$ and inductively for $r>e_n$ by 
$\ln_{n+1}(r)\coloneqq\ln(\ln_n(r))$. Here $e_0=0$ and $e_{n+1}= e^{e_n}$. 
Our first main result can be summarized as follows
\begin{theorem}[Absence of a zero energy ground state]\label{thm:intro absence} 
  Assume that the potential $V$ satisfies Assumption \ref{assumption} 
  and $\sigma(H) = [0,\infty)$. If for some $m\in\N_0$ and $R>e_m$ 
  \begin{equation}\label{eq:intro absence}
  V(x)\leq  \frac{d(4-d)}{4|x|^2}	+\frac{1}{|x|^2}\sum_{j=1}^m\prod_{k=1}^j\ln_k^{-1}(|x|)
  \end{equation}
  for all $|x|\ge R$, then zero is not an eigenvalue of the Schr\"odinger operator $H$. 
\end{theorem}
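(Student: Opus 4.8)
\section*{Proof proposal}

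The plan is to argue by contradiction, reducing the full $d$-dimensional problem to a one-dimensional ODE comparison for the spherical average of a putative ground state. Suppose $0$ is an eigenvalue of $H$. Since $0=\inf\sigma(H)$, the associated eigenfunction is a ground state, and by the Perron--Frobenius property of $e^{-tH}$ together with the Harnack inequality for positive solutions (available because $V\in K_{d,loc}$, which is exactly the role of Assumption~\ref{assumption}), it may be taken continuous and strictly positive; call it $\psi>0$, so that $-\Delta\psi+V\psi=0$ weakly and $\psi\in L^2(\R^d)$. Write $U(r)$ for the radial majorant on the right-hand side of \eqref{eq:intro absence}, so $V(x)\le U(|x|)$ for $|x|\ge R$. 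The first step is to pass to the spherical average $\bar\psi(r)\coloneqq |\mathbb{S}^{d-1}|^{-1}\int_{\mathbb{S}^{d-1}}\psi(r\omega)\,d\omega$. Since $\Delta$ commutes with spherical averaging, $U$ is radial, and $\psi>0$, averaging $-\Delta\psi=-V\psi$ over spheres and using $V\le U$ yields, in the distributional sense on $(R,\infty)$,
\[
-\bar\psi''-\tfrac{d-1}{r}\bar\psi'+U(r)\,\bar\psi\ge 0,\qquad \bar\psi>0,
\]
while Jensen's inequality gives $\int_R^\infty \bar\psi(r)^2 r^{d-1}\,dr<\infty$. Thus $\bar\psi$ is a positive, square-integrable radial \emph{supersolution} of $-\Delta+U$ in the exterior region; note that no symmetry of $V$ itself is used, only the pointwise radial bound, which is what makes the result sharper than spherically symmetric versions.

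The key algebraic step is the substitution $u(r)\coloneqq r^{(d-1)/2}\bar\psi(r)$, which removes the first-order term and the $L^2$-weight simultaneously. A direct computation gives
\[
r^{(d-1)/2}\Big(-\bar\psi''-\tfrac{d-1}{r}\bar\psi'+U\bar\psi\Big)=-u''+\Big(\tfrac34+c(r)\Big)\frac{u}{r^2},\qquad c(r)=\sum_{j=1}^m\prod_{k=1}^j\ln_k^{-1}(r),
\]
because the angular constant $\tfrac{(d-1)(d-3)}{4}$ and the leading potential constant $\tfrac{d(4-d)}{4}$ add up to exactly $\tfrac34$ in \emph{every} dimension. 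This identity is precisely why the threshold constant in \eqref{eq:intro absence} must be $\tfrac{d(4-d)}{4}$, and why the borderline indicial behaviour $u\sim r^{-1/2}$ is dimension independent. Moreover $\int_R^\infty u^2\,dr=\int_R^\infty \bar\psi^2 r^{d-1}\,dr<\infty$, so $u>0$ is a finite-energy one-dimensional supersolution of $-u''+q(r)u\ge 0$ with $q(r)=(\tfrac34+c(r))r^{-2}>0$.

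Against this I would place the explicit comparison function $u_*(r)\coloneqq r^{-1/2}\prod_{k=1}^m \ln_k^{-1/2}(r)$. The iterated-logarithm factors in $U$ are calibrated so that $u_*$ is a positive \emph{subsolution}, $-u_*''+q\,u_*\le 0$ for $r$ large, while it satisfies the two integrability facts $\int^\infty u_*^2\,dr=\infty$ (it behaves like $\int dr/(r\ln r\cdots\ln_m r)\sim \ln_{m+1}$) and $\int^\infty u_*^{-2}\,dr=\infty$. The contradiction then comes from a Wronskian comparison. Set $W\coloneqq u_* u'-u_*' u$; the two differential inequalities give $W'=u_* u''-u_*'' u\le q\,u_* u-q\,u_* u=0$, so $W$ is non-increasing, and the ratio $g\coloneqq u/u_*>0$ obeys $g'=W/u_*^2$. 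If $W\ge 0$ throughout, then $g$ is non-decreasing, hence $u\ge g(r_0)u_*$ for some $r_0$ and $\int u^2=\infty$, contradicting $u\in L^2$. Otherwise $W(r_1)<0$ for some $r_1$, and since $W$ is non-increasing, $g(r)\le g(r_1)+W(r_1)\int_{r_1}^r u_*^{-2}\to-\infty$ using $\int^\infty u_*^{-2}=\infty$, contradicting $g>0$. Either way $0$ cannot be an eigenvalue.

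I expect the main obstacle to be the construction of $u_*$ and the verification that it is a genuine subsolution for \emph{all} $m$: expanding $-u_*''+q\,u_*$ produces a leading cancellation at the level $r^{-1/2}$, and one must control the sign of the residual coming from the cross-terms among the iterated logarithms $\ln_1,\dots,\ln_m$. This is the point where the ``higher order corrections'' advertised in the introduction are indispensable, and it is most delicate when $d=4$, since there the leading barrier $\tfrac{d(4-d)}{4}r^{-2}$ vanishes and the entire subsolution mechanism rests on the logarithmic terms alone. A secondary technical point is to run the Wronskian/ratio argument in integrated form, since $\bar\psi$ is only known a priori to be a weak supersolution; this should follow from the $K_{d,loc}$ regularity of $\psi$ with no further input.
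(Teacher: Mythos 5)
Your proposal is correct and takes a genuinely different route from the paper. The paper works directly in $\R^d$: Lemma~\ref{lem:Wm} shows that $\psi_{\ell,m}(x)=|x|^{-d/2}\prod_{k=1}^m\ln_k^{-1/2}(|x|)$ solves $(-\Delta+W_m)\psi_{\ell,m}=0$ near infinity with $W_m$ at least the right-hand side of \eqref{eq:intro absence}, so $\psi_{\ell,m}$ is a zero-energy subsolution of $H$ there, and then Agmon's comparison principle (Theorem~\ref{thm:comp-agm}) bounds the putative positive ground state $\psi$ from below by a multiple of $\psi_{\ell,m}\notin L^2$. You instead symmetrize: spherical averaging (where strict positivity of $\psi$ is exactly what lets the pointwise bound $V\le U$ survive the averaging), the substitution $u=r^{(d-1)/2}\bar\psi$, and a Wronskian/Sturm dichotomy against $u_*(r)=r^{-1/2}\prod_{k=1}^m\ln_k^{-1/2}(r)$. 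Note that $u_*=r^{(d-1)/2}\psi_{\ell,m}(r)$, so the two proofs use the same comparison function in disguise, and the verification you defer as the ``main obstacle'' — that $u_*$ is a genuine subsolution for all $m$ — is precisely the computation of Lemma~\ref{lem:Wm}: the residual terms there (the square and the double sum in \eqref{eq:Wm}) are manifestly nonnegative, which is what makes the inequality go the right way, so this step does close. What your route buys: it is more elementary and self-contained, since the Wronskian dichotomy (either $u\gtrsim u_*$, contradicting $\int u^2\,dr<\infty$, or $u/u_*\to-\infty$ via $\int^\infty u_*^{-2}\,dr=\infty$, contradicting positivity) replaces Agmon's theorem together with its hypothesis \eqref{eq:liminf cond} and the annulus comparison; it also makes the dimension-independence transparent through the identity $\tfrac{(d-1)(d-3)}{4}+\tfrac{d(4-d)}{4}=\tfrac34$. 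What the paper's route buys: no symmetrization is needed, so the identical machinery is reused with the sub- and supersolution roles swapped in the existence proof of Theorem~\ref{thm:intro existence}, where your averaging step would be useless — there one needs pointwise \emph{upper} bounds on the approximating eigenfunctions $\psi_n$, and controlling the spherical average of a positive function controls its $L^2$ mass from below, not above. Two minor corrections: after your reduction the barrier is $(\tfrac34+c(r))r^{-2}$ in \emph{every} dimension, so contrary to your closing remark nothing special happens at $d=4$ in the proof — the criticality of $d=4$ concerns which potentials can satisfy \eqref{eq:intro absence}, not any extra delicacy in the subsolution computation. And the regularity issue you flag in the Wronskian step is genuine but mild: $-u''+qu\ge 0$ with $q$ continuous forces $u''$ to be a signed measure, hence $u'\in BV_{loc}$, and the monotonicity of the Wronskian holds in the measure sense, which is all the dichotomy needs.
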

  As usual the empty product is $1$ and the empty sum equals $0$. 
\begin{remark}
	In particular, if $\inf\sigma_{ess}(H)= 0$ then Theorem \ref{thm:intro absence} shows 
	that zero is not an eigenvalue at the edge of the essential spectrum.  
	Theorem \ref{thm:intro existence} below shows the sharpness of condition \eqref{eq:intro absence} 
	on the potential $V$   
	for the absence of an embedded ground state at the edge of the essential spectrum.
\end{remark}
Our second main result shows that critical potentials create    
zero energy ground states if they are not too small 
at infinity.   We call a potential $W\ge 0$ nontrivial, 
if it is strictly positive on a set of positive 
Lebesgue measure. 

\begin{definition}[Critical potential]\label{def:VirtualLevel}
The potential $V$ is \emph{critical} if the Schr\"odinger operator $H$ 
has spectrum $\sigma(H)=\sigma_{ess}(H)=[0,\infty)$ and for all nontrivial 
compactly supported potentials 
$W\ge 0$ which are infinitesimally form bounded with respect to $-\Delta+V_+$   
the family of operators $H_\lambda= H-\lambda W$  
has essential spectrum  $\sigma_{ess}(H_\lambda)= [0,\infty)$ and 
a negative energy bound state for all $\lambda>0$. 
\end{definition} 
\begin{remark}    
  The potential $V$ is called \emph{subcritical}, if 
  the Schr\"odinger operator $H$ has spectrum 
  $\sigma(H)=\sigma_{ess}(H)=[0,\infty)$ and there exist 
  a nontrivial potential $W\ge 0$, which is  infinitesimally 
  form bounded  with respect to 
  $-\Delta+V_+$,    such that    
  $H -\lambda W\ge 0$ for some  $\lambda>0$. 
\end{remark}

\begin{theorem}[Existence of a zero energy ground state for critical potentials]\label{thm:intro existence} 
  Assume that the potential $V$ satisfies Assumption \ref{assumption} and that  it is critical.
  If for some $m\in\N_0$, $\epsilon>0$, and $R>e_m$ 
  \begin{equation}\label{eq:intro existence}
    V(x) \geq  
			\frac{d(4-d)}{4|x|^2} 
				+ \frac{1}{|x|^2}\sum_{j=1}^{m}\prod_{k=1}^j\ln_k^{-1}(|x|) 
				+ \frac{\epsilon}{|x|^2}\prod_{k=1}^m\ln_k^{-1}(|x|)
  \end{equation}
  for all $|x|\ge R$, then zero is an eigenvalue of $H$. 
\end{theorem}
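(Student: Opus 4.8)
The plan is to produce an honest, strictly positive zero--energy solution $\psi$ of $H\psi=0$ and to show that the lower bound \eqref{eq:intro existence} forces $\psi$ to decay fast enough to lie in $L^2$; its square integrability then makes $0$ a genuine eigenvalue. First I would exploit criticality: fixing one nontrivial, compactly supported $0\le W$ that is infinitesimally form bounded w.r.t.\ $-\Delta+V_+$, Definition~\ref{def:VirtualLevel} gives for every $\lambda>0$ a lowest eigenvalue $E_\lambda=\inf\sigma(H_\lambda)<0=\inf\sigma_{\mathrm{ess}}(H_\lambda)$ of $H_\lambda=H-\lambda W$, hence a positive (Perron--Frobenius) ground state $\psi_\lambda\in L^2$ with $H\psi_\lambda=(\lambda W+E_\lambda)\psi_\lambda$. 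Normalizing $\psi_\lambda(x_0)=1$ at a fixed point and using the Harnack inequality together with the elliptic (Kato--class) regularity guaranteed by Assumption~\ref{assumption}, a subsequence converges locally uniformly as $\lambda\downarrow0$ (so that $E_\lambda\uparrow 0$) to a continuous $\psi>0$ with $\psi(x_0)=1$ solving $H\psi=0$ weakly. Since each $E_\lambda<0$ and $W$ has compact support, each $\psi_\lambda$ decays exponentially (Combes--Thomas/Agmon), a fact I will use only qualitatively.

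The heart of the argument is an explicit radial supersolution matched to the iterated logarithms. Writing $u_0(r)=r^{-d/2}$ one checks $\bigl(-\Delta+\tfrac{d(4-d)}{4}r^{-2}\bigr)u_0=0$, and for $g=u_0\phi$ with a slowly varying $\phi>0$ the zeroth order term drops out, leaving
\[
 (-\Delta+V_*)g=\frac{g}{r^2}\bigl(P-Q\bigr),\qquad V_*(r)=\frac{d(4-d)}{4r^2}+\frac1{r^2}P(r),
\]
where $P(r)=\sum_{j=1}^m\prod_{k=1}^j\ln_k^{-1}(r)+\epsilon\prod_{k=1}^m\ln_k^{-1}(r)$ is read off from \eqref{eq:intro existence} and $Q$ collects the contributions of $\phi'$ and $\phi''$. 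Choosing $\phi(r)=\prod_{k=1}^m\ln_k(r)^{-1/2}\,\ln_{m+1}(r)^{-1}$ makes the leading part of $Q$ equal to $\sum_{k=1}^m\prod_{i=1}^k\ln_i^{-1}(r)$, so that $P-Q=\epsilon\prod_{k=1}^m\ln_k^{-1}(r)$ minus terms of strictly lower order, whence $P-Q\ge\tfrac{\epsilon}{2}\prod_{k=1}^m\ln_k^{-1}(r)>0$ for $r$ large. Hence $g$ is a supersolution, $(-\Delta+V)g\ge(-\Delta+V_*)g\ge0$ on $\{|x|\ge R'\}$ for some $R'\ge\max(R,e_{m+1})$, using $V\ge V_*$ and $g>0$ there, while $\int_{|x|\ge R'}g^2\,dx=c\int_{R'}^\infty r^{-1}\prod_{k=1}^m\ln_k^{-1}(r)\,\ln_{m+1}^{-2}(r)\,dr<\infty$ by the Bertrand criterion; it is precisely the $\epsilon$--term, and nothing less, that pushes $g$ into $L^2$.

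Finally I would transfer the decay of $g$ to $\psi$ by a ground--state substitution that removes the sign--indefinite potential. For $|x|\ge R'$ outside $\supp W$ the eigenfunction satisfies $(-\Delta+V)\psi_\lambda=E_\lambda\psi_\lambda\le0$; setting $v_\lambda=\psi_\lambda/g$ a direct computation gives $-\mathrm{div}(g^2\nabla v_\lambda)=\psi_\lambda(E_\lambda g-h)\le0$ with $h=(-\Delta+V)g\ge0$, so $v_\lambda$ is a subsolution of the divergence--form operator $-\mathrm{div}(g^2\nabla\,\cdot\,)$, which carries no zeroth order term. Because $\psi_\lambda$ decays exponentially while $g$ decays only polynomially--logarithmically, $v_\lambda\to0$ at infinity, and the maximum principle for $-\mathrm{div}(g^2\nabla\,\cdot\,)$ yields $\sup_{|x|\ge R'}v_\lambda=\max_{|x|=R'}v_\lambda$. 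A uniform Harnack bound (the coefficients $V-\lambda W-E_\lambda$ stay in a fixed Kato ball for small $\lambda$) makes the latter bounded uniformly in $\lambda$, whence $\psi_\lambda\le C g$ on $\{|x|\ge R'\}$ with $C$ independent of $\lambda$; letting $\lambda\downarrow0$ gives $0<\psi\le C g$ there. Thus $\psi\in L^2(\R^d)$, and elliptic estimates together with a standard lower--semicontinuity argument place $\psi\in\calQ(H)$ with $\langle\psi,H\psi\rangle=0$, so $0$ is an eigenvalue. The two points I expect to be delicate are the asymptotic bookkeeping that makes $P-Q$ positive exactly at the critical threshold, and obtaining $\psi_\lambda\le C g$ uniformly in $\lambda$; the substitution $v_\lambda=\psi_\lambda/g$, combined with the exponential decay of the approximants, is what resolves the latter.
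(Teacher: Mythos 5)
Your proposal is correct in substance and follows the same overall strategy as the paper --- use criticality to produce ground states $\psi_\lambda$ of $H_\lambda=H-\lambda W$ with $E_\lambda<0$, build an explicit iterated-logarithm supersolution that is barely square integrable, prove a comparison bound $\psi_\lambda\le Cg$ that is \emph{uniform} in $\lambda$, and pass to the limit --- but you implement the two key steps by genuinely different means. For the comparison, the paper invokes Agmon's comparison principle (Theorem \ref{thm:comp-agm}) with supersolution $\psi_{u,m,\epsilon}=\psi_{\ell,m}\ln_m^{-\epsilon/2}$ and subsolution $|\psi_n|$ (Lemma \ref{lem:modulus is subsolution}), the matching constant being uniform by the Kato-class subsolution estimate of Lemma \ref{lem:local uniform boundedness of eigenfunctions of H-delta}; you instead re-derive the comparison by hand, via the ground-state substitution $v_\lambda=\psi_\lambda/g$ and the weak maximum principle for $-\mathrm{div}(g^2\nabla\,\cdot\,)$, which costs an extra input --- the qualitative exponential decay of each $\psi_\lambda$ (legitimate, since $E_\lambda<0$ lies below $\sigma_{ess}(H_\lambda)$) so that $v_\lambda\to0$ at infinity --- whereas Agmon's theorem needs only the liminf condition \eqref{eq:liminf cond}, automatic for $L^2$ functions; you also replace the paper's subsolution bound by a uniform Harnack chain (for both devices you should fix a \emph{nice} $W$, say a bounded indicator, which is permitted since criticality quantifies over all admissible $W$). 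Your supersolution with the factor $\ln_{m+1}^{-1}$ works just as well as the paper's $\ln_m^{-\epsilon/2}$: your factor contributes only strictly lower-order terms to $\Delta g/g$, so no adjustment of $\epsilon$ is needed, while the paper's factor produces a term of exactly the critical order $\epsilon|x|^{-2}\prod_{k\le m}\ln_k^{-1}$ (cf.\ Lemma \ref{lem:Ym}), which is why the paper first replaces $\epsilon$ by $2\epsilon$; your bookkeeping for $P-Q$ matches the computation in Lemma \ref{lem:Wm}. For the limit, the paper keeps $\|\psi_n\|=1$ and uses tightness in position (from the uniform comparison bound) and momentum (from uniform form bounds), strong $L^2$ convergence, and Lemma \ref{lem:existence bound state} to identify the limit as a weak eigenfunction with eigenvalue $\lim E_n=0$; you normalize pointwise, $\psi_\lambda(x_0)=1$, which makes nontriviality of the limit automatic, but then the membership $\psi\in\calQ(H)$ and $\la\psi,H\psi\ra=0$ is only waved at with ``lower semicontinuity''. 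This is the thinnest point of your sketch, though it is fillable with standard arguments: your uniform bounds give $\sup_\lambda\|\psi_\lambda\|<\infty$, the eigenvalue equation plus form smallness then give $\sup_\lambda\big(\|\nabla\psi_\lambda\|+\|\sqrt{V_+}\psi_\lambda\|\big)<\infty$, so a subsequence converges weakly in the form norm, weak lower semicontinuity yields $0\le\la\psi,H\psi\ra\le\liminf_\lambda\la\psi_\lambda,H\psi_\lambda\ra=0$, and $H\ge0$ forces $\|H^{1/2}\psi\|=0$, i.e.\ $H\psi=0$. As for what each route buys: yours is more classical-PDE and self-contained (no appeal to Agmon's theorem), but it leans on Perron--Frobenius positivity and Harnack inequalities, which the paper deliberately avoids (Remark \ref{rem:existence under symmetry constraints}) precisely so that the existence proof survives symmetry constraints, such as fermionic statistics, under which ground states need not be positive.
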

\begin{remark}
Clearly, the right hand sides of  \eqref{eq:intro absence} and \eqref{eq:intro existence} are, for each fixed $n\in\N$ complementary. Thus our criteria for existence and non-existence of zero energy ground states at the edge of the essential spectrum are sharp! 
Considering the simplest case $m=0$ we have  
\begin{equation}\label{eq:absence-m zero}
	V(x)\le  \frac{d(4-d)}{4|x|^2}
\end{equation}
for the absence and 
\begin{equation}\label{eq:existence-m zero}
	V(x)\ge  \frac{d(4-d)+\epsilon}{4|x|^2}
\end{equation}
for the existence with $\epsilon>0$ and all $|x|$ large enough. 
For $d=3$ this recovers the results proved in 
\cite{GriGar07} for the special case of three dimensions. 

Using the higher order corrections from Eqs.~\eqref{eq:intro absence} and \eqref{eq:intro existence} we obtain a sharp distinction between existence and non-existence in the case of a critical potential. 
For example, the cases $m=1,2$ show that if 
\begin{align}
	V(x)&\le  \frac{d(4-d)}{4|x|^2}	+\frac{1}{|x|^2\ln |x|} \quad \text{ or } \label{eq:intro absence m=1}\\
	V(x)&\le  \frac{d(4-d)}{4|x|^2}	+\frac{1}{|x|^2\ln |x|} 
	+\frac{1}{|x|^2\ln |x|\ln_2 |x|} \label{eq:intro absence m=2}
\end{align}
for all large enough $|x|$, then zero will not be a ground state eigenvalue. Conversely, for critical potentials the bound   
\begin{align}
	V(x)&\ge  \frac{d(4-d)}{4|x|^2}+\frac{1+\epsilon}{|x|^2\ln|x|} \quad \text{ or } 
	\label{eq:intro existence n=1} \\
	V(x)&\ge  \frac{d(4-d)}{4|x|^2}+\frac{1}{|x|^2\ln |x|} +\frac{1+\epsilon}{|x|^2\ln |x|\ln_2|x|} 
	 \label{eq:intro existence n=2} 
\end{align}
for all large enough $|x|$ and some $\epsilon>0$ implies that zero 
is a ground state eigenvalue. 
Using $d=3$ in \eqref{eq:intro absence m=1} 
recovers the non-existence result of \cite{BenYar90}. 
The $d=3$ case in \eqref{eq:intro existence n=1} provides a complementary existence result which was missing in \cite{BenYar90}. 

More importantly, our results provide, to arbitrary order, 
a whole family of complementary sharp criteria which 
are not restricted to three dimensions and our proofs 
are considerably simpler than the approaches based 
on delicate estimates for Green's functions.

  One often says that a Schr\"odinger operator with a critical potential has a \emph{virtual level}  (at zero energy), see, e.g. \cite{BarBit19, BarBit20, BarBitVug19-a}.  
	Theorem \ref{thm:intro absence} shows that such a virtual level is \emph{not a bound state} of 
	$H$ if $V$ obeys the bound \eqref{eq:intro absence}, that is, it is a so--called zero energy resonance. 
	Conversely, Theorem \ref{thm:intro existence} shows that a virtual level is an eigenvalue \emph{at the edge of the essential spectrum} when the potential $V$ satisfies the 
   \emph{complementary bound} \eqref{eq:intro existence}.  
\end{remark}
\begin{remark}
  In Appendix \ref{sec:appendix} we construct a family 
  of potentials $V_{\alpha,d}$ on $\R^d$ for $\alpha\in\R$ and $d\in\N$ such that the  
  Schr\"odinger operator $H_{\alpha,d}= -\Delta +V_{\alpha,d}$ has spectrum 
  $\sigma(H_{\alpha,d})=[0,\infty)$. Moreover,  
  $V_{\alpha,d}$ is subcritical for 
  $\alpha<0$ and critical for $\alpha\ge 0$. 
  The Schr\"odinger operator $H_{\alpha,d}$ has  
  a zero energy resonance for 
  $0\le \alpha\le 1$ and a zero energy bound state 
  for $\alpha>1$ in any dimension.     
\end{remark}

\begin{remark}\label{rem:essential spectrum} 
  The operator $H_\lambda$ is well-defined with quadratic form methods for all $\lambda$, 
  see Remark \ref{rem:construction H-lambda}.    
 In order to guarantee that $\sigma_{ess}(H_\lambda)=[0,\infty)$ 
 in Definition~\ref{def:VirtualLevel}, some   
 decay of the potential $V$ is required. 
A well-known sufficient criteria for this is that $V$ is relatively 
form compact with respect to the kinetic energy  $P^2=-\Delta$, 
see  \cite{Tes14}. 
This also implies that $V$ is infinitesimally form bounded, i.e.,   
relatively form small with relative bound zero, w.r.t.\ $P^2=-\Delta$, 
which excludes Hardy type potentials.  

A \emph{much less restrictive criterium} for 
$\sigma_{ess}(H)=[0,\infty)$  only assumes that $V$ \emph{vanishes asymptotically with respect 
to the kinetic energy}. More precisely, if  
\begin{equation}\label{vanishing}
	|\la \varphi, V\varphi \ra| \le a_n\|\nabla\varphi\|^2 +b_n\|\varphi\|^2
\end{equation}
for all $\varphi\in H^1(\R^d)$ with support $\supp(\varphi)\subset \{|x|\ge R_n\}$ for some sequences  $0\le a_n,b_n\to 0$, and $R_n\to\infty$ as $n\to\infty$, then $\sigma_{ess}(H)=[0,\infty)$,  see \cite{AvrHunHyn,JorWei73}. 

This criterion is clearly in line with the physical heuristic that \emph{only the asymptotic behavior of the potential 
near infinity} determines the essential spectrum and it allows for 
strongly singular potentials which are not infinitesimally form 
bounded. 
It also shows that $\sigma_{ess}(H_\lambda) = \sigma_{ess}(H)=[0,\infty)$ for all $\lambda>0$ 
when $W$ has compact support and is infinitesimally form bounded w.r.t.\ $-\Delta$ and $V$ is form small w.r.t. $-\Delta$ and satisfies \eqref{vanishing}.    
\end{remark} 

\begin{remark}
  The bounds on the potential in Theorems \ref{thm:intro absence} and \ref{thm:intro existence} are similar in spirit to logarithmic corrections to the Hardy inequality. For 
  $\psi\in \calC^\infty_0\big(\R^d\setminus\{|x|<e_m\}\big)$ and 
  one has 
  \begin{equation}
  	\la \nabla\psi,\nabla\psi \ra 
  		\ge 
  			\big\la  \psi,    
  				\Big(\frac{(d-2)^2}{4|x|^2}	+\frac{1}{4|x|^2}\sum_{j=1}^m\prod_{k=1}^j\ln_k^{-1}(|x|)\Big) 
			\psi\big\ra\, , 
  \end{equation}
  see \cite{PerCouMal86}, which also discusses conditions on the 
  potential such that $-\Delta+V$ has infinitely many, respectively 
  finitely many negative eigenvalues. Bounds on the number of 
  negative eigenvalues are given in \cite{Lun09, Lun-thesis10}. 	
  Certain logarithmic refinements of Hardy's inequality  have been used 
  to study the existence of resonances of Schr\"odinger operators and 
  the Efimov effect in low dimensions, see \cite{BarBitVug19-b}.  
\end{remark}

\begin{remark} 
Theorems \ref{thm:intro absence} and \ref{thm:intro existence} show a \emph{spectral phase transition} concerning the existence 
of zero energy ground states for Schr\"odinger operators with critical dimension $d=4$: 

The sign of the leading order term in \eqref{eq:intro absence} and \eqref{eq:intro existence} strongly depends  
on the dimension $d$, being positive if $d\le 3$, zero in dimension $d=4$, and negative if $d\ge 5$. 

Moreover, in dimension $d=4$, the leading order term vanishes and the next leading order term with $m=1$ becomes dominant. 
Thus the four dimensional case is \emph{critical}. 
Nevertheless, since the new leading order term for $d=4$ is 
also positive,  the four dimensional case is  similar to the 
case of lower dimensions. In particular, non--positive potentials 
$V$ cannot support zero energy ground states in dimensions 
$d\le 4$ while in dimension $d\ge 5$ non--positive critical potentials 
have zero energy ground states. 

Hence non--positive critical potentials will always create resonances 
in dimension $d\le 4$, while in dimension $d\ge 5$ they  
have zero energy ground states unless their negative part is 
so long range such that the bound \eqref{eq:intro existence} 
does not hold anymore. 
Nevertheless, in dimension $d\le 4$ a `long-range' positive tail of 
the potential can create zero energy ground states. 
See also the discussion in Section~2 of \cite{HunJexLan21-Helium}.

In particular, assume that the potential $V$ is infinitesimally 
form bounded w.r.t.\ $-\Delta$ and has compact support.  
Then $\sigma_{ess}(-\Delta +\beta V)=[0,\infty)$ for 
all $\beta\ge 0$, see \cite{AvrHunHyn,Tes14}, and a simple application of the min--max principle shows 
that as soon as negative eigenvalues of $-\Delta+\beta V$ exist, 
they are decreasing in $\beta>0$, see \cite[Proposition after Theorem XIII.2, page 79]{ReeSim4}. 
Let $\beta_0> 0$ be the value of the coupling constant 
when the ground energy of $-\Delta+\beta V$ hits zero.  
Theorem \ref{thm:intro absence} shows that  $-\Delta+\beta_0 V$ has 
a zero energy resonance when $d\le  4$ and Theorem \ref{thm:intro existence} shows that it has a zero energy ground state in 
dimension $d\ge 5$.  
The asymptotic of the eigenvalues of the perturbed operators 
$-\Delta+\beta V$ in $\beta-\beta_0$ 
was studied in \cite{KlSi-1} for all dimensions. 
\end{remark}
The structure of our paper is as follows: 
In Section~\ref{sec:DefAndResult} we present all the necessary 
technical tools to precisely formulate our main results. 
Theorem \ref{thm:intro absence} is proven in Section 
\ref{sec:proof non-existence}. 	The proof is by contradiction, 
assuming that a zero energy ground state exists and then deriving 
a lower bound which shows that it cannot be square integrable. 
To construct such a lower bound one only needs to know that 
a ground state, if it exists, can be chosen to be positive and that 
it is locally bounded away from zero. 
It is well--known that ground states of a Schr\"odinger operator 
$H$ in $L^2(\R^d)$ are unique, up to global phase, and can be 
chosen to be strictly positive as soon as they exist, 
see \cite{Far72, Goe77} or \cite[Section XIII.12]{ReeSim4}. 
Thus, if one knows that the 
ground state is bounded away from zero,  
 one can relax the condition on $V$ to $V\in L^1_{loc}(\R^d)$ 
and $V_-$ satisfies \eqref{eq:form small}, The assumption that $V$ is in the local Kato--class is only needed 
to guarantee that eigenfunctions of $H$ are continuous, 
see \cite{AizSim82, Sim82} and also \cite{Simader90}. 
This continuity then  guarantees that the positive ground state is 
bounded away from zero on compact sets.  	    

The proof of Theorem \ref{thm:intro existence} is given in 
Section \ref{sec:existence}. The main tool is an upper bound for 
the spacial decay of ground states of the approximating 
Schr\"odinger operators $H_\lambda$, 
see Definition \ref{def:VirtualLevel},  which is \emph{uniform} 
in $\lambda>0$. 

Since it will be necessary to have a positive ground state for 
the non-existence proof,  we cannot prove the absence of ground state under symmetry constraints which destroy the 
positivity of ground states, such as fermionic particle statistics. 
However, the existence proof still works under symmetry restrictions, 
see Remark \ref{rem:existence under symmetry constraints}.

In Appendix \ref{sec:appendix} we construct an 
explicit example of a family of potentials which 
exhibits all possible different scenarios. 
\smallskip

Lastly, recall that the Kato--class $K_d$ is given by all real--valued 
potentials $V$ such that in dimension $d\ge 2$
\begin{equation}\label{eq:Kato class} 
  \lim_{\alpha\downarrow0}\sup_{|x|\in \R^d}\int_{|x-y|\leq\alpha} g_d(x-y)|V(y)|dy=0\,,\\
\end{equation}
where 
\begin{equation}\label{eq:g-d}
  g_d(x)
  	\coloneqq 
  		\left\{ 
  			\begin{array}{cll}
  				|x|^{2-d} & \text{if} & d\ge 3 \\
  				|\ln|x||  & \text{if} & d= 2 
  			\end{array}
  		\right. \, .
\end{equation}
The Kato class in one dimension is given by  $K_1\coloneqq L^1_{loc,unif}(\R)$, the space of uniformly locally integrable functions on $\R$. 
We say that the potential $V$ is in the local Kato--\-class $K_{d,loc}$  if 
$V \ind_K \in K_d$ for all compact sets  
$K\subset\R^d$. 
It is clear that $K_d\subset L^1_{loc,unif}(\R^d)$ and   
$K_{d,loc}\subset L^1_{loc}(\R^d)$. Moreover, it is 
well--known that any potential $V\in K_d$ is infinitesimally form small with respect 
to $-\Delta$, see \cite{CycFroKirSim87}. 

Thus if $V=V_+ - V_-$ with $V_\pm\ge 0$,  $V_+\in K_{d,loc}$, 
and $V_-\in K_d$ then all of the claims of Assumption \ref{assumption} hold. 
This class of potentials  is large enough to include most, if not all, 
physically relevant potentials, except maybe for some highly oscillatory potentials.

\section{Definitions and preparations}\label{sec:DefAndResult}
Assume that  $V_\pm\in L^1_{loc}(\R^d)$ and \eqref{eq:form small} holds for $V_-$.  
The KLMN theorem \cite{ReeSim4,Tes14} then shows 
that there exists a unique  self-adjoint operator $H$ 
corresponding to a quadratic form   
  \begin{equation}\label{eq:quadratic form H}
  	\la \psi, H\psi \ra \coloneqq
  		\la \nabla\psi,\nabla\psi\ra 
  			+ \la \sqrt{V_+}\psi, \sqrt{V_+}\psi\ra 
			- \la \sqrt{V_-}\psi, \sqrt{V_-}\psi\ra \, 
  \end{equation}
with the usual slight abuse of notation. Here 
$\psi\in \calQ(H)\coloneqq H^1(\R^d)\cap\calQ(V_+)$, the form domain of $H$, where $H^1(\R^d)$ is the usual $L^2$ based Sobolev space 
  of functions $\psi\in L^2(\R^d)$ whose weak 
  (distributional) gradient $\nabla\psi\in L^2(\R^d)$, and 
  \begin{equation}\label{eq:calQ(V)}
  	\calQ(V_+)\coloneqq\calD(\sqrt{V_+}) 
  		= \big\{ \psi\in L^2(\R^d):\,  \sqrt{V_+}\psi\in L^2(\R^d) \big\}
  \end{equation}
  is the quadratic form domain of the multiplication operator $V_+$. 
  
  Since $\sqrt{V_+}\in L^2_{loc}$ we clearly have $\calC^\infty_0(\R^d)\subset \calQ(H)$. 
  Note that 
   $\calC^\infty_0(\R^d)$ is a form core, i.e., dense in  $H^1(\R^d)\cap \calD(\sqrt{V_+})$ with 
  respect to the norm 
  \begin{equation}\label{eq:quadratic form norm}
 \|\psi\|_1\coloneqq (\|\psi\|_{H^1}^2+ \|\sqrt{V_+}\psi\|^2)^{1/2} \, , 
  \end{equation} 
  see \cite{CycFroKirSim87, LeiSim81}. In addition, Friedrich's extension theorem, see for example \cite[Theorem 2.13]{Tes14},  
  implies that the operator $H$ and its domain $\calD(H)$ are 
  given by 
  \begin{equation}\label{eq:H}
  	\begin{split}
  	D(H) &= \big\{ \psi\in H^1(\R^d)\cap\calQ(V_+):\,  (-\Delta+V)_{\distr}\,\psi\in L^2(\R^d) \big\} \\
	H\psi &= (-\Delta+V)_{\distr}\,\psi
  	\end{split}
  \end{equation}
  where $(-\Delta+V)_{\distr}\psi$ is in the sense of distributions when acting on $\psi\in L^2(\R^d)$.  
  \smallskip
  
\begin{remark}\label{rem:construction H-lambda}
	If  $V, W\in L^1_{loc}$ and $V_-$ is form small and $|W|$ is form bounded with respect to 
	$-\Delta+V_+$, i.e., \eqref{eq:form small} holds for $V_-$ for some $0\le a_1<1$, $b\ge 0$ and it also holds with  
	$V_-$ replaced by $|W|$ for some $a_2,b_2\ge 0$,  then 
	\begin{equation}\label{eq:form small 2}
		\|\sqrt{V_-}\psi\|^2+\lambda\|\sqrt{|W|}\psi\|^2
			\le 
				(a_1+\lambda a_2)\big(\|\nabla\psi\|^2 + \|\sqrt{V_+}\psi\|^2 \big) + (b_1+\lambda b_2)\|\psi\|^2
	\end{equation}
  for all $\psi\in H^1\cap \calQ(V_+)$. 
  So for any $0<\lambda_0<(1-a_1)/a_2$, we can construct the family 
  of Schr\"odinger operators $H_\lambda$ as the 
  unique self-adjoint operator given by the quadratic forms 
  \begin{equation}\label{eq:quadratic form H delta}
	\la \psi,H_\lambda\psi\ra 	
		\coloneqq 
			\la \nabla\psi,\nabla\psi\ra  + \la \psi, V_+\psi\ra -\la \psi,V_-\psi\ra 
				-\lambda \la \psi, W\psi\ra \, .
  \end{equation} 
  with quadratic form domain $\calQ(H_\lambda) = H^1(\R^d)\cap \calQ(V_+)= \calQ(H)$
  for all $0\le \lambda \le \lambda_0$. For $\lambda=0$ one recovers $H$. If $W$ is infinitesimally form 
  bounded w.r.t.\  $-\Delta+V_+$, then $\lambda_0=\infty$. 
\end{remark}  
 \smallskip 
   
  One can relax the conditions on $V$ to hold only on a connected, open  
  set $U\subset\R^d$, which contains infinity. 
  In this case one assumes $V_+\in L^1_{loc}(U)$, and \eqref{eq:form small} holds for all 
  $\psi\in H^1_0(U)\cap\calQ^U(V_+)$, where $H^1_0(U)$ 
  is the usual Sobolev space with Dirichlet boundary 
  conditions on the boundary $\partial U$ and 
  $\calQ^U(V_+)= \{\psi\in L^2(U):\, \sqrt{V_+}\psi\in  L^2(U)\}$. 
  In this case $H$ is the Schr\"odinger operator (with Dirichlet boundary 
  conditions) defined by the quadratic form \eqref{eq:quadratic form H} 
  which is restricted to $\psi\in \calQ^U(H)= H^1_0(U)\cap\calQ^U(V_+)$. 
  Again it is well known that $\calC^\infty_0(U)$ is 
  dense in $\calQ^U(H)$ w.r.t.~ the norm given in \eqref{eq:quadratic form norm}. 
  The same holds for $H_\delta$ and any $\delta>0$ small enough. 

\smallskip
Now assume  that the real-valued 
potential $V\in L^1_{\loc}(\R^d)$, that its 
negative part $V_-$ is form small w.r.t, $-\Delta+V_+$, i.e., \eqref{eq:form small} holds, and let $H$ be the associated Schr\"odinger operator defined by quadratic form methods as above.   
For an open set $U\subset\R^d$ we consider weak (local) 
eigenfunctions of $H$ at energy $E$, i.e., (weak local) solutions 
of the Schr\"odinger equation
\begin{equation}\label{eq:eigenfunction}
	H \psi = E\psi\quad\text{in } U \, .
\end{equation}
We are mainly interested in the case that $E=0$.  

With a slight abuse of notation, we denote by $\la\varphi,H\psi\ra$ 
the sesquilinear form given by 
\begin{equation}\label{quadratic form}
	\la\varphi,H\psi\ra 
		\coloneqq \la \nabla\varphi, \nabla\psi\ra +\la \varphi, V\psi \ra 
		= \int(\overline{\nabla\varphi}\cdot\nabla\psi +\overline{\varphi} V \psi)\, dx
\end{equation}
whenever the right hand sides makes sense. This is the case if $\varphi,\psi\in \calQ(H)= H^1(\R^d)\cap\calQ(V_+)$ but also if, 
$\varphi\in \calQ^U_c(H)$ and $\psi\in \calQ^U_{loc}(H)$, where for some open set $U\subset\R^d$ 
the local quadratic form domain   
\begin{equation}
	\calQ^U_{loc}(H)
		\coloneqq \big\{ \psi\in L^2_{\loc}(U):\, \chi\psi\in \calQ(H) \text{ for all } \chi\in \calC^\infty_0(U)\big\}\, 
\end{equation}
is the vector space of functions which are locally (in $U$) 
in the quadratic form domain of $H$. Moreover,
\begin{equation}
	\calQ^U_c(H)
		\coloneqq \big\{ \psi\in \calQ(H):\, 
			\supp(\psi)\subset U \text{ is compact  }\big\}\, 
\end{equation}
is the set of functions in $\calQ(H)$ with compact support inside $U$.  
If $\varphi\in \calQ^U_c(H)$ and $\psi\in \calQ^U_{loc}(H)$, then the integral on the 
right--hand--side of \eqref{quadratic form} can be restricted to the set $U$.  
Clearly, $\calQ^U_{loc}(H)= \{ \psi\in H^1_{loc}(U):\, \sqrt{V_+}\psi\in L^2_{loc}(U)\}= H^1_{loc}(U)\cap \calQ^U_{loc}(V_+)$. 

Similarly, one can define the local domain of $H$, relative to some open set $U\subset\R^d$,  by 
\begin{equation}\label{eq:local domain}
  \calD^U_{loc}(H)
  	\coloneqq 
  		\big\{ 
  			\psi\in L^2_{\loc}(U):\, \chi\psi\in \calD(H) 
  			\text{ for all } \chi\in \calC^\infty_0(U) 
  		\big\}\, . 
\end{equation}  
\begin{remark}\label{rem:local spaces}
  Note that the definitions of $\calQ^U_{loc}(H)$ and 
  $\calD^U_{loc}(H)$ are consistent in the sense that 
  for any $\chi\in \calC^\infty_0(U)$ one has  
  $\chi\psi\in \calQ(H)$ {\rm(}even $\chi\varphi\in \calQ^U_c(H)${\rm)} for any $\psi\in \calQ(H)$ and 
  $\chi\psi\in \calD(H)$ for any $\psi\in \calD(H)$.  
  This is clear when 
  $\psi\in \calQ^U_{loc}(H)= H^1_{loc}(U)\cap\calQ^U_{loc}(V_+)$ 
  since for $\chi\in \calC^\infty_0(U)$ we have 
  $\chi\psi\in H^1(\R^d)$ for any 
  $\psi\in H^1_{loc}(U)$ and $\chi\psi \in \calQ(V_+)$
  for any $\psi\in \calQ^U_{loc}(V_+)$. In addition, if 
  $\psi\in \calD(H)$ then 
  \begin{equation*}
  	(-\Delta+V)_{\distr} \, \chi\psi 
  		= \chi (-\Delta+V)_{\distr} \, \psi 
  		  - 2\nabla \chi\nabla\psi- (\Delta\chi)\psi  \in L^2(\R^d)\, , 
  \end{equation*}
  so $\chi\psi\in \calD(H)$. Moreover,  
  with 	$\calC^\infty(U)$ the infinitely 
  differentiable functions on $U$, it is easy 
  to see that 
  \begin{equation}
  	\calC^\infty(U)\subset \calQ^U_{loc}(H)\, . 
  \end{equation}
  since $\calC^\infty_0(U)\subset \calQ^U(H)$. 
  However, the inclusion $\calC^\infty(U)\subset \calD^U_{loc}(H)$ is wrong in general, since the construction of the Schr\"odinger operator $H$ with the help of quadratic forms allows for rather singular potentials~V. 
\end{remark}

\smallskip
Thus we define weak solutions, supersolutions and subsolutions of \eqref{eq:eigenfunction} in the following quadratic form sense.  
\begin{definition}\label{def:eigenfunctions etc}
\begin{SL} 
\item $u$ is a (weak) eigenfunction of the Schr\"odinger operator 
		$H$ with energy $E$ if  $u\in \calQ(H)$ and  
  \begin{equation}\label{eq:weak eigenfunction}
    \la \varphi, (H-E) u\ra = 0 	
  \end{equation}
  for every $\varphi\in  C_0^\infty(\R^d)$.
\item $u$ is a (weak) local eigenfunction of the Schr\"odinger operator 
		$H$ with energy $E$ in $U\subset\R^d$  
 	 if $u\in \calQ^U_{loc}(H)$ and  
  \begin{equation}\label{eq:weak local eigenfunction}
    \la \varphi, (H-E) u\ra = 0 	
  \end{equation}
  for every $\varphi\in  C_0^\infty(U)$.
\item  $u$ is a supersolution of the Schr\"odinger operator 
		$H$ with energy $E$ in $U\subset\R^d$ if 
  $u\in \calQ^U_{loc}(H)$ and 
  \begin{equation}\label{eq:supersolution}
  \la \varphi, (H-E) u\ra \ge 0 	
  \end{equation}
  for every nonnegative $\varphi\in C_0^\infty(U)$. 
\item $u$ is a subsolution of the Schr\"odinger operator 
		$H$ with energy $E$ in $U\subset\R^d$  if 
  $u\in \calQ^U_{loc}(H)$ and 
  \begin{equation}\label{eq:subsolution}
  \la \varphi, (H-E)  u\ra \le  0 	
  \end{equation}
  for every nonnegative $\varphi\in  C_0^\infty(U)$. 
\end{SL}
\end{definition}
\begin{remark}
  Using the density of $\calC^\infty_0$ in $\calQ(H)$ it is easy to see that once \eqref{eq:weak eigenfunction} 
  holds for all $\varphi\in \calC^\infty_0$, it holds for all $\varphi\in \calQ(H)$. Similarly, 
  \eqref{eq:weak local eigenfunction} holds for all 
  $\varphi\in \calQ^U_c(H)$, and   
  \eqref{eq:supersolution}, respectively \eqref{eq:subsolution}, hold for all 
  $0\le \varphi\in \calQ^U_c(H)$.  
\end{remark}
One should note that one does not have to distinguish between weak eigenfunctions and eigenfunctions and similarly for local eigenfunctions. 

\begin{lemma}\label{lem:regularity}
  Every weak eigenfunction $u\in \calQ(H)$ of $H$ is in $\calD(H)$ given by \eqref{eq:H}.  
  Similarly, if $u\in \calQ^U_{loc}(H)$ is a weak local eigenfunction of $H$ 
  in an open domain $U\subset\R^d$, then $u$ is 
  locally in the domain of $H$, i.e., $u\in \calD^U_{loc}(H)$ given by \eqref{eq:local domain}.
\end{lemma}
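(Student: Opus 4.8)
The plan is to reduce everything to the description of $\calD(H)$ already recorded in \eqref{eq:H}, which says that a function lying in the form domain belongs to $\calD(H)$ precisely when $(-\Delta+V)_{\distr}$ applied to it is square integrable. So the entire content is to show that the weak (form) eigenfunction equation $\la\varphi,(H-E)u\ra=0$ is nothing but the distributional identity $(-\Delta+V)_{\distr}u=Eu$, and then to read off that its right--hand side is in $L^2$. As a preliminary I would record that $V_\pm u\in L^1_{loc}$: for $u\in\calQ(H)$ we have $\sqrt{V_+}u\in L^2$ and, by the form smallness \eqref{eq:form small}, also $\sqrt{V_-}u\in L^2$; since $\sqrt{V_\pm}\in L^2_{loc}$, Cauchy--Schwarz gives $V_\pm u=\sqrt{V_\pm}\,(\sqrt{V_\pm}u)\in L^1_{loc}$. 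Hence $Vu$ is an honest locally integrable function, the pairing $\la\varphi,Vu\ra=\int\ol\varphi\,Vu\,dx$ makes sense for every $\varphi\in\calC_0^\infty$, and $\la\sqrt{V_\pm}\varphi,\sqrt{V_\pm}u\ra=\int\ol\varphi\,V_\pm u\,dx$, which is exactly the identification of the form with \eqref{quadratic form} (the ``abuse of notation'').

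For the global claim I would unfold $\la\varphi,Hu\ra=\la\nabla\varphi,\nabla u\ra+\la\varphi,Vu\ra$ using the preliminary step, and rewrite $\la\nabla\varphi,\nabla u\ra$ through the definition of the weak gradient as the distributional action of $-\Delta u$ on $\varphi$. The vanishing of $\la\varphi,(H-E)u\ra$ for all $\varphi\in\calC_0^\infty(\R^d)$ then reads, in the sense of distributions, $(-\Delta+V)_{\distr}u=Eu$. Since $u\in L^2$ we get $Eu\in L^2$, so $(-\Delta+V)_{\distr}u\in L^2$; as $u\in\calQ(H)=H^1(\R^d)\cap\calQ(V_+)$ by hypothesis, \eqref{eq:H} immediately gives $u\in\calD(H)$ with $Hu=Eu$.

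For the local claim I would localize. Fix $\chi\in\calC_0^\infty(U)$; by Remark \ref{rem:local spaces} we have $\chi u\in\calQ(H)$. Testing the local eigenfunction equation against $\varphi\in\calC_0^\infty(U)$ gives, as in the global case, $(-\Delta+V)_{\distr}u=Eu$ as an identity of distributions on $U$. Applying the Leibniz rule for the distributional Laplacian (as in Remark \ref{rem:local spaces}, now read as an identity of distributions, which is licit since $u\in H^1_{loc}(U)$ and $Vu\in L^1_{loc}$),
\[
(-\Delta+V)_{\distr}(\chi u)=\chi\,(-\Delta+V)_{\distr}u-2\nabla\chi\cdot\nabla u-(\Delta\chi)u=E\chi u-2\nabla\chi\cdot\nabla u-(\Delta\chi)u .
\]
Each term is in $L^2$: the mass term $E\chi u\in L^2$ since $\chi u\in L^2$; the zeroth order term $(\Delta\chi)u\in L^2$ since $\Delta\chi$ is smooth with compact support in $U$ and $u\in L^2_{loc}(U)$; and the cross term $\nabla\chi\cdot\nabla u\in L^2$ since $u\in H^1_{loc}(U)$ makes $\nabla u\in L^2_{loc}(U)$ while $\nabla\chi$ is smooth and compactly supported. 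Hence $(-\Delta+V)_{\distr}(\chi u)\in L^2$, and together with $\chi u\in\calQ(H)$ the characterization \eqref{eq:H} yields $\chi u\in\calD(H)$. Since $\chi\in\calC_0^\infty(U)$ was arbitrary, $u\in\calD^U_{loc}(H)$ by \eqref{eq:local domain}.

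The step I expect to require the most care, and the only genuinely delicate point, is the identification in the first paragraph: the quadratic form only sees $\sqrt{V_\pm}u\in L^2$ \emph{separately}, whereas the distributional reading needs $V_\pm u\in L^1_{loc}$ so that $Vu$ is a well--defined distribution and the two pairings agree on all test functions. Once this bookkeeping is secured, the remainder is a direct consequence of the distributional Leibniz rule and the already established description \eqref{eq:H} of $\calD(H)$; in particular no elliptic regularity estimate is needed beyond the local $H^1$--regularity already contained in the assumption $u\in\calQ^U_{loc}(H)$.
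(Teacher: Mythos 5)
Your proof is correct, but it takes a genuinely different route from the paper's own argument. You verify membership in $\calD(H)$ through the explicit distributional characterization \eqref{eq:H}: after the bookkeeping step that $V_\pm u\in L^1_{loc}$ (which you rightly flag as the delicate point — it uses $\sqrt{V_+}u\in L^2$, form smallness \eqref{eq:form small} for $\sqrt{V_-}u$, and $\sqrt{V_\pm}\in L^2_{loc}$), you convert the weak form equation into the distributional identity $(-\Delta+V)_{\distr}u=Eu$ and read off square integrability, localizing via the Leibniz rule for the cutoff $\chi$. The paper never touches the distributional operator: it first extends the weak equation by density from $\calC^\infty_0$ to all test functions $\varphi\in\calQ(H)$, then chooses $\varphi=(H+\lambda)^{-1}\xi$ with $\lambda$ large, which turns $\la\varphi,(H+\lambda)u\ra=\la\varphi,\lambda u+Eu\ra$ into $u=(H+\lambda)^{-1}(\lambda u+Eu)\in\calD(H)$, since the resolvent maps $L^2(\R^d)$ onto $\calD(H)$; in the local case it commutes $\ol{\chi}$ through the quadratic form — producing the same commutator terms $-(\Delta\chi)u-2\nabla\chi\nabla u$ that you find — before applying the same resolvent trick. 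What your approach buys is a direct, resolvent-free verification that exhibits $Hu=Eu$ as an honest PDE. What it costs is that it leans entirely on the characterization \eqref{eq:H} (i.e., on the Friedrichs-type theorem the paper cites from Teschl) and on the identification of $Vu$ as a locally integrable function, whereas the paper's argument is purely form-theoretic: it never needs $Vu$ to be a function, is independent of the explicit description \eqref{eq:H}, and, as written, covers general weak solutions of $Hu=f$ with $f\in L^2$ (resp.\ $L^2_{loc}$), of which eigenfunctions are the special case $f=Eu$. One small point you leave implicit: the Leibniz identity for $\chi u$ is derived on $U$, but \eqref{eq:H} requires it on all of $\R^d$; this is harmless because both sides vanish outside $\supp\chi$, so they agree on the open cover $\{U,\ \R^d\setminus\supp\chi\}$, but it deserves a sentence.
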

\begin{proof}
  This is probably a standard argument for weak eigenfunctions, 
  but not standard for weak local eigenfunctions. 
  Let $f\in L^2(\R^d)$ and $\psi$ be a weak solution of the equation $H\psi = f$, i.e., 
  \begin{equation}\label{eq:resolvent 1}
  	\la \varphi, H\psi \ra = \la \varphi, f \ra 
  \end{equation}
  for all $\varphi\in\calQ(H)$. 
  Then for any 
  $\lambda\in \R$ we have 
  \begin{equation}\label{eq:resolvent 2}
  	\la \varphi, (H+\lambda)\psi \ra = \la \varphi, \lambda\psi+ f \ra 
  \end{equation}
  for all $\varphi\in\calQ(H)$. Since $H$ is bounded from below, all large enough $\lambda$ will be in the resolvent set of $H$. So for all large enough 
  $\lambda$ we can choose $\varphi= (H+\lambda)^{-1}\xi$, with $\xi\in L^2(\R^d)$ in \eqref{eq:resolvent 2} to get 
    \begin{equation}\label{eq:resolvent 3}
  	\la \xi, \psi \ra 
  		= \la (H+\lambda)^{-1}\xi, \lambda\psi+ f \ra 
  		= \la \xi, (H+\lambda)^{-1}(\lambda\psi+ f) \ra\, . 
  \end{equation} 
  This holds for all $\xi\in L^2(\R^d)$, so 
  \begin{equation}\label{eq:resolvent 4}
  	\psi 
  		= (H+\lambda)^{-1}(\lambda\psi+ f) \in \calD(H)\, ,
  \end{equation} 
  since $\psi, f\in L^2(\R^d)$ and the resolvent 
  $(H+\lambda)^{-1}$ maps $L^2(\R^d)$ onto $\calD(H)$. 
  
  Note that if $\psi$ is a weak eigenfunction of $H$, 
  at energy $E$, then we can use  $f= E\psi$. 
  Thus weak eigenfunctions are eigenfunctions in 
  the domain of $H$. 
  
  Now assume that $f\in L^2_{loc}(U)$ and 
  $\psi\in \calQ^U_{loc}(H)$ is a weak local solution of
  \begin{equation}\label{eq:resolvent 1 local}
  	\la \varphi, H\psi \ra = \la \varphi, f \ra 
  \end{equation}
 for all  $\varphi\in \calQ^U_c(H)$. Take any 
  $\chi\in\calC^\infty_0(U)$. Replacing $\varphi$ by 
  $\ol{\chi}\varphi$ in \eqref{eq:resolvent 1 local} one 
  sees that 
  \begin{equation}\label{eq:resolvent 2 local}
  	\la \ol{\chi}\varphi, H\psi \ra 
  		= \la \varphi, \chi f \ra 
  \end{equation}
  for all $\varphi\in\calQ(H)$. Using that $\chi, \nabla\chi$, and $\Delta\chi$ have compact supports, a straightforward calculation shows 
  \begin{equation*}
  	\la \nabla(\ol{\chi}\varphi), \nabla\psi\ra
  		= 
  			\la \nabla\varphi, \nabla(\chi\psi) \ra
  			+ \la \varphi, (\Delta\chi +2\nabla\chi\nabla) \psi \ra\, .
  \end{equation*} 
  Using this and the definition \eqref{eq:quadratic form H} of the quadratic form  
   in \eqref{eq:resolvent 2 local} yields 
  \begin{equation}\label{eq:resolvent 3 local}
  	\la \varphi, H\chi\psi \ra 
  		= \la \varphi, \chi f -   (\Delta\chi +2\nabla\chi\nabla) \psi\ra 
  \end{equation}
  for all $\varphi\in \calQ(H)$. Adding again $\la \varphi,\lambda\chi\psi\ra$ on both sides and choosing $\varphi=(H+\lambda)^{-1}\xi$ with 
  $\lambda$ large that enough, one sees that   
  \begin{equation}\label{eq:resolvent 4 local}
  	\la \xi, \chi\psi \ra 
  		= \la \xi, (H+\lambda)^{-1}\big(\chi(\lambda\psi+ f) -   (\Delta\chi)\psi +2\nabla\chi\nabla\psi\big)\ra 
  \end{equation}
  for all $\xi\in L^2(\R^d)$. Hence 
  \begin{equation}\label{eq:resolvent 5 local}
  	\chi\psi 
  		= (H+\lambda)^{-1}\big(\chi(\lambda\psi+ f) -   (\Delta\chi)\psi +2\nabla\chi\nabla\psi\big) \in \calD(H)\, ,
  \end{equation}
  for any $\chi\in \calC^\infty_0(U)$. 
  Thus $\psi\in \calD^U_{loc}(H)$. 
  Again, using $f=E\psi$ shows that any weak local eigenfunctions of $H$ at energy $E$ 
  is locally in the domain of $H$. 
\end{proof}

Finally let us note that the definition of critical potential and virtual levels are rather natural. 
It is easy to see that any potential which creates a zero energy ground 
state is critical. 
\begin{lemma}\label{lem:criticalV}
Assume that $V\in L^1_{loc}(\R^d)$ and that $V_-$ is form small 
 and $0\le W$ is infinitesimally form small w.r.t.\ $-\Delta+V_+$. Furthermore let $H$ and $H_\lambda$, $0 < \lambda \le  \lambda_0$, 
 be the associated Schr\"odinger operators, see Remark \ref{rem:construction H-lambda}.  
Assume also that  $\sigma(H)= \sigma_{ess}(H_\lambda)= [0,\infty)$ and that   
$H$ has a zero energy ground state.   
Then the potential  $V$ is critical. 
\end{lemma}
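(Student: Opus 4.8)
The plan is to verify directly the two requirements of Definition~\ref{def:VirtualLevel}. Let $\psi_0$ denote a zero energy ground state of $H$, which exists by hypothesis; it is an $L^2$-normalized eigenfunction with $H\psi_0=0$, so in particular $\psi_0\in\calQ(H)$ and $\langle\psi_0,H\psi_0\rangle=0$. As recalled in the introduction, a ground state is unique up to a global phase and may be chosen strictly positive almost everywhere, see \cite{Far72,Goe77} or \cite[Section XIII.12]{ReeSim4}; I fix such a representative $\psi_0>0$.

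First I would settle the spectral requirement $\sigma(H)=\sigma_{ess}(H)=[0,\infty)$. Since $\sigma(H)=[0,\infty)$ is given, it suffices to show $\sigma_{ess}(H)=[0,\infty)$. We have $\sigma_{ess}(H)\subseteq\sigma(H)=[0,\infty)$, and every point of $(0,\infty)$ is a non-isolated point of the interval $[0,\infty)=\sigma(H)$, hence cannot belong to the discrete spectrum and must lie in $\sigma_{ess}(H)$; closedness of the essential spectrum then gives $[0,\infty)\subseteq\sigma_{ess}(H)$, so equality holds. Now fix an arbitrary nontrivial compactly supported $W'\ge0$ which is infinitesimally form bounded with respect to $-\Delta+V_+$, and let $\lambda>0$. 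The requirement $\sigma_{ess}(H-\lambda W')=[0,\infty)$ follows because such a $W'$ is relatively form compact with respect to $H$ (compact support together with infinitesimal form boundedness), so that Weyl's theorem yields $\sigma_{ess}(H-\lambda W')=\sigma_{ess}(H)=[0,\infty)$; alternatively one may invoke the criterion of Remark~\ref{rem:essential spectrum}.

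For the negative energy bound state I would use the variational principle with the trial function $\psi_0$. Since $W'$ is form bounded, $\langle\psi_0,W'\psi_0\rangle$ is finite, and
\begin{equation*}
  \langle\psi_0,(H-\lambda W')\psi_0\rangle
    = \langle\psi_0,H\psi_0\rangle-\lambda\langle\psi_0,W'\psi_0\rangle
    = -\lambda\langle\psi_0,W'\psi_0\rangle\,.
\end{equation*}
The crux of the argument, and the only genuinely nontrivial point, is to see that $\langle\psi_0,W'\psi_0\rangle>0$. This is exactly where the strict positivity of the ground state enters: because $\psi_0>0$ almost everywhere and $W'\ge0$ is strictly positive on a set of positive Lebesgue measure, the integrand $W'|\psi_0|^2$ is strictly positive on that set, forcing $\langle\psi_0,W'\psi_0\rangle=\int W'|\psi_0|^2\,dx>0$.

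Consequently the displayed quadratic form is strictly negative, and the min--max principle gives
\begin{equation*}
  \inf\sigma(H-\lambda W')\le\langle\psi_0,(H-\lambda W')\psi_0\rangle<0=\inf\sigma_{ess}(H-\lambda W')\,.
\end{equation*}
Thus $H-\lambda W'$ has spectrum strictly below the bottom of its essential spectrum, i.e.\ a negative energy bound state, for every $\lambda>0$. Since $W'$ and $\lambda$ were arbitrary, $V$ is critical. I expect the positivity and nontriviality bookkeeping to be the one point that needs real care, the remaining steps being routine applications of Weyl's theorem and the variational characterization of the bottom of the spectrum.
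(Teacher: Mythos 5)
Your proposal is correct and follows essentially the same route as the paper: test the quadratic form of $H-\lambda W'$ with the strictly positive zero energy ground state, use nontriviality of $W'$ to get $\la \psi_0, W'\psi_0\ra>0$, and conclude via the min--max principle that a negative eigenvalue exists below the unchanged essential spectrum. The only difference is that you additionally verify the spectral bookkeeping (that $\sigma_{ess}(H)=[0,\infty)$ follows from $\sigma(H)=[0,\infty)$, and that compactly supported infinitesimally form bounded perturbations preserve the essential spectrum via relative form compactness and Weyl's theorem), points which the paper simply absorbs into the hypotheses of the lemma.
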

\begin{proof} 
This is probably well--known. We provide the short proof for the convenience of the reader. 
Let $\psi$ be a zero energy normalized ground state of $H$. Then for any small enough $\lambda>0$.  
\begin{equation*}
	\la \psi, H_\lambda\psi\ra = \la \psi, H\psi\ra -\lambda\la\psi,W\psi\ra = -\lambda\la \psi, W\psi\ra<0 
\end{equation*} 
since we can choose the ground state $\psi>0$. 
Thus as soon as 
$\sigma_{ess}(H_\delta)= [0,\infty)$, the min--max principle shows that 
$H_\delta$ has eigenvalues below zero. 
\end{proof}

The converse to Lemma \ref{lem:criticalV} does not hold. 
See the example from Appendix \ref{sec:appendix}. 

\smallskip

Our proofs of Theorems \ref{thm:intro absence} and 
\ref{thm:intro existence} rely on the so--called subharmonic comparison lemma which has already 
seen wide use in the study of the asymptotic decay of eigenfunctions 
of Schr\"odinger operators, see, e.g., \cite{DeiHunSimVoc78,HofOstHofOstAhl78}. 
We use the version of \cite[Theorem 2.7]{Agm85} since 
it allows for a quadratic form version which needs only minimal 
regularity assumptions. 

\begin{theorem}[Agmon's version of the comparison principle]\label{thm:comp-agm}
Let $w$ be a positive supersolution of the Schr\"odinger operator $H$ at energy $E$ in a neighborhood of infinity 
$U_R\coloneqq\{x \in \R^d\,:\, |x| > R\}$. Let $v$ be a subsolution of 
$H$ at energy $E$ in $U_R$. Suppose that
\begin{equation}\label{eq:liminf cond}
\liminf_{N\rightarrow\infty}\left(\frac{1}{N^2}\int_{N\leq|x|\leq\alpha N}|v|^2\mathrm d x\right)=0
\end{equation}
for some $\alpha>1$. If for some $\delta>0$ and $0\le C<\infty$ one has 
\begin{equation}\label{eq:a-priori comparison on annulus}
v(x)\leq Cw(x) \,  \text{ on the annulus } R<|x|\le R+\delta \, ,
\end{equation}
 then 
\begin{equation}
v(x)\leq Cw(x) \, \text{ for all } x\in U_{R}\,.
\end{equation}
\end{theorem}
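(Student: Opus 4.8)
The plan is to collapse the two inequalities into a single scalar inequality by a ground-state substitution, and then to run a weighted energy (Caccioppoli) estimate whose remainder is killed by the growth hypothesis \eqref{eq:liminf cond}. First I would record the regularity we may use: since $w$ is a positive supersolution, Lemma~\ref{lem:regularity} together with Assumption~\ref{assumption} (the Kato-class condition) guarantees that $w$ is continuous and strictly positive, hence locally bounded away from zero on compact subsets of $U_R$; the same regularity makes $v$ locally bounded. Set $u\coloneqq v-Cw$. Because $v$ is a subsolution and $w$ a supersolution at energy $E$, the function $u$ is a subsolution in $U_R$, i.e. $\la\varphi,(H-E)u\ra\le 0$ for all $0\le\varphi\in\calQ^{U_R}_c(H)$, and by \eqref{eq:a-priori comparison on annulus} we have $u\le 0$ on the annulus $R<|x|\le R+\delta$. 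The goal becomes to prove $u_+\equiv 0$ on $U_R$.

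The central object is the quotient $\rho\coloneqq u_+/w\ge 0$. Since $u_+\in H^1_{\loc}(U_R)$ and $1/w\in H^1_{\loc}\cap L^\infty_{\loc}(U_R)$, one has $\rho\in H^1_{\loc}(U_R)$, and the product rule yields the clean identity $w^2\nabla\rho=w\nabla u_+-u_+\nabla w$. I would then show that $\rho$ is a \emph{weak subsolution of the degenerate weighted operator} $-\operatorname{div}(w^2\nabla\,\cdot\,)$, namely $\int w^2\nabla\rho\cdot\nabla\Phi\le 0$ for every $0\le\Phi\in\calC_0^\infty(U_R)$. This rests on two steps. The first is a Kato-type inequality: the positive part $u_+$ of the subsolution $u$ is \emph{itself} a subsolution at energy $E$, i.e. $\int\nabla u_+\cdot\nabla\zeta+(V-E)\zeta u_+\le 0$ for all $0\le\zeta\in\calQ^{U_R}_c(H)$. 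Granting this, the second step is to test the $u_+$-subsolution inequality with $\zeta_1=\Phi w$ and the $w$-supersolution inequality with $\zeta_2=\Phi u_+$ (both nonnegative with compact support) and subtract: the potential terms cancel exactly, since $\zeta_1u_+=\Phi w u_+=\zeta_2 w$, while the gradient terms combine, via the identity above, precisely into $\int w^2\nabla\rho\cdot\nabla\Phi$, delivering the desired inequality.

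The step I expect to be the main obstacle is the Kato inequality in this low-regularity, quadratic-form setting, because the naive replacement of $\nabla u_+$ by $\nabla u$ inside the subsolution inequality fails on $\{u\le 0\}$. I would handle it by approximation: take a smooth convex nondecreasing $\beta_\epsilon$ with $\beta_\epsilon(t)\to t_+$ and $\beta_\epsilon'\uparrow\ind_{(0,\infty)}$, test the subsolution inequality for $u$ with $\zeta\beta_\epsilon'(u)\ge 0$, and use $\nabla\bigl(\zeta\beta_\epsilon'(u)\bigr)=\beta_\epsilon'(u)\nabla\zeta+\beta_\epsilon''(u)\zeta\nabla u$. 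Convexity makes $\beta_\epsilon''\ge 0$, so the term $\int\beta_\epsilon''(u)\zeta|\nabla u|^2\ge 0$ may be dropped with the correct sign, leaving $\int\nabla\beta_\epsilon(u)\cdot\nabla\zeta+(V-E)\beta_\epsilon'(u)\zeta u\le 0$. Passing $\epsilon\downarrow 0$ by dominated convergence, using $|\beta_\epsilon'|\le 1$, the local bounds on $u$, and the Kato-class control of $V$, gives the claim. Justifying the admissibility of all test functions and these limits is the genuinely delicate part.

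Once $\int w^2\nabla\rho\cdot\nabla\Phi\le 0$ is in hand, the conclusion is a standard energy estimate. Choosing $\Phi=\eta^2\rho$ with $\eta$ a Lipschitz cutoff and applying Young's inequality gives $\int w^2\eta^2|\nabla\rho|^2\le 4\int w^2\rho^2|\nabla\eta|^2$. Since $u\le v$ implies $u_+\le v_+$, we have $w^2\rho^2=u_+^2\le|v|^2$, so taking $\eta=\eta_N$ equal to $1$ on $\{|x|\le N\}$, supported in $\{|x|\le\alpha N\}$, with $|\nabla\eta_N|\le C/((\alpha-1)N)$, yields $\int_{\{|x|\le N\}}w^2|\nabla\rho|^2\le \tfrac{C}{N^2}\int_{N\le|x|\le\alpha N}|v|^2$. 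Along the sequence realizing the $\liminf$ in \eqref{eq:liminf cond} the right-hand side tends to $0$, and since the left-hand side is nondecreasing in $N$ it must vanish identically, so $\nabla\rho\equiv 0$ on $U_R$. As $U_R$ is connected and $\rho=0$ on the annulus $R<|x|\le R+\delta$ (because $u\le 0$ there), it follows that $\rho\equiv 0$, i.e. $u_+\equiv 0$ and hence $v\le Cw$ throughout $U_R$, as claimed.
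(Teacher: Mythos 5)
Your proof is not comparable to a proof in the paper, because the paper gives none: Theorem~\ref{thm:comp-agm} is quoted from \cite[Theorem 2.7]{Agm85}, and the remark following it explains that Agmon's continuity hypotheses on $v,w$ can be dropped, with the low-regularity version worked out in \cite{HunJexLan21-Helium}. What you have written is essentially a reconstruction of that Agmon argument: reduce to the subsolution $u=v-Cw$, pass to $u_+$ by a Kato-type inequality (this is exactly \cite[Lemma 2.9]{Agm85}, which the paper itself invokes in Lemma~\ref{lem:modulus is subsolution}), perform the ground-state substitution $\rho=u_+/w$, and close with a Caccioppoli estimate whose right-hand side is annihilated by the growth condition \eqref{eq:liminf cond}. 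The algebra is right throughout: the cross-testing with $\zeta_1=\Phi w$ and $\zeta_2=\Phi u_+$ does make the potential terms cancel and assembles the gradients into $\int w^2\nabla\rho\cdot\nabla\Phi$; the constant $4$ in the energy estimate and the $\liminf$-along-a-subsequence argument are correct; and the convex-approximation proof of the Kato step is sound.

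There are, however, two places that need repair. First, your regularity input is misattributed: Lemma~\ref{lem:regularity} concerns weak (local) \emph{eigenfunctions}, i.e.\ solutions, and gives nothing for $v$ or $w$; in this quadratic-form setting a positive supersolution need not be continuous, nor even locally bounded above (for $d\ge 5$, $|x-x_0|^{-\alpha}$ with small $\alpha>0$ is superharmonic and in $H^1_{\loc}$ near $x_0$). What your argument actually requires is (i) that the subsolution $u_+$ is locally bounded above, which is the subsolution estimate for Kato-class potentials \cite[Theorem C.1.2]{Sim82}, \cite{AizSim82}, and (ii) that $w$ is locally bounded away from zero, which is the weak Harnack inequality for nonnegative supersolutions \cite{AizSim82,Simader90}; both statements are true but neither follows from Lemma~\ref{lem:regularity}, and both are indispensable, since $\rho\in H^1_{\loc}$ and $w\nabla\rho\in L^2_{\loc}$ hinge on them. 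This is precisely the point where \cite{HunJexLan21-Helium} has to do work. Second, you derive $\int w^2\nabla\rho\cdot\nabla\Phi\le 0$ only for $0\le\Phi\in\calC^\infty_0(U_R)$ and then insert $\Phi=\eta^2\rho$, which is merely a compactly supported $H^1\cap L^\infty$ function; since $w$ need not be in $L^\infty_{\loc}$, the implicit density argument in the weighted form is not obvious. The clean fix is to skip the intermediate inequality altogether and test the $u_+$-subsolution inequality with $\zeta_1=\eta^2u_+$ and the $w$-supersolution inequality with $\zeta_2=\eta^2u_+^2/w$: both are nonnegative, lie in $\calQ^{U_R}_c(H)$ (their supports sit inside $\{R+\delta\le|x|\le\alpha N\}$ because $u_+$ vanishes a.e.\ on the annulus), the potential terms cancel since $u_+\zeta_1=w\zeta_2$, and the same algebra yields your Caccioppoli bound directly. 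A last cosmetic point: for $d=1$ the set $U_R$ is not connected, but both components meet the annulus where $\rho=0$, so the concluding step still goes through.
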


\begin{remark}
We note that the condition \eqref{eq:liminf cond} is trivially satisfied 
as soon as  $v\in L^2(\R^d)$, but it also allows for subsolutions $v$ which are not square integrable at infinity.  
This is crucial for the proof of our non-existence result.   
A slight extension of Agmon's comparison principle, which allows to relax the continuity assumptions and works for domains $U$ which are not necessarily neighborhoods of infinity,  is derived in \cite{HunJexLan21-Helium}.   
\end{remark}

\begin{remark}
Agmon also assumes that the supersolution $w$ and the  subsolution $v$ are continuous in $\overline{U_R}= \{|x|\ge R\}$ 
in \cite{Agm85}. 
However, the form of Theorem \ref{thm:comp-agm} is what is really proven in \cite{Agm85}, see 
also \cite{HunJexLan21-Helium}.   
The additional assumption that the supersolution $w>0$ and the subsolution $v$ are  continuous in 
$\{|x|\ge R\}$  are only made in  \cite{Agm85} to guarantee that \eqref{eq:a-priori comparison on annulus} 
holds with constant $C= c_2/c_1$ where   
$c_1\coloneqq \inf_{R\le |x|\le R+\delta} w(x)>0$ and $c_2\coloneqq \sup_{R\le |x|\le R+\delta}|v(x)|<\infty$, 
by continuity, for arbitrary $\delta>0$. 
\end{remark}

Before we give the proofs of Theorems \ref{thm:intro absence} and \ref{thm:intro existence}, 
we sketch the simple proof of the $m=0$ version of Theorem \ref{thm:intro absence} using the comparison theorem:

For $\gamma>0$ set $\psi_\gamma(x)= |x|^{-\gamma}$, so  
$\psi_\gamma\in \calC^\infty(U_R)$ and all $R>0$.  
A short calculation shows $\Delta\psi_\gamma(x)= \gamma(\gamma+2-d)|x|^{-\gamma-2}$ for $x\neq 0$. 
Hence with 
\begin{equation}\label{eq:W-gamma}
	W_\gamma(x) 
		=\frac{\Delta\psi_\gamma(x)}{\psi_\gamma (x)} 
		= \gamma(\gamma+2-d)|x|^{-2}
\end{equation}
one sees that $(-\Delta +W_\gamma(x))\psi_\gamma(x)=0$ for $x\neq 0$. Since $\psi_\gamma \in \calC^\infty(\R^d\setminus\{0\})$, integration 
by parts shows that $\psi_\gamma$ is a weak local 
eigenfunction of $-\Delta +W_\gamma$ in the sense of 
Definition \ref{def:eigenfunctions etc} in the open sets $U_R$ for any $R>0$.  

Moreover, Remark \ref{rem:local spaces} shows that 
$\psi_\gamma\in \calQ^{U_R}_{loc}(H)$ for any $R>0$ 
and any Schr\"odinger operator $H$ with potential 
$V\in L^1_{loc}(\R^d)$ for which 
$V_-$ is form small w.r.t.\ $-\Delta+V_+$. Thus if 
\begin{equation}
	V(x)\le W_\gamma(x) 
\end{equation}
for some $\gamma>0$ and all large enough $|x|> R$, then
\begin{equation} \label{eq:V below W gamma}
	\la \varphi, H\psi_\gamma\ra 
	\le \la \varphi, (-\Delta +W_\gamma)\psi_\gamma\ra 
	=0 
\end{equation}
for all $0\le \varphi\in\calC^\infty_0(U_R)$, i.e., 
$\psi_\gamma$ is a zero energy subsolution of $H$. 
One easily checks that 
\begin{SL}
  \item[a)] $\psi_\gamma \not\in L^2(U_R)$ if and only if $0<\gamma\le d/2$.
  \item[b)] $\psi_\gamma$ satisfies \eqref{eq:liminf cond} if and only if $\gamma>(d-2)/2$. 
  \item[c)] $0<\gamma\mapsto W_\gamma$ is increasing if and only if $\gamma>(d-2)/2$.
\end{SL}
The last part shows that one should choose $\gamma$ 
as large as possible in order to guarantee that \eqref{eq:V below W gamma} holds.   

Now assume that $H\ge 0 $ has zero as an eigenvalue with corresponding unique ground state $\psi$ which can be chosen to be positive \cite{Far72, Goe77}.  Since $V$ is locally in the Kato class, one also knows that $\psi$ is continuous, \cite{Sim82}.
If  
\begin{equation}
	V(x)\le \frac{d(4-d)}{4|x|^2} = W_{d/2}(x)
\end{equation}
then the above discussion shows that  $u=\psi_{d/2}$ is a zero energy subsolution of $H$ which is not square integrable at infinity but for which \eqref{eq:liminf cond} holds. 
Using  
$c^1_R= \inf_{R\le |x|\le R+1}\psi(x)>0$,  
$c^2_R= \sup_{R\le |x|\le R+1} u(x)$, and 
$C=c^2_R/c^1_R$ ensures $u(x)\le C \psi(x)$
for all $R\le |x|\le R+1$. Then Theorem \ref{thm:comp-agm} shows that 
\begin{equation}
	u(x)\le C\psi(x)
\end{equation}
for all $|x|>R$, in particular, $\psi\neq L^2(\R^d)$, hence $\psi$ is not an eigenfunction. 
Thus zero is not an eigenvalue of the Schr\"odinger operator $H$, which proves the $n=0$ version of Theorem \ref{thm:intro absence}.  

\begin{remark}
  Note that for $\gamma>d/2$ the function $\psi_\gamma$ 
  is in $L^2(U_R)$ for any $R>0$, so $\gamma=d/2$ 
  is the largest possible choice in order to get 
  the non--existence result.  The higher order condition for non-existence will have to use a suitably modified choices of subsolutions at $\gamma=d/2$.  
\end{remark}
\begin{remark}
  Choosing $\gamma=(d-2)/2$ yields the Hardy potential $W_{(d-2)/2}(x)= -\frac{(d-2)^2}{4|x|^2}$. It is well known that a Schr\"odinger operator with a Hardy potential is nonnegative. 
  It is curious that for the absence of zero energy eigenfunctions the choice $\gamma=d/2$ becomes relevant.
\end{remark}
  For the existence 
  result we want to reverse the roles of 
  the eigenfunction $\psi$ and $\psi_\gamma$. 
  If 
  \begin{equation}
  	V(x) \ge \frac{d(4-d)+\varepsilon}{4|x|^2}
  \end{equation}
  for all $|x|>R$ and some $\varepsilon >0$ 
  then $\psi_\gamma$ is a zero 
  energy subsolution of $H$ in $U_R$ where $\gamma>d/2$ 
  is the unique solution of $\gamma(\gamma+2-d) = (d(4-d)+\varepsilon)/4$. 
  Arguing as above, one sees that any positive 
  zero energy ground state $\psi$ of $H$ satisfies the 
  upper bound    
  \begin{equation}
  	\psi(x) \le C\psi_\gamma(x)
  \end{equation}
  for all $|x|>R$, hence it is square integrable 
  at infinity since $\psi_\gamma\in L^2(U_R)$ 
  as soon as $\gamma>d/2$.
  Of course, this is a circular reasoning, since 
  we need the existence of a square integrable 
  bound state, or at least the existence of a 
  local zero energy bound state which 
  satisfies \eqref{eq:liminf cond}. The 
  rigorous argument uses the fact that 
  $H\ge 0$ is assumed to have a virtual level at 
  zero, so the operators $H_\delta$ have a negative energy ground states with negative energy for small $\delta>0$ 
  These ground states will converge to a zero energy 
  ground state of $H$ in the limit $\delta\to 0$, see Section \ref{sec:existence}.

\section{Proof of the non--existence result}\label{sec:proof non-existence}
Recall the iterated logarithms $\ln_n$ defined by  
$\ln_1(r)\coloneqq\ln(r)$ for $r>0$ and, for  
$r>e_j$, inductively by $\ln_{j+1}(r)\coloneqq\ln(\ln_j(r))$ when  
$j\in\N$. Here $e_1=1$ and $e_{j+1}= e^{e_j}$. 

A convenient sequence of functions at the edge of $L^2$-integrability near infinity is given by 
\begin{equation}\label{eq:comparison functions lower bound}
\psi_{\ell,m}(x)\coloneqq|x|^{-d/2}\prod_{j=1}^m\ln_j^{-1/2}(|x|)
\quad \text{for } |x|>e_m\,.
\end{equation}
As usual, the empty product is one, so 
$\psi_{\ell,0}(x) = |x|^{-d/2} =\psi_{d/2}(x)$.  
We still have $\psi_{\ell,m}\in \calC^\infty(\{|x|>e_m\})$, in particular, 
$\psi_{\ell,m}\in \calQ^{U_{R}}_{loc}(H)$, for 
$R\ge e_m$ and any 
Schr\"odinger operator constructed via quadratic form methods as in Section \ref{sec:DefAndResult}.
In order to mimic the proof sketched at the end of Section \ref{sec:DefAndResult}, we need to know 
the potential $W_m$ for which $(-\Delta+W_m)\psi_{\ell,m}=0$ in $U_{e_m}= \{|x|>e_m\}$. 
This is a bit more complicated than the previous calculation for $\psi_\gamma$. 
\begin{lemma}\label{lem:Wm}
  For any $m\in\N_0$ we have $(-\Delta+W_m)\psi_{\ell,m}=0$ 
  in $U_R=\{x\in\R^d:\, |x|>R\}$, for all large enough $R\ge e_m$, where  
  \begin{equation}\label{eq:Wm}
  \begin{split}
  W_m(x)\coloneqq & \frac{d(4-d)}{4|x|^2} 
		+\frac{1}{|x|^2}\sum_{j=1}^m\prod_{k=1}^j\ln_k^{-1}(|x|)
		+\frac{1}{4|x|^2}\left(\sum_{j=1}^m\prod_{k=1}^j\ln_k^{-1}(|x|)\right)^2\\
	&+\frac{1}{2|x|^2}\sum_{j=1}^m\sum_{l=1}^j\prod_{s=1}^l\prod_{t=1}^j\ln_s^{-1}(|x|)\ln_t^{-1}(|x|)
  \end{split}
  \end{equation}
  is well--defined for $|x|>e_m$.
\end{lemma}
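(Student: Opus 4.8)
The plan is to exploit that $\psi_{\ell,m}$ is smooth, strictly positive and radial on $U_{e_m}=\{|x|>e_m\}$, so the asserted identity is a purely pointwise calculus statement: the unique $W_m$ with $(-\Delta+W_m)\psi_{\ell,m}=0$ in $U_R$ is simply the quotient $W_m=\Delta\psi_{\ell,m}/\psi_{\ell,m}$. I would compute this quotient through the logarithmic derivative. Writing $\psi_{\ell,m}=e^{f}$ with $f=\log\psi_{\ell,m}$, the elementary identity $\Delta e^{f}=(\Delta f+|\nabla f|^2)e^{f}$ gives $W_m=\Delta f+|\nabla f|^2$, which for a radial function of $r=|x|$ reduces to $W_m=f''+\tfrac{d-1}{r}f'+(f')^2$. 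Since $\log$ turns the defining product into a sum and $\log\ln_j(r)=\ln_{j+1}(r)$, one gets the clean form $f(r)=-\tfrac{d}{2}\ln_1(r)-\tfrac12\sum_{j=1}^m\ln_{j+1}(r)$.

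The computational engine is the derivative rule for the iterated logarithms. First I would prove by induction that $\tfrac{d}{dr}\ln_j(r)=\tfrac1r\prod_{k=1}^{j-1}\ln_k^{-1}(r)$; abbreviating $a_j(r)\coloneqq\prod_{k=1}^{j}\ln_k^{-1}(r)$ with $a_0=1$, this reads $\ln_j'(r)=\tfrac1r a_{j-1}(r)$, and in particular $\ln_{j+1}'(r)=\tfrac1r a_j(r)$. Differentiating $f$ then yields $f'(r)=-\tfrac{1}{2r}\big(d+S\big)$, where $S(r)\coloneqq\sum_{j=1}^m a_j(r)$ is exactly the single sum appearing in \eqref{eq:Wm}. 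For the second derivative I need $S'$, hence the $a_j'$; here the key algebraic collapse is $a_{k-1}(r)/\ln_k(r)=a_k(r)$. Applied to $\log a_j=-\sum_{k=1}^j\log\ln_k(r)$ it gives $a_j'(r)=-\tfrac1r a_j(r)\sum_{k=1}^j a_k(r)$, and therefore $S'(r)=-\tfrac1r T(r)$ with $T(r)\coloneqq\sum_{j=1}^m\sum_{k=1}^j a_j(r)a_k(r)$, which is precisely the double sum in \eqref{eq:Wm}.

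With these in hand the rest is bookkeeping: one finds $f''(r)=\tfrac{1}{2r^2}(d+S+T)$, while $\tfrac{d-1}{r}f'(r)=-\tfrac{d-1}{2r^2}(d+S)$ and $(f')^2=\tfrac{1}{4r^2}(d+S)^2$. Adding the three and collecting the outcome into the four groups that are constant in $S,T$, linear in $S$, quadratic in $S$, and linear in $T$, I would verify that the $d$-only terms combine to $\tfrac{d(4-d)}{4r^2}$, the $S$-terms to $\tfrac{S}{r^2}$, the $S^2$-term to $\tfrac{S^2}{4r^2}$, and the $T$-term to $\tfrac{T}{2r^2}$, reproducing \eqref{eq:Wm} exactly. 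All quantities are well-defined and the $\ln_j(r)$ are positive for $r>e_m$, so the identity holds throughout $U_{e_m}$, hence on any $U_R$ with $R\ge e_m$.

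I expect the only genuine obstacle to be keeping the nested-product bookkeeping straight, in particular verifying the collapse $a_{k-1}/\ln_k=a_k$ and tracking how the cross term in $(f')^2$ and the $S$-contribution from $f''$ recombine to produce the coefficients $1$, $\tfrac14$, and $\tfrac12$ of the $S$-, $S^2$-, and $T$-terms. This is entirely elementary once the derivative rule for $\ln_j$ and the formula for $a_j'$ are in place.
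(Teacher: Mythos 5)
Your proposal is correct and is essentially the paper's own argument: both reduce the lemma to the pointwise identity $W_m=\Delta\psi_{\ell,m}/\psi_{\ell,m}$ for a smooth, positive, radial function on $U_{e_m}$ and evaluate it using the radial Laplacian $\partial_r^2+\tfrac{d-1}{r}\partial_r$ together with the derivative rule $\partial_r\ln_j(r)=\tfrac1r\prod_{k=1}^{j-1}\ln_k^{-1}(r)$. Your only deviation is organizational rather than substantive: you differentiate $f=\log\psi_{\ell,m}$ and use $\Delta e^f=(\Delta f+|\nabla f|^2)e^f$, which turns the defining product into a sum, whereas the paper differentiates $\psi_{\ell,m}$ directly; your final collection
\begin{equation*}
2(d+S+T)-2(d-1)(d+S)+(d+S)^2=d(4-d)+4S+S^2+2T
\end{equation*}
is algebraically correct and reproduces \eqref{eq:Wm} exactly.
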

\begin{proof}
  Clearly, if  $W_m=\frac{\Delta\psi_{\ell,m}}{\psi_{\ell,m}}$ then $(-\Delta+W_m)\psi_{\ell,m}=0$ in $U_R$, for all large enough $R>0$. 
  For any radial function depending only on the radius $r=|x|$ we have 
  \begin{equation}
  	\Delta\psi(x) 
  		= \partial_r^2\psi(x) + \frac{d-1}{|x|}\partial_r\psi(x)\, .
  \end{equation}
  By a straightforward but slightly tedious calculation
  one sees that 
 \begin{equation*}
  \begin{split}
    \partial_r\psi_{\ell,m}(x)
	  	&=-\psi_{\ell,m}(x)
	  		\left(
	  			\frac{d}{2|x|}+\frac{1}{2|x|}\sum_{j=1}^m\prod_{k=1}^j\ln_k^{-1}(|x|)
	  		\right)\,,\\
    \partial_r^2\psi_{\ell,m}(x)
    	&=\psi_{\ell,m}(x)
    		\left(
    			\frac{d}{2|x|}+\frac{1}{2|x|}\sum_{j=1}^m\prod_{k=1}^j\ln_k^{-1}(|x|)
    		\right)^2
    		+\psi_{\ell,m}(x)\frac{d}{2|x|^2}
    		\\
    	&\phantom{=}
    		+\psi_{\ell,m}(x)
    			\left(
    				\frac{1}{2|x|^2}\sum_{j=1}^m\prod_{k=1}^j\ln_k^{-1}(|x|)
    				+\frac{1}{2|x|^2}\sum_{j=1}^m\sum_{l=1}^j\prod_{s=1}^l\prod_{t=1}^j\ln_s^{-1}(|x|)\ln_t^{-1}(|x|)
    				\right)
\end{split}
\end{equation*}
where we used
\begin{equation*}
\begin{split}
  \partial_r \ln_1(r)
  	=\frac 1 r\,\,\, \text{and }\,\,
  \partial_r \ln_j(r)
  	=\frac{1}{\ln_{j-1}(r)}\frac{1}{\ln_{j-2}(r)}\ldots\frac{1}{\ln_{1}(r)}\frac 1 r\,.
\end{split}
\end{equation*}
Thus 
\begin{equation*}
\begin{split}
	W_m(x)= \frac{\Delta\psi_{\ell,m}(x)}{\psi_{\ell,m}(x)}
	=\, \, & \frac{d(4-d)}{4|x|^2} 
		+\frac{1}{|x|^2}\sum_{j=1}^m\prod_{k=1}^j\ln_k^{-1}(|x|) 
		+\frac{1}{4|x|^2}\left(\sum_{j=1}^m\prod_{k=1}^j\ln_k^{-1}(|x|)\right)^2\\
	& +\frac{1}{2|x|^2}\sum_{j=1}^m\sum_{l=1}^j\prod_{s=1}^l\prod_{t=1}^j\ln_s^{-1}(|x|)\ln_t^{-1}(|x|)
\end{split}
\end{equation*}
and we have $(-\Delta+W_m)\psi_{\ell,m}=0 $ in 
$U_R$ as long as $R>0$ is large enough, so that the iterated logarithms are well-defined. 
\end{proof}
\begin{remark}
Alternatively, one can compute  $ W_m=\frac{\Delta\psi_{\ell,m}}{\psi_{\ell,m}}$ inductively. 
For radial functions $f,g$, i.e., with the usual abuse of notation $f(x)=f(r)$ and $g(x)=g(r)$ for $r=|x|$ we have 
\begin{equation}\label{eq:product rule}
\Delta(g f)= f \Delta g+2 \partial_r f \partial_r g+(\Delta f) g\,. 
\end{equation}
Using \eqref{eq:product rule} and $\psi_{\ell,m+1}(x)=\psi_{\ell,m}(x)\ln_{m+1}^{-\frac 1 2}(|x|)$  we obtain
\begin{equation*}
  W_{m+1}(x)=\frac{\Delta\psi_{\ell,m+1}(x)}{\psi_{\ell,m+1}(x)}
    = W_m + \frac{\Delta \ln_{m+1}^{-\frac 1 2}(|x|)}{\ln_{m+1}^{-\frac 1 2}(|x|)}
    	+ 2\frac{\partial_{r}\psi_{\ell,m}(x)}{\psi_{\ell,m}(x)}
    		\frac{\partial_{r}\ln_{m+1}^{-\frac 1 2}(|x|)}{\ln_{m+1}^{-\frac 1 2}(|x|)}\,.
\end{equation*}
A straightforward calculation yields
\begin{equation*}
\begin{split}
  \frac{\Delta \ln_{m+1}^{-\frac 1 2}(|x|)}{\ln_{m+1}^{-\frac 1 2}(|x|)}
	&= 
		\frac{1}{4|x|^2}\prod_{k=1}^{m+1}\ln_k^{-2}(|x|)+\frac{2-d}{2|x|^2}\prod_{k=1}^{m+1}\ln_k^{-1}(|x|)\\
	  &\phantom{==} 
		+ \frac{1}{2|x|^2}\sum_{j=1}^{m+1}\prod_{s=1}^{m+1}\prod_{t=1}^j\ln_s^{-1}(|x|)\ln_t^{-1}(|x|)\,,\\
  2\frac{\partial_{r}\psi_{\ell,m}(x)}{\psi_{\ell,m}(x)}\frac{\partial_{r}\ln_{m+1}^{-\frac 1 2}(|x|)}{\ln_{m+1}^{-\frac 1 2}(|x|)}
	&= \left(
		  \frac{d}{|x|}+\frac{1}{|x|}\sum_{j=1}^{m}\prod_{k=1}^{j}\ln_k^{-1}(|x|)
		\right)
		\left(
		  \frac{1}{2|x|}\prod_{s=1}^{m+1}\ln_s^{-1}(|x|)
		\right)\,,\\
  W_{m+1}(x) = W_m(x)
    &+
      \frac{3}{4|x|^2}\prod_{k=1}^{m+1}\ln_k^{-2}(|x|)
        + \frac{1}{|x|^2}\prod_{k=1}^{m+1}\ln_k^{-1}(|x|)\\
      &
        +\frac{1}{|x|^2}\sum_{j=1}^{m}\prod_{s=1}^{m+1}\prod_{t=1}^j\ln_s^{-1}(|x|)\ln_t^{-1}(|x|)\,.
\end{split}
\end{equation*}
Since $W_0(x)= \frac{d(4-d)}{4|x|^2}$, this yields $W_m$ via induction.
\end{remark}
Now we come to the 
\begin{proof}[Proof of Theorem \ref{thm:intro absence} {\rm: }] 
  Using Lemma \ref{lem:Wm} and $\psi_{\ell,m}\in \calQ^{U_R}_{loc}(H)$ one sees that  
  \begin{equation} \label{eq:V below Wn}
  \begin{split}
	\la \varphi, H\psi_{\ell,m}\ra 
		&= \la \nabla\varphi,\nabla\psi_{\ell,m}\ra 
			+ \la \varphi, V\psi_{\ell,m}\ra 
		  \le \la \nabla\varphi,\nabla\psi_{\ell,m}\ra 
			+ \la \varphi, W_m\psi_{\ell,m}\ra \\
		& = \la \varphi, (-\Delta +W_m)\psi_{\ell,m}\ra 
		=0 
  \end{split}
  \end{equation}
  for all $0\le \varphi\in\calC^\infty_0(U_R)$
  as soon as $V\le W_m$ in $U_R$. 
  So $\psi_{\ell,m}$ is a zero energy subsolution of 
  $H$ in $U_R$ as soon as $V\le W_m$ in $U_R$. 
  Since $V$ is in the local Kato class we know from \cite{AizSim82, Simader90,Sim82} that 
  any eigenfunction of $H$ is continuous. Moreover, it is well--known that the ground state eigenfunction can be chosen to be strictly positive \cite{Far72, Goe77, ReeSim4}. 
  
  So if $\psi>0$ is a zero energy ground state of $H$ then, with 
  $c^1_R=\inf_{R\le |x|\le R+1}\psi(x)>0$ and 
  $c^2_R= \sup_{R\le |x|\le R+1} \psi_{\ell,m}(x) <\infty $, we can set 
  $C\coloneqq c^2_R/c^1_R$ to see that 
  $\psi_{\ell,m}(x)\le C\psi(x)$ for $R\le |x|\le R+1$. Moreover, 
  $\psi$ being a zero energy solution is also a zero energy supersolution, so Theorem \ref{thm:comp-agm} shows that 
  \begin{equation}
  	\psi_{\ell,m}(x)\le C\psi(x)
  \end{equation}
 for all $|x|>R$. In particular, $\psi$ cannot be 
 square integrable as soon as $V\le W_m$ on $U_R$ for some large enough $R>0$,  since $\psi_{\ell,m}$ is positive and not in $L^2(U_R)$. 

 Finally, setting $V_m(x)=  \frac{d(4-d)}{4|x|^2} 
		+\frac{1}{|x|^2}\sum_{j=1}^m\prod_{k=1}^j\ln_k^{-1}(x)$, we note that 
  $W_m(x)\ge V_m(x)$ for all large enough $|x|$. 
  This proves Theorem \ref{thm:intro absence}.
\end{proof}
\begin{remark}
  Of course, the proof of Theorem \ref{thm:intro absence} given above shows that if 
  \begin{equation}
    V(x)\le W_m(x) \quad \text{ for all } |x|>R	
  \end{equation}
  for large enough $R>0$ and some $m\in\N_0$, then the Schr\"odinger operator $H$ cannot have any zero energy ground state. 
\end{remark}

\section{Proof of the existence result}\label{sec:existence}
For the existence result we need to modify our comparison functions $\psi_{\ell,m}$
to make them barely square integrable near infinity. 
Given an arbitrary $\varepsilon>0$ we set  
\begin{equation}\label{eq:comparison functions upper bound}
\psi_{u,m,\epsilon}(x)\coloneqq\psi_{\ell,m}(x) \ln_m^{-\varepsilon/2}(|x|) 
\end{equation}
where $\psi_{\ell,m}$ is defined in \eqref{eq:comparison functions lower  bound}. 
For each $m\in\mathbb N_0$ and $\varepsilon>0$ we have 
$\psi_{u,m,\epsilon}\in \calC^\infty(\{|x|>e_m\})$  
and it is not hard to see that for any $\veps>0$ and any large enough $R>0$ 
the function $\psi_{u,m,\epsilon}$ is barely in $L^2(U_R)$. 
The potential $Y_{m,\epsilon}$ for which $(-\Delta +Y_{m,\epsilon})\psi_{u,m,\epsilon}=0$ 
in $U_R=\{x\in\R^d:\, |x|>R\}$ 
is given by  
\begin{lemma}\label{lem:Ym}
  For any $m\in\N_0$ we have $(-\Delta+Y_{m,\epsilon})\psi_{u,m,\epsilon}=0$ 
  in $U_{e_m}=\{x\in\R^d:\, |x|>e_m\}$, where the potential $Y_{m,\epsilon}$ is given by 
  \begin{equation}\label{eq:Ym}
  \begin{split}
  Y_{m,\epsilon}(x)
  	= W_m(x)  
  	& +\frac{\epsilon^2}{4|x|^2}\prod_{k=1}^{n}\ln_k^{-2}(|x|)
  			+\frac{\epsilon}{|x|^2}\prod_{k=1}^{n}\ln_k^{-1}(|x|)\\
  	  &
  	  	+\frac{\epsilon}{|x|^2}\sum_{j=1}^{n}\prod_{k=1}^{n}\prod_{m=1}^j\ln_k^{-1}(|x|)\ln_m^{-1}(|x|)\, ,
  \end{split}
  \end{equation}
  with $W_m$ given in \eqref{eq:Wm}. 
\end{lemma}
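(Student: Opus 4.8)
The plan is to compute the logarithmic Laplacian $Y_{m,\epsilon} = \Delta\psi_{u,m,\epsilon}/\psi_{u,m,\epsilon}$ directly, exactly as $W_m$ was obtained in Lemma~\ref{lem:Wm}, but now exploiting the factorization $\psi_{u,m,\epsilon} = \psi_{\ell,m}\cdot\ln_m^{-\epsilon/2}$ from \eqref{eq:comparison functions upper bound}. First I would apply the radial product rule \eqref{eq:product rule} with $g=\psi_{\ell,m}$ and $f=\ln_m^{-\epsilon/2}$. Dividing the result by $\psi_{u,m,\epsilon}$ splits $Y_{m,\epsilon}$ into three pieces,
\[
Y_{m,\epsilon} = \frac{\Delta\psi_{\ell,m}}{\psi_{\ell,m}} + 2\frac{\partial_r\psi_{\ell,m}}{\psi_{\ell,m}}\frac{\partial_r\ln_m^{-\epsilon/2}}{\ln_m^{-\epsilon/2}} + \frac{\Delta\ln_m^{-\epsilon/2}}{\ln_m^{-\epsilon/2}}.
\]
The first term is precisely $W_m$ by Lemma~\ref{lem:Wm}, so no recomputation is needed there; this is what produces the leading $W_m$ in \eqref{eq:Ym}. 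The remaining two terms carry all of the $\epsilon$-dependence.

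Next I would evaluate the derivatives of the single factor $\ln_m^{-\epsilon/2}$. Using the chain rule together with the identity $\partial_r\ln_m(r)=\tfrac1r\prod_{k=1}^{m-1}\ln_k^{-1}(r)$ already recorded in the proof of Lemma~\ref{lem:Wm}, one obtains
\[
\frac{\partial_r\ln_m^{-\epsilon/2}}{\ln_m^{-\epsilon/2}} = -\frac{\epsilon}{2|x|}\prod_{k=1}^m\ln_k^{-1}(|x|).
\]
Feeding this, together with the already-known logarithmic derivative $\partial_r\psi_{\ell,m}/\psi_{\ell,m} = -\big(\tfrac{d}{2|x|}+\tfrac{1}{2|x|}\sum_{j=1}^m\prod_{k=1}^j\ln_k^{-1}\big)$, into the cross term produces the $\epsilon$-linear $|x|^{-2}$ contributions. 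The second-derivative piece $\Delta\ln_m^{-\epsilon/2}/\ln_m^{-\epsilon/2} = \partial_r^2\ln_m^{-\epsilon/2}/\ln_m^{-\epsilon/2} + \tfrac{d-1}{|x|}\partial_r\ln_m^{-\epsilon/2}/\ln_m^{-\epsilon/2}$ yields both the quadratic $\epsilon^2$ term and further $\epsilon$-linear terms; this is the exact analogue of the computation of $\Delta\ln_{m+1}^{-1/2}/\ln_{m+1}^{-1/2}$ carried out in the Remark following Lemma~\ref{lem:Wm}, with $m+1$ replaced by $m$ and the exponent $1/2$ replaced by $\epsilon/2$.

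Finally I would collect the three contributions and rearrange the iterated-logarithm products to match the closed form \eqref{eq:Ym}. I expect the main obstacle to be purely organizational: keeping track of the nested products $\prod_k\ln_k^{-1}$ of differing lengths and correctly combining the linear-in-$\epsilon$ pieces coming from the cross term and from $\Delta\ln_m^{-\epsilon/2}$, so that they assemble into the single sum of the form $\tfrac{\epsilon}{|x|^2}\sum_j\prod_k\prod_\ell\ln_k^{-1}\ln_\ell^{-1}$ displayed in \eqref{eq:Ym}. Since every factor is smooth and nonvanishing on $U_{e_m}=\{|x|>e_m\}$, the resulting identity $(-\Delta+Y_{m,\epsilon})\psi_{u,m,\epsilon}=0$ then holds pointwise there, which is the claim. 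As a consistency check I would verify that setting $\epsilon=0$ recovers $Y_{m,0}=W_m$, matching \eqref{eq:Wm}.
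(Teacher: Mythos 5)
Your proposal is correct and takes essentially the same approach as the paper: apply the radial product rule \eqref{eq:product rule} to the factorization $\psi_{u,m,\epsilon}=\psi_{\ell,m}\,\ln_m^{-\epsilon/2}$, identify $\Delta\psi_{\ell,m}/\psi_{\ell,m}=W_m$ via Lemma \ref{lem:Wm}, and evaluate the two remaining $\epsilon$-dependent terms from the logarithmic derivatives. In fact your indexing (the factor $\ln_m^{-\epsilon/2}$, consistent with \eqref{eq:comparison functions upper bound}) is cleaner than the paper's own proof, which writes $\ln_{m+1}^{-\epsilon/2}$ in the intermediate step --- evidently a typo --- before arriving at the same final expression with products running up to $m$.
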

\begin{proof}
  As in the proof of Lemma~\ref{lem:Wm} we have to calculate $Y_{m,\epsilon}\coloneqq \frac{\Delta\psi_{u,m,\epsilon}}{\psi_{u,m,\epsilon}}$. We use 
  \eqref{eq:product rule} to see that 
  \begin{equation*}
	\begin{split}
	  Y_{m,\epsilon}(x)
	  	&=	\frac{\Delta\psi_{\ell,m}(x)}{\psi_{\ell,m}(x)} 
	  			+\frac{\Delta \ln_{m+1}^{-\frac\epsilon 2}(|x|)}{\ln_{m+1}^{-\frac\epsilon 2}(|x|)}
	  		    +2\frac{\partial_r \ln_{m+1}^{-\frac\epsilon 2}(|x|)}{\ln_{m+1}^{-\frac\epsilon 2}(|x|)}
	  		  	\frac{\partial_r \psi_{\ell,m}(|x|)}{\psi_{\ell,m}(|x|)}\\
		&=W_m(x)
			+\frac{\epsilon^2}{4|x|^2}\prod_{k=1}^{m}\ln_k^{-2}(|x|)
			+\frac{\epsilon}{|x|^2}
				\prod_{k=1}^{m}\ln_k^{-1}(|x|)\\
		&\phantom{=} +\frac{\epsilon}{|x|^2}\sum_{j=1}^{m}\prod_{s=1}^{m}\prod_{t=1}^j\ln_s^{-1}(|x|)\ln_t^{-1}(|x|) 
\end{split}
\end{equation*}
which is \eqref{eq:Ym}.    
\end{proof}
We want to show that ground states of Schr\"odinger operators $H$ with critical potentials $V$ exist using suitable eigenfunctions of $H_\lambda$. 
For this the following is convenient. 

\begin{lemma}\label{lem:existence bound state}
  Assume that the potential $V$ satisfies Assumption \ref{assumption} 
  and $W$ is a positive potential which is infinitesimally form small w.r.t.\ $-\Delta+V_+$.    
  Let  $(H_\lambda)_{\lambda\ge 0}$ be the family of 
  Schr\"odinger operators constructed in Remark \ref{rem:construction H-lambda}.  
  Moreover, assume that there exists a sequence $0<\lambda_n\to 0$ as $n\to\infty$ such that 
  the operators $H_n=H_{\lambda_n}$ have eigenvalues 
  $E_n=E_{\lambda_n}$ with corresponding normalized weak eigenfunctions 
  $\psi_n=\psi_{\lambda_n}$. If 
  \begin{SL}
  	\item[a{\rm)}] the sequence of eigenvalues $(E_n)_n$ of $H_n$ is bounded from above and 
  	\item[b{\rm)}] the sequence $\psi_{n}$ is a Cauchy sequence in $L^2$,
  \end{SL}
  then the sequence $\psi_{n}$ is Cauchy w.r.t.\ the quadratic form norm $\|\cdot\|_1$ given in 
  \eqref{eq:quadratic form norm}, hence its limit $\psi=\lim_{n\to\infty}\psi_{n}\in \calQ(H)$.
  Moreover  $E=\lim_{n\to\infty}E_{n}$ exists and $\psi$ is  
  a normalized weak eigenfunction of $H$ with eigenvalue $E$.
\end{lemma}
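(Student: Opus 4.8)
\emph{The plan} is to test the weak eigenvalue equations against the eigenfunctions themselves to obtain a uniform bound in the form norm $\|\cdot\|_1$, and then to test the \emph{difference} of the equations for indices $n$ and $k$ against $\psi_n-\psi_k$ to show that $(\psi_n)_n$ is Cauchy in $\|\cdot\|_1$. Throughout I abbreviate $Q(\psi)\coloneqq\|\nabla\psi\|^2+\|\sqrt{V_+}\psi\|^2$, so that $\|\psi\|_1^2=\|\psi\|^2+Q(\psi)$, and I write the sesquilinear form of $H$ as $h(\varphi,\psi)\coloneqq\la\nabla\varphi,\nabla\psi\ra+\la\sqrt{V_+}\varphi,\sqrt{V_+}\psi\ra-\la\sqrt{V_-}\varphi,\sqrt{V_-}\psi\ra$, so that $\la\varphi,H_n\psi\ra=h(\varphi,\psi)-\lambda_n\la\sqrt W\varphi,\sqrt W\psi\ra$. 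I fix once and for all one $a'>0$ in the infinitesimal form bound for $W$ with corresponding $b'$, and use \eqref{eq:form small} for $V_-$ with relative bound $a<1$.

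\emph{Step 1 (uniform form bound).} Testing $\la\varphi,(H_n-E_n)\psi_n\ra=0$ with $\varphi=\psi_n$ and using $\|\psi_n\|=1$ gives $E_n=Q(\psi_n)-\|\sqrt{V_-}\psi_n\|^2-\lambda_n\|\sqrt W\psi_n\|^2$. Inserting the two form bounds and rearranging yields $(1-a-\lambda_n a')Q(\psi_n)\le E_n+b+\lambda_n b'$; since $\lambda_n\to0$ the coefficient stays above $(1-a)/2>0$ for large $n$, so assumption (a) forces $\sup_nQ(\psi_n)<\infty$. Dropping instead the nonnegative term gives $E_n\ge -b-\lambda_n b'$, so $(E_n)_n$ is bounded \emph{below} as well. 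In particular $\|\psi_n\|_1$ and $\|\sqrt W\psi_n\|\le (a'Q(\psi_n)+b')^{1/2}$ are uniformly bounded.

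\emph{Step 2 (Cauchy in $\|\cdot\|_1$).} Writing $\eta=\psi_n-\psi_k$, I subtract the weak equations for $n$ and $k$ and test against $\varphi=\eta$; the left side becomes $h(\eta,\eta)$. Using the elementary identities $E_n\la\eta,\psi_n\ra-E_k\la\eta,\psi_k\ra=E_k\|\eta\|^2+(E_n-E_k)\la\eta,\psi_n\ra$ and its $\lambda$-weighted $W$-analogue, bounding $h(\eta,\eta)\ge(1-a)Q(\eta)-b\|\eta\|^2$ on the left via \eqref{eq:form small}, and absorbing the single leftover term $\lambda_k\|\sqrt W\eta\|^2\le\lambda_k(a'Q(\eta)+b'\|\eta\|^2)$, one arrives at
\begin{align*}
(1-a-\lambda_k a')Q(\eta)
  &\le b\|\eta\|^2+|E_k|\,\|\eta\|^2+|E_n-E_k|\,|\la\eta,\psi_n\ra| \\
  &\quad +\lambda_k b'\|\eta\|^2+|\lambda_n-\lambda_k|\,|\la\sqrt W\eta,\sqrt W\psi_n\ra|.
\end{align*}
Each term on the right tends to $0$ as $n,k\to\infty$: assumption (b) kills every $\|\eta\|^2$; $|E_k|$ is bounded by Step 1 while $|E_n-E_k|$ is bounded and $|\la\eta,\psi_n\ra|\le\|\eta\|\to0$; and $|\lambda_n-\lambda_k|\to0$ with $|\la\sqrt W\eta,\sqrt W\psi_n\ra|$ bounded by Step 1. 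Since $1-a-\lambda_k a'\ge(1-a)/2$ for large $k$, this forces $Q(\psi_n-\psi_k)\to0$, hence $\|\psi_n-\psi_k\|_1\to0$. As $\calQ(H)=H^1(\R^d)\cap\calQ(V_+)$ is complete for $\|\cdot\|_1$, the limit $\psi=\lim_n\psi_n\in\calQ(H)$ exists and satisfies $\|\psi\|=1$.

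\emph{Step 3 (limits and eigenvalue equation), and the main obstacle.} Because $\sqrt{V_\pm}$ are bounded from $(\calQ(H),\|\cdot\|_1)$ into $L^2$, the form $h$ is $\|\cdot\|_1$-continuous; thus $E_n=h(\psi_n,\psi_n)-\lambda_n\|\sqrt W\psi_n\|^2\to h(\psi,\psi)=:E$ using $\lambda_n\to0$ and the uniform bound on $\|\sqrt W\psi_n\|$. Finally, for fixed $\varphi\in\calC^\infty_0(\R^d)$ I let $n\to\infty$ in $h(\varphi,\psi_n)-\lambda_n\la\sqrt W\varphi,\sqrt W\psi_n\ra-E_n\la\varphi,\psi_n\ra=0$; the $\|\cdot\|_1$-convergence $\psi_n\to\psi$, together with $\lambda_n\to0$ and $E_n\to E$, gives $\la\varphi,(H-E)\psi\ra=0$, so $\psi$ is a normalized weak eigenfunction of $H$ with eigenvalue $E$. \emph{The main obstacle} is Step 2: shrinking the form norm of $\psi_n-\psi_k$ requires simultaneously the coercivity furnished by the factor $1-a$, the absorption of the $\lambda_k$-weighted $W$-contribution (possible precisely because $\lambda_k\to0$ and $W$ is infinitesimally form small), and the vanishing of the cross terms. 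The delicate point is that one must \emph{not} assume $(E_n)$ to be Cauchy a priori—only its boundedness from Step 1 is used, the term $|E_n-E_k|\,|\la\eta,\psi_n\ra|$ being controlled through the $L^2$-smallness of $\la\eta,\psi_n\ra$ rather than through smallness of $E_n-E_k$.
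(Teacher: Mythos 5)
Your proposal is correct, and its analytic core coincides with the paper's own proof: Step 1 (testing the weak equation with $\psi_n$ to get the uniform bound $(1-a-\lambda_n a')\|\psi_n\|_1^2\le E_n+b+\lambda_n b'$) and Step 2 (testing the difference of the two weak equations against $\psi_n-\psi_k$, bounding $h(\eta,\eta)$ below by coercivity and absorbing the $\lambda$-weighted $W$-term) are exactly the paper's argument, up to bookkeeping — the paper estimates the right-hand side by Cauchy--Schwarz term by term, whereas you first use the telescoping identities to isolate $E_k\|\eta\|^2$ and $\lambda_k\|\sqrt W\eta\|^2$, which is tidier but equivalent. Where you genuinely diverge is the convergence of the eigenvalues. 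The paper first proves the eigenfunction property \emph{under the additional assumption} that $E=\lim_n E_n$ exists, and then removes that assumption by a contradiction argument: if $(E_n)$ had two limit points $E_1\neq E_2$, the single limit function $\psi$ would be a weak eigenfunction for two distinct eigenvalues, which is impossible. You instead identify the limit directly, writing $E_n=h(\psi_n,\psi_n)-\lambda_n\|\sqrt W\psi_n\|^2$ and invoking the $\|\cdot\|_1$-continuity of the form $h$ together with $\sup_n\|\sqrt W\psi_n\|<\infty$ to conclude $E_n\to h(\psi,\psi)$, after which the passage to the limit in $\la\varphi,(H_n-E_n)\psi_n\ra=0$ is immediate. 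Your route is slightly shorter and avoids the subsequence extraction; the paper's route has the minor virtue of not needing the (easy but additional) observation that $h$ is a bounded sesquilinear form on $(\calQ(H),\|\cdot\|_1)$, i.e.\ that $\|\sqrt{V_-}\varphi\|^2\le\max(a,b)\|\varphi\|_1^2$. Both are complete proofs.
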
 

\begin{remark}
  Lemma \ref{lem:regularity} shows even that $\psi\in \calD(H)$. Moreover, we do not need that 
  $V$ is in the local Kato--class, only that $V_-$ is relatively form small w.r.t.\ $-\Delta+V_+$.  
\end{remark}
\begin{remark}  
  We apply Lemma \ref{lem:existence bound state} when $\lambda_n$ converges monotonically to zero and  
  $E_n$ is a ground state of $H_n$, in which case 
  one can simplify the proof.  
  For example, if $E_n$ are ground state energies of $H_n$, then since as quadratic forms $H_{\lambda}\le H_{\lambda'}$ for all $0<\lambda'\le \lambda\le \lambda_0$, the limit 
  $\lim_{n\to\infty}E_n$ exists, by monotonicity. 
  However the result of Lemma \ref{lem:existence bound state} is needed when one considers not only 
  ground states, but also excited states which hit 
  the bottom of the essential spectrum. 
\end{remark} 

\begin{remark}\label{rem:E-delta bounded}
  Clearly any eigenvalue of $H_n$ is bounded from below uniformly in $n\in\N$ since $H_n\ge H_{\lambda_{max}}$ with $\lambda_{max}= \max_n\lambda_n$  
  as quadratic forms for all $n\in\N$. 
  In particular, all the eigenvalues $E_n$ are bounded uniformly in $n\in\N$ once they are bounded from above.  
  
  Moreover, if the essential spectrum of $H$ is not empty and $E_n$ is an eigenvalue of $H_n$ 
  below the essential spectrum of $H_n$, then $E_n$ is bounded from above, since as quadratic forms 
  $H_n\le H$ and by Persson's theorem \cite{LenSto19,Per60}  
  \begin{equation}
  \begin{split}
  	\inf&\sigma_{ess}(H_n)
  	  	= \lim_{R\to \infty} 
  	  	  \inf\big\{ 
  	  	  		\la \varphi, H_n\varphi \ra:\, \varphi\in \calQ(H), \|\varphi\|=1, \supp(\varphi)\subset\{|x|>R\}
  	  	  	\big\} \\
  	  &\le \lim_{R\to \infty} 
  	  		\inf\big\{ 
  	  	  		\la \varphi, H\varphi \ra:\, \varphi\in \calQ(H), \|\varphi\|=1, \supp(\varphi)\subset\{|x|>R\}
  	  	  	\big\} 
  	  		= \inf\sigma_{ess}(H)\, .
  \end{split}
  \end{equation}
  Thus  
  \begin{equation*}
  -\infty < \inf\sigma(H_{\lambda_{max}})\le \inf\sigma(H_n)\le E_n\le \inf\sigma_{ess}(H_n)\le \inf\sigma_{ess}(H)   	
  \end{equation*}
 for all $n\in\N$, which shows that $\sup_n|E_n|<\infty$  as soon as 
  $\sigma_{ess}(H)$ is not empty.  
\end{remark}
\begin{proof}[Proof of Lemma \ref{lem:existence bound state}{\rm:}] 
  Let $\psi_n$ be a normalized sequence of 
  eigenfunctions of $H_n$ with eigenvalue $E_n$ which is also a Cauchy sequence in $L^2$.  
  In particular,
  \begin{align}\label{eq:H-n eigenfunction 1}
  	\la\psi_n, H_n\psi_n \ra = E_n\la \psi_n, \psi_n\ra = E_n\, . 
  \end{align}
  Let $0<a_1<1$ and $b_1\ge 0$, respectively $a_2,b_2\ge0$,  
 such that  \eqref{eq:form small}, respectively \eqref{eq:form small} with $V_-$ replaced by $W$, holds. 
 Then 
 \begin{align*}
 	\la\psi_n, H_n\psi_n \ra 
 		&= \|\nabla\psi_n\|^2 + \|\sqrt{V_+}\psi_n\|^2 - \|\sqrt{V_-}\psi_n\|^2 -\lambda_n \|\sqrt{W}\psi_n\|^2 \\
 		&\ge \|\nabla\psi_n\|^2 + \|\sqrt{V_+}\psi_n\|^2 - (a_1+\lambda_n a_2)\|\nabla\psi_n\|^2 -(b_1+\lambda_n b_2)\|\psi_n\|^2 \\
 		&
 			\ge (1-a_1 - \lambda_n a_2)\|\psi_n\|_1^2 -(b_1 +\lambda_n b_2) \, ,
 \end{align*}
 since $\psi_n$ is normalized. We also used the 
 quadratic form norm $\|\cdot\|_1$  given 
 by \eqref{eq:quadratic form norm}. 
 Using \eqref{eq:H-n eigenfunction 1} this implies  
 \begin{equation}\label{eq:form norms bounded}
 	(1-a_1-\lambda_n a_2)\|\psi_n\|_1^2
 		\le b_1+\lambda_n b_2 + E_n \, ,
 \end{equation}
 which shows that we have 
  $\limsup_{n\to\infty}\|\psi_n\|_1<\infty$,  since 
 $a_1<1$, $\lambda_n\to 0$ for $n\to \infty$, and $E_n$ is bounded from above 
 uniformly in $n\in\N$.  Thus both the Sobolev norm 
 $\|\psi_n\|_{H^1}^2= \|\psi_n\|^2+\|\nabla\psi_n\|^2$ 
 and $\|\sqrt{V_+}\psi_n\|$ are bounded in $n$. 
 
 Now consider
 \begin{equation}\label{eq:convergence 1}
  \begin{split}
 	\la\varphi, H(\psi_n-\psi_m)\ra 
 		&= \la\varphi, H_n\psi_n\ra + \lambda_n\la \varphi, W \psi_n\ra 
 			- \la\varphi, H_m\psi_m\ra - \lambda_m\la \varphi, W \psi_m\ra \\
 		&= E_n\la\varphi, \psi_n\ra + \lambda_n\la \varphi, W \psi_n\ra
 			- E_m\la\varphi, \psi_m\ra - \lambda_m\la \varphi, W \psi_m\ra
  \end{split}
 \end{equation}
 for $\varphi\in\calQ(H)$. 
 The choice $\varphi=\psi_n-\psi_m$ and Cauchy--Schwarz yields  
 \begin{equation}\label{eq:convergence 2}
   \begin{split}
 	\la \psi_n &-\psi_m,  H(\psi_n-\psi_m)\ra \\
 		&\le |E_n|\|\varphi\|\|\psi_n\| + \lambda_n \|\sqrt{W}\varphi\|\|\sqrt{W} \psi_n\|  
 			+ |E_m|\|\varphi\|\|\psi_m\| + \lambda_m \|\sqrt{W}\varphi\|\|\sqrt{W}\psi_m\| \\
 		&\le  \big(|E_n|\|\psi_n\| + |E_m|\|\psi_m\|\big)\|\varphi\|  
 				+\frac{\lambda_n}{2} \big( \|\sqrt{W}\varphi\|^2+ \|\sqrt{W} \psi_n\|^2 \big)
 			\\
 			&\phantom{\le~~} 	
 			    +\frac{\lambda_m}{2} \big( \|\sqrt{W}\varphi\|^2+ \|\sqrt{W} \psi_m\|^2 \big) \\
 		&\le \big(|E_n|\|\psi_n\| + |E_m|\|\psi_m\|\big)\|\varphi\|  
 				+\frac{\lambda_n+\lambda_m}{2} \big( a_2\|\nabla\varphi\|^2+ b_2 \| \varphi\|^2 \big)
 			\\
 			&\phantom{\le~~} 	
 			    +\frac{\lambda_n}{2} \big( a_2 \|\nabla \psi_n\|^2+ b_2 \|\psi_n\|^2 \big)
+\frac{\lambda_m}{2} \big( a_2 \|\nabla \psi_m\|^2+ b_2 \|\psi_m\|^2 \big)
   \end{split}
 \end{equation}

 On the other hand, 
 \begin{align*}
 	\la \psi_n-\psi_m, H(\psi_n-\psi_m)\ra 
 		&= \|\nabla(\psi_n-\psi_m)\|^2 + \|\sqrt{V_+}(\psi_n-\psi_m)\|^2 
 				- \|\sqrt{V_-}(\psi_n-\psi_m)\|^2 \\
		&\ge (1-a_1)\|\nabla(\psi_n-\psi_m)\|^2 + \|\sqrt{V_+}(\psi_n-\psi_m)\|^2 
 				- b_1\|\psi_n-\psi_m\|^2 \, 
 \end{align*}
 and plugging this lower bound into \eqref{eq:convergence 2} and using that $\psi_n$ is normalized  we arrive at     	
 \begin{equation}\label{eq:convergence 3}
  \begin{split}
	 \big(1 -a_1-\frac{\lambda_n+\delta_m}{2}a_2\big)\|&\nabla(\psi_n-\psi_m)\|^2 + \|\sqrt{V_+}(\psi_n-\psi_m)\|^2 \\
 		&\le 
 			  \big(|E_n| + |E_m|+\frac{\lambda_n+\lambda_m}{2}b_2\big)\|\psi_n-\psi_m\| + b_1\|\psi_n-\psi_m\|^2 \\
 		  &\phantom{\le~~} 	
 			    +\frac{\lambda_n}{2} \big( a_2 \|\nabla \psi_n\|^2+ b_2 \big)
+\frac{\lambda_m}{2} \big( a_2 \|\nabla \psi_m\|^2+ b_2 \big)\, .
  \end{split}
 \end{equation}
 By assumption and Remark \ref{rem:E-delta bounded}, the sequence of eigenvalues $E_n$ is bounded and, 
 because of \eqref{eq:form norms bounded},  we also have that $\|\nabla\psi_n\|$ is bounded uniformly in $n\in\N$.  
 Since $\lambda_n\to 0$ and $\|\psi_n-\psi_m\|\to 0$ as $n,m\to\infty$,  \eqref{eq:convergence 3} implies
 \begin{equation*}
 	\limsup_{n,m\to\infty}\Big((1-a_1)\|\nabla(\psi_n-\psi_m)\|^2 + \|\sqrt{V_+}(\psi_n-\psi_m)\|^2 \Big) \le 0\, .
 \end{equation*}
 That is, the sequence of normalized weak eigenfunctions $\psi_n$ of $H_n$ is Cauchy in $\calQ(H)$ with respect to 
 the form norm $\|\cdot\|_1$ as soon as  it is Cauchy in $L^2$ and the sequence of eigenvalues $(E_n)_{n\in\N}$ is bounded.  
In particular, the limit $\psi=\lim_{n\to\infty}\psi_{n}$ exists in $\calQ(H)$. Thus  
$\|\nabla\psi\|= \lim_{n\to\infty}\|\nabla\psi_{n}\|$, $\|\sqrt{V_+}\psi\|= \lim_{n\to\infty}\|\sqrt{V_+}\psi_{n}\|$, and, since $W\ge 0$ is form bounded w.r.t. $-\Delta+V_+$ also $\|\sqrt{W}\psi\|= \lim_{n\to\infty}\|\sqrt{W}\psi_{n}\|$.
Hence $\sup_n \|\sqrt{W}\psi_n\| <\infty$.  
 
 Now assume additionally that $E_{n}$ converges to some $E$ as $n\to\infty$.    
 In this case, using that $\psi_n$ converges to $\psi$ in $\calQ(H)$ we get    
 \begin{equation}\label{eq:limit is weak eigenfunction}
 \begin{split}
 	\la \varphi, H\psi\ra &= \lim_{n\to\infty} \la \varphi, H\psi_n\ra 
 		= \lim_{n\to\infty}\big(\la \varphi, H_n\psi_n\ra + \lambda_n\la \sqrt{W}\varphi, \sqrt{W}\psi_n\ra \big)\\
 		&= \lim_{n\to\infty}\big(E_n\la \varphi, \psi_n\ra + \lambda_n \la \sqrt{W}\varphi, \sqrt{W}\psi_n\ra \big)
 			= E\la \varphi, \psi\ra
 \end{split}
 \end{equation}
 for all $\varphi\in\calQ(H)$ since $\lambda_n\to 0$ and 
 $\sup_n|\la \sqrt{W}\varphi, \sqrt{W}\psi_n\ra | \le \|\sqrt{W}\varphi\|\sup_n\|\sqrt{W}\psi_n\|<\infty$. 
 Thus we proved that the limit $\psi=\lim_{n\to\infty}\psi_n\in\calQ(H)$ 
 exists, $\|\psi\|=1$, and $\psi$ is a weak eigenfunction of $H$ with eigenvalue $E=\lim_{n\to\infty}E_n$ 
 under the \emph{additional assumption} that the limit  $E=\lim_{n\to\infty}E_n$ exists.  
 \smallskip
 
 Finally, it is easy to see that the sequence of eigenvalues $E_n$ must converge.  
 Assume that $E_n$ does not converge as $n\to\infty$.  
 Since $E_n$ is bounded in $n\in\N$, there exist two different limit points  $E_1\not = E_2$ of 
 $E_n$ corresponding to two subsequences 
 $E_{\sigma_1(n)}\to E_1$ and $E_{\sigma_2(n)}\to E_2$ where $\sigma_1, \sigma_2:\N\to\N$ are 
 strictly increasing functions.  
 
 Clearly $\psi=\lim_{n\to\infty}\psi_{\sigma_1(n)}= \lim_{n\to\infty}\psi_{\sigma_2(n)}$. 
 So \eqref{eq:limit is weak eigenfunction} shows that $\psi$ is a weak eigenfunction of $H$ corresponding 
 to the two different eigenvalues $E_1$ and $E_2$, which is impossible. 
 Hence the eigenvalues $E_n$ converges. This  
 finishes the proof of Lemma \ref{lem:existence bound state}.   
\end{proof}

\begin{lemma}\label{lem:local uniform boundedness of eigenfunctions of H-delta}
  Assume that the potentials $V$ and $W$ satisfy Assumption \ref{assumption}, except that the 
  relative bound of $W$ does not have to be less than one.  
  Let $(H_\lambda)_{0\le \lambda\le \lambda_0}$ be the family of 
  perturbed Schr\"odinger operators 
  constructed in Remark \ref{rem:construction H-lambda} for some small enough $0<\lambda_0$.  
  Moreover, assume that for some sequence $0<\lambda_n \le \lambda_0$ the operators $H_n=H_{\lambda_n}$ have eigenvalues 
  $E_n$ with corresponding weak eigenfunctions  $\psi_n$. 
  
  If $\|\psi_n\|=1$ for all $n\in\N$ and $\sup_{n}E_n<\infty$,    
  then the weak eigenfunctions $\psi_n$ are pointwise
  locally  bounded uniformly in $n\in\N$, i.e., 
  \begin{equation}\label{eq:locally uniformly bounded}
  	\sup_{n\in\N} \sup_{x\in S}|\psi_n(x)| <\infty 
  \end{equation}
  for any bounded set $S\subset\R^d$. 
\end{lemma}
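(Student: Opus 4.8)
The plan is to establish the local uniform boundedness \eqref{eq:locally uniformly bounded} via a Moser-type iteration / subsolution estimate, exploiting the fact that $|\psi_n|$ is a subsolution of a Schr\"odinger operator whose potential is controlled uniformly in $n$. The starting point is Kato's inequality: since each $\psi_n$ is a weak eigenfunction of $H_n$, Lemma \ref{lem:regularity} gives $\psi_n\in\calD^{\R^d}_{loc}(H_n)$, and $(-\Delta+V-\lambda_n W)\psi_n=E_n\psi_n$ distributionally. Kato's inequality then yields, in the distributional sense,
\begin{equation*}
	-\Delta|\psi_n| \le \sgn(\ol{\psi_n})\,(-\Delta\psi_n)
		= -\big(V-\lambda_n W-E_n\big)|\psi_n| \le \big(V_- + \lambda_n W + E_n\big)|\psi_n|\,,
\end{equation*}
so $|\psi_n|\ge 0$ is a weak subsolution of $-\Delta + (V_+ - V_- - \lambda_n W - E_n)$ at energy zero. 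The crucial point is that the effective potential seen by $|\psi_n|$ has its \emph{negative} part $V_- + \lambda_n W + E_n^+$ bounded, \emph{uniformly in $n$}, in the local Kato norm: $V_-\in K_d$ by Assumption \ref{assumption}, $W$ is form small hence also locally Kato, $0<\lambda_n\le\lambda_0$ is bounded, and $\sup_n E_n<\infty$ by hypothesis (with the uniform lower bound from Remark \ref{rem:E-delta bounded}).

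The second step is the \textbf{uniform local bound}. The standard subsolution estimate for Schr\"odinger operators with Kato-class potentials (see Aizenman--Simon \cite{AizSim82} or \cite{Sim82}) states that a nonnegative subsolution $u$ of $-\Delta+Q$ at energy $0$ on a ball $B_{2r}(x_0)$, with $Q_-$ in the Kato class, satisfies a local bound of the form
\begin{equation*}
	\sup_{B_r(x_0)}u \le C\Big(d,r,\|Q_-\|_{K_d}\Big)\,\|u\|_{L^2(B_{2r}(x_0))}\,.
\end{equation*}
The constant depends on $Q_-$ only through its Kato norm on the relevant ball, \emph{not} on finer features of $Q$; this is precisely the mechanism that makes the bound uniform. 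Applying this with $u=|\psi_n|$ and $Q = V - \lambda_n W - E_n$, the Kato norm of $Q_-$ is bounded by a constant independent of $n$ by the first step, while $\|\,|\psi_n|\,\|_{L^2(B_{2r})}\le\|\psi_n\|=1$. Covering the bounded set $S$ by finitely many balls then gives \eqref{eq:locally uniformly bounded} with a constant independent of $n$.

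\textbf{The main obstacle} is making the uniformity in $n$ fully explicit: one must verify that the constant in the subsolution estimate depends on the potential \emph{only} through the (local) Kato norm of its negative part, and that this quantity is genuinely bounded over the family as $\lambda_n$ and $E_n$ vary. Concretely I would fix a radius $r$ and a ball $B_{2r}(x_0)$ with $x_0$ ranging over a compact neighborhood of $S$, absorb the constant $E_n^+\le \sup_n E_n^+<\infty$ and the term $\lambda_n W\le \lambda_0 W$ into a single $n$-independent Kato-class majorant $Q_-^{\max}\coloneqq V_- + \lambda_0 W + (\sup_n E_n)_+$, and invoke the estimate with this fixed majorant. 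A careful statement of the Aizenman--Simon bound (or its proof via the Feynman--Kac representation of $e^{-tH_n}$, which makes the Kato-norm dependence transparent) is what drives the argument; once the Kato norm is controlled uniformly, the $L^2$ normalization $\|\psi_n\|=1$ does the rest, and the covering argument is routine.
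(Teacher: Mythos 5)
Your proposal is correct and takes essentially the same approach as the paper: both rest on Simon's subsolution estimate (Theorem C.1.2 of \cite{Sim82}), whose constant depends on the potential only through the local Kato norm of its negative part, applied to the shifted potential $\wti{V}_n=V-\lambda_n W-E_n$, with that norm bounded uniformly in $n$ via the majorant $V_-+\lambda_0 W_+ +(\sup_n E_n)_+$ and the conclusion following from the normalization $\|\psi_n\|=1$ plus a covering argument. The only differences are cosmetic: the paper applies the estimate directly to $\psi_n$ viewed as a zero-energy weak eigenfunction of $-\Delta+\wti{V}_n$ (so your preliminary Kato-inequality reduction to $|\psi_n|$ is unnecessary), and be aware that local Kato membership of $W$ is part of the lemma's hypotheses, not (as you assert) a consequence of form smallness.
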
  
\begin{remark}
  Since eigenfunctions are continuous if the potential is locally in the Kato--class, $\psi_n(x)$ 
  makes sense for all $x\in\R^d$ and $n\in\N$.	
\end{remark}

\begin{proof}
 Note that $\psi_n$ is a zero energy weak eigenfunction of the Schr\"odinger operator 
 $\wti{H}_n$ with potential $\wti{V}_n$ given by 
 $ \wti{V}_n = V-\lambda_n W-E_n$.  If $V$ and $W$ are in the local Kato class, so is 
 $\wti{V}_n$. Hence for any $x\in\R^d$ the subsolution estimate
 \begin{equation}\label{eq:trudinger subsolution bound}
 	|\psi_n(x)| \le C_{x,n} \int_{|x-y|< 1} |\psi_n(y)|\, dy  
 \end{equation}
 holds, see \cite[Theorem C.1.2]{Sim82} and also \cite{AizSim82, Simader90}. 
 Moreover, the constants $C_{x,n}$ depend only on 
 \begin{equation*}
 	\|\ind_{B_{1}(x)}(\wti{V}_n)_-\|_{K^d}
 \end{equation*}
 with $(\wti{V}_n)_-$ being the negative part of $\wti{V}_n$ and the Kato norm $\|\cdot\|_{K^d}$ given by 
 \begin{equation}\label{eq:kato norm}
 	\|V\|_{K^d} 
 		\coloneqq
 			\sup_{x\in\R^d} \int_{|x-y|\le 1} \wti{g}_d(x-y)|V(y)|dy
 \end{equation}
 with $\wti{g}_d= g_d $ when $d\ge 3$, 
 $\wti{g}_2= 1+ g_2$, and $\wti{g}_1= 1$,  
 where $g_d$ is defined in \eqref{eq:g-d}.     
 Adding $1$ to $g_2$ is necessary since $g_2(x)=0$ when $|x|=1$. 

For any set $S\subset \R^d$ and any potential $V$ we have 
\begin{equation}
  \sup_{x\in S}\|\ind_{B_{1}(x)} V\|_{K^d} 
  	\le 
  		\|\sup_{x\in S}\ind_{B_{1}(x)} V\|_{K^d} 
  	= 
  		\|\ind_{S_1}V\|_{K^d} 
\end{equation}
where $S_1=\{y\in\R^d:\, \dist(y,S)< 1\}$. 
 
 \smallskip
 Now let $S\subset\R^d$ be bounded. Then $S_1$ is bounded and, since the Kato norm of a constant 
 function is finite and 
 $(\wti{V}_n)_-= (V-\lambda_n W -E_n)_{-}\le V_- +\lambda_n W_+ +(E_n)_+$, we have 
 \begin{equation*}
 	\sup_n\|\ind_{S_1}(\wti{V}_n)_-\|_{K^d} 
 		\le \big(\|\ind_{S_1}V_-\|_{K^d} + \sup_n \lambda_n \|\ind_{S_1}W_+\|_{K^d}+ \sup_n(E_n)_+\|1\|_{K^d}\big)
 		<\infty 
 \end{equation*}
 for any bounded set $S$, using that $\sup_nE_n<\infty$ and $\sup_n\lambda_n<\infty$, by assumption,  
 and $ \|\ind_{S_1}V_-\|_{K^d} <\infty$ and $ \|\ind_{S_1}W_+\|_{K^d} <\infty$, since $S_1$ is 
 bounded and $V$ and $W$ are  
 locally in the Kato class. 

 Thus for any bounded set $S\subset \R^d$ there exist a constant $C<\infty$ such that 
 \begin{equation}\label{eq:trudinger subsolution bound-2}
 	|\psi_n(x)| \le C \int_{|x-y|< 1} |\psi_n(y)|\, dy  
 \end{equation}
 for all $x\in S$ and $n\in\N$. Using the normalization $\|\psi_n\|=1$ we have 
 \begin{equation}
 	\int_{|x-y|< 1} |\psi_n(y)|\, dy 
 		\le 
 			|B^d_1|^{1/2}  \|\psi_n\|  
 		=  |B^d_1|^{1/2}
 \end{equation}
 for all $x\in S $ and  $n\in\N$. Hence \eqref{eq:locally uniformly bounded} 
 follows immediately from \eqref{eq:trudinger subsolution bound-2}.  
\end{proof}

The last result which we need is 
\begin{lemma}\label{lem:modulus is subsolution}
  Assume that $V\in L^1_{loc}(\R^d)$ and $V_-$ is form small w.r.t.\ $-\Delta+V_+$ and 
  $\psi$ is a real--valued weak eigenfunction of $H$ at energy $E$. 
  Then $|\psi|$ is a subsolution of $H$ at energy $E$.  
\end{lemma}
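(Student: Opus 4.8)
The plan is to establish Kato's inequality $\Delta|\psi|\ge\sgn(\psi)\Delta\psi$ in the weak form adapted to our quadratic-form setting. Concretely, I must show that $|\psi|\in\calQ^U_{loc}(H)$ and that $\langle\varphi,(H-E)|\psi|\rangle\le 0$ for every nonnegative $\varphi\in C_0^\infty(\R^d)$. By Lemma \ref{lem:regularity} the weak eigenfunction $\psi$ lies in $\calD(H)\subset H^1(\R^d)\cap\calQ(V_+)$, so that $\Delta\psi=(V-E)\psi$ holds distributionally. Since $\psi$ is real-valued and in $H^1$, standard properties of Sobolev functions give $|\psi|\in H^1(\R^d)$ with $\nabla|\psi|=\sgn(\psi)\nabla\psi$ almost everywhere (and $\nabla\psi=0$ a.e.\ on $\{\psi=0\}$); together with $|\sqrt{V_+}\,|\psi||=|\sqrt{V_+}\psi|\in L^2$ this shows $|\psi|\in\calQ(H)$. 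To avoid the non-smoothness of $t\mapsto|t|$ at the origin I would regularize, setting $\psi_\epsilon\coloneqq\sqrt{\psi^2+\epsilon^2}$ for $\epsilon>0$, so that $\psi_\epsilon\ge\epsilon$, $\psi_\epsilon\downarrow|\psi|$, and $\nabla\psi_\epsilon=\psi\nabla\psi/\psi_\epsilon$ with $|\nabla\psi_\epsilon|\le|\nabla\psi|$.

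The key step is to feed the test function $\chi_\epsilon\coloneqq\varphi\,\psi/\psi_\epsilon$ into the weak eigenvalue equation. Since $\psi/\psi_\epsilon$ is bounded by $1$ with $L^2_{loc}$ gradient $\epsilon^2\psi_\epsilon^{-3}\nabla\psi$, one checks that $\chi_\epsilon\in\calQ(H)$ has compact support, so the remark following Definition \ref{def:eigenfunctions etc} allows me to use $\langle\chi_\epsilon,(H-E)\psi\rangle=0$. Expanding $\nabla\chi_\epsilon=(\psi/\psi_\epsilon)\nabla\varphi+\varphi\,\epsilon^2\psi_\epsilon^{-3}\nabla\psi$ and using $(\psi/\psi_\epsilon)\nabla\psi=\nabla\psi_\epsilon$, this identity becomes
\begin{equation*}
\langle\nabla\varphi,\nabla\psi_\epsilon\rangle+\int\varphi\,\frac{\epsilon^2}{\psi_\epsilon^3}|\nabla\psi|^2\,dx+\int\varphi\,\frac{\psi^2}{\psi_\epsilon}(V-E)\,dx=0 .
\end{equation*}
Because $\varphi\ge 0$, the middle term is nonnegative and may be dropped, leaving
\begin{equation*}
\langle\nabla\varphi,\nabla\psi_\epsilon\rangle+\int\varphi\,\frac{\psi^2}{\psi_\epsilon}(V-E)\,dx\le 0 .
\end{equation*}

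It then remains to pass to the limit $\epsilon\to 0$. For the gradient term, $\nabla\psi_\epsilon\to\nabla|\psi|$ a.e.\ and $|\nabla\varphi\cdot\nabla\psi_\epsilon|\le|\nabla\varphi|\,|\nabla\psi|\in L^1$, so dominated convergence gives $\langle\nabla\varphi,\nabla\psi_\epsilon\rangle\to\langle\nabla\varphi,\nabla|\psi|\rangle$. For the potential term, $\psi^2/\psi_\epsilon\uparrow|\psi|$ pointwise and is dominated by $|\psi|$, so I need $\varphi\,|V-E|\,|\psi|\in L^1$; this is exactly where the main obstacle lies, since $V$ is only assumed to be in $L^1_{loc}$ and $\psi$ need not be locally bounded. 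I would obtain it from the form structure: writing $\int\varphi V_\pm|\psi|=\int(\sqrt{V_\pm}\varphi)(\sqrt{V_\pm}|\psi|)$ and using Cauchy--Schwarz together with $\sqrt{V_+}\psi\in L^2$ (as $\psi\in\calQ(V_+)$) and $\sqrt{V_-}\psi\in L^2$ (form-smallness of $V_-$) shows $V\psi\in L^1_{loc}$, hence $\varphi(V-E)|\psi|\in L^1$. Dominated convergence then yields $\int\varphi\,(\psi^2/\psi_\epsilon)(V-E)\to\int\varphi(V-E)|\psi|$, and combining the two limits gives $\langle\nabla\varphi,\nabla|\psi|\rangle+\langle\varphi,(V-E)|\psi|\rangle\le 0$, i.e.\ $\langle\varphi,(H-E)|\psi|\rangle\le 0$ for all nonnegative $\varphi\in C_0^\infty(\R^d)$. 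This is precisely the assertion that $|\psi|$ is a subsolution of $H$ at energy $E$.
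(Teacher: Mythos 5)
Your proof is correct, but it takes a genuinely different route from the paper. The paper's own proof is a three-line reduction to an external result: a real-valued weak eigenfunction is in particular a subsolution, so \cite[Lemma 2.9]{Agm85} gives that its positive part $\psi_+=\sup(\psi,0)$ is a subsolution; applying the same to $-\psi$ (also a weak eigenfunction) shows $\psi_-$ is a subsolution, and $|\psi|=\psi_++\psi_-$ is then a subsolution because the defining inequality is linear in $u$. You instead prove the statement from scratch by the classical Kato-inequality argument: regularizing with $\psi_\epsilon=\sqrt{\psi^2+\epsilon^2}$, testing the weak eigenvalue equation against $\varphi\,\psi/\psi_\epsilon$ (legitimate by the remark after Definition \ref{def:eigenfunctions etc}, since this test function lies in $\calQ(H)$ and has compact support), discarding the manifestly nonnegative term $\int\varphi\,\epsilon^2\psi_\epsilon^{-3}|\nabla\psi|^2\,dx$, and passing to the limit $\epsilon\to 0$. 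Your handling of the one genuinely delicate point --- integrability of $\varphi\,|V-E|\,|\psi|$ when $V$ is merely $L^1_{loc}$ with form-small negative part --- via the factorization $\int\varphi V_\pm|\psi|\,dx=\int(\sqrt{V_\pm}\varphi)(\sqrt{V_\pm}|\psi|)\,dx$ and Cauchy--Schwarz is exactly what is needed, and the measure-theoretic facts you invoke ($\nabla|\psi|=\sgn(\psi)\nabla\psi$ and $\nabla\psi=0$ a.e.\ on $\{\psi=0\}$) are standard for $H^1$ functions. What each approach buys: the paper's version is short and consistent with its systematic reliance on Agmon's framework (the same reference supplies the comparison principle, Theorem \ref{thm:comp-agm}), but it outsources the analytic content; yours is self-contained and makes visible precisely how the low regularity of $V$ is absorbed by the quadratic-form structure, in effect reproving the special case of Agmon's Lemma 2.9 that the paper cites.
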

\begin{proof}
  If $\psi$ is real-valued eigenfunction of $H$ at energy $E$ then it is also a subsolution, hence 
  \cite[Lemma 2.9]{Agm85} shows that its positive part $\psi_+=\sup(\psi, 0)$ is a subsolution. 
  The same argument applied to $-\psi$, which is also a weak solution, shows that its negative part 
  $\psi_-=\sup(-\psi,0)$ is a subsolution. Hence $|\psi|= \psi_++\psi_-$ is a subsolution of $H$
  at energy $E$.   	
\end{proof}
\begin{remark}\label{rem:subsolutions complex valued}
  It is well--known that for the type of Schr\"odinger operator $H$ we consider here the eigenfunctions can be chosen to be real--valued. 
  Since $H$ is self--adjoint all eigenvalues are real. Moreover, $H$ commutes with complex conjugation, so 
  for any complex--valued eigenfunction $\psi$ of $H$ also the real and imaginary parts 
  $\re(\psi)=\frac{1}{2}(\psi+\ol{\psi})$ and $\im(\psi)=\frac{1}{2i}(\psi-\ol{\psi})$ are 
  eigenfunction of $H$ at energy $E$. 
  This is not true anymore if one considers Schr\"odinger operators with magnetic fields, 
  since they do not commute with complex conjugation, in general. 
\end{remark}

Now we are ready to give the  
\begin{proof}[Proof of Theorem \ref{thm:intro existence} {\rm:}]
%
  By assumption, the potential $V$ is critical.	
  Thus $\sigma(H)=\sigma_{ess}(H)=[0,\infty)$. 
  Moreover, for any non--trivial potential $W\ge 0$ 
  which is infinitesimally form small w.r.t.\ $-\Delta+V_+$ and has compact support the Schr\"odinger operators 
  $H_{\lambda}= H-\lambda W$, constructed in Remark \ref{rem:construction H-lambda},  have non-trivial discrete spectrum below zero. 
  That is,  $\sigma_{ess}(H_\lambda)= [0,\infty)$ and there exist eigenvalues $E_\lambda<0$ of $H_\lambda$ with associated normalized weak eigenfunctions 
  $\psi_\lambda$ for all $\lambda>0$. 
  We take any sequence  $(\lambda_n)_{n\in\N}$ which is  monotonically decreasing to zero and abbreviate $H_n=H_{\lambda_n}$, $E_n=E_{\lambda_n}$, and $\psi_n=\psi_{\lambda_n}$. 
  
  Recall that we also assume that the potential $V$ satisfies the lower bound 
  \begin{equation*}
  	V(x) \geq  \frac{d(4-d)}{4|x|^2}	+\frac{1}{|x|^2}\sum_{j=1}^m\prod_{k=1}^j\ln_k^{-1}(|x|)+\frac{2\epsilon}{|x|^2}\prod_{k=1}^m\ln_k^{-1}(|x|)  
  \end{equation*}
  for $|x|>R$, some $m\in\N_0$, $\epsilon>0$, and all large enough $R>0$. We replaced $\epsilon$ by $2\epsilon$ in \eqref{eq:intro existence}.  Increasing $R$, if necessary, it is easy to see that this implies  
  \begin{equation}\label{eq:lower bound V}
  	V(x)\ge Y_{m,\epsilon}(x) \text{ for all } |x|\ge R\, ,
  \end{equation}
  where the family of comparison functions $Y_{m,\epsilon}$ is defined in \eqref{eq:Ym}. 
   
  Since $W$ has compact support, we can also assume that  $R$ is so  large 
   that  its support $\supp(W)\subset B_R(x)$. Thus, with $U_R=\{|x|>R\}$ we have 
   $W\varphi = 0$ for all 
   $\varphi\in\calC^\infty_0(U_R)$.  
   Lemma \ref{lem:Ym} and \eqref{eq:lower bound V} imply        
   \begin{equation}
   \begin{split}
     \la \varphi, &(H_n-E_n) \psi_{u,m,\epsilon}\ra 
     	=      \la \varphi, (H-E_n) \psi_{u,m,\epsilon}\ra \\
     	&= \la \varphi, (-\Delta+Y_{m,\epsilon}-E_n) \psi_{u,m,\epsilon}\ra  
     		+ \la \varphi, (V- Y_{m,\epsilon})\psi_{u,m,\epsilon} \ra  
     	 \ge - E_n\la \varphi, \psi_{u,m,\epsilon} \ra
     		\ge 0
   \end{split}
   \end{equation}
   for all $0\le \varphi\in\calC^\infty_0(U_R)$. Here $\psi_{u,m,\epsilon}>0$ is 
   defined in \eqref{eq:comparison functions upper bound} and we used that $E_n\le 0$. 
   
   So for fixed $m\in\N$, large enough $R>0$, and small enough $\epsilon>0$ the  
   function $\psi_{u,m,\epsilon}$ is a supersolution of 
   $H_n$ at energy $E_n$  in $U_R$ for all $n\in\N$.  
   Moreover,  since $\|\psi_n\|=1$  we have 
   \begin{equation*}
   	 	  c^1_R\coloneqq \sup_{n\in\N} \sup_{R\le |x|\le R+1} |\psi_n(x)| <\infty 
   \end{equation*}
    by Lemma \ref{lem:local uniform boundedness of eigenfunctions of H-delta}.  
    Since $\psi_{u,m,\epsilon}>0$ is continuous away from zero, we also have  
  \begin{equation*}
  	c^2_R = \inf _{R\le |x| \le R+1 } \psi_{u,m,\epsilon}(x)>0\, 
  \end{equation*}
  and using  $C_R=c^1_R/c^2_R$ one gets $|\psi_n(x)|\le C_R \psi_{u,m,\epsilon}(x)$, hence also     
  \begin{equation}
  	\wti{\psi}_n(x)\coloneqq |\re(\psi_n(x))| + |\im(\psi_n(x))|\le \sqrt{2}|\psi_n(x)| \le \sqrt{2}C_R \psi_{u,m,\epsilon}(x)
  \end{equation}
  for all $R\le |x|\le R+1$ and all $n\in\N$.  
  Clearly, $|\psi_n|\le \wti{\psi}_n$. Since $\wti{\psi}_n$ is a nonnegative subsolution of 
  $H_n$  at energy $E_n$ by Lemma \ref{lem:modulus is subsolution} and Remark \ref{rem:subsolutions complex valued} we can  use  $w=\psi_{u,m,\epsilon}$ and $v= \widetilde{\psi}_n$ in  Theorem \ref{thm:comp-agm} to see that    
  \begin{equation}\label{eq:a priori upper bound}
  	|\psi_n(x)|\le \wti{\psi}_n(x) \le \sqrt{2}C_R \psi_{u,m,\epsilon}(x)\,\, \text{ for all } |x|\ge R
  \end{equation}
  uniformly in $n\in\N$. Since $\psi_{u,m,\epsilon}$ is square integrable at infinity 
  for any fixed $m\in\N$ and $\epsilon>0$, the bound \eqref{eq:a priori upper bound} yields tightness in 
  $x$-space, i.e., 
  \begin{equation}
  	\lim_{R\rightarrow\infty}\sup_{n\in\N}\int_{|x|>R}|\psi_n(x)|^2\textrm{d}x=0 \, .
  \end{equation}
  From \eqref{eq:form norms bounded} one gets $\sup_{n\in\N} \|\psi_n\|_{H^1}<\infty$. 
  In particular, we have  
  \begin{equation}
  	\lim_{L\rightarrow\infty}\sup_{n\in\N}\int_{|\eta|>L}|\widehat{\psi}_n(\eta)|^2\textrm{d}\eta=0 \, ,
  \end{equation}   
  which is tightness in momentum space. Here $\widehat{\psi}_n$ is the Fourier transform of $\psi_n$. 
  \smallskip
  
  Moreover, since $\psi_n$ is bounded in $H^1(\R^d)$, there exists 
  a subsequence which converges weakly in $H^1$ and $L^2$. 
  By a slight abuse of notation, we also write $\psi_n$ for this subsequence. 
  Let $\psi\in L^2(\R^d)$ be the weak limit of $\psi_n$. 
  Tightness and weak convergence then implies that $\psi_n$ converges to $\psi$ in $L^2$, 
  see e.g., \cite[Appendix A]{HunLee12}. 
  Hence $\|\psi\|= \lim_{n\to\infty}\|\psi_n\|=1$. 

 Lemma \ref{lem:existence bound state} shows that $E=\lim_{n\to\infty}E_n\le 0$ exists and that 
  $\psi$ is 
  a normalized weak eigenfunction of $H$ with eigenvalue $E$. Clearly, $E=0$ since $\sigma(H)=[0,\infty)$. 
  So zero is the ground state  eigenvalue of $H$ which is at the edge of the essential spectrum of $H$. 
  This finishes the proof of Theorem \ref{thm:intro existence}.  
\end{proof}

\begin{remark}\label{rem:existence under symmetry constraints}
  Note that we could have simplified some parts of the proof by using that ground states can be chosen to be strictly positive. 
  We intentionally avoided the use of strict positivity of ground state eigenfunctions. 
  This allows to use Theorem \ref{thm:intro existence} also for systems with symmetry restrictions, 
  or for the existence of higher eigenstates with energies above the ground state energy, provided 
  one suitably modifies the assumption of a virtual level for such systems. 
  These modifications are straightforward. 
\end{remark}

\begin{appendix}
\section{An example in search of a theorem}
\label{sec:appendix}
It is well--known that the zero potential is critical in dimensions one and two, 
see \cite{Sim76} and also \cite[Problems 1 and 2 in Chapter 45]{LanLif59-quantum-mechanics-non-relativistic}. This phenomenon can be explained by the 
non--integrability of 
$\eta\mapsto |\eta|^{-2}$ near $\eta=0$ in $\R^d$, see \cite{HoaHunRicVug17}.  
The Iorio-O'Carrol theorem \cite[Theorem XII.27]{ReeSim4} shows that shallow 
potential wells cannot create ground states  in dimension $d\ge 3$ and that 
the corresponding Schr\"odinger operators are even  unitarily equivalent to 
the free Laplacian. 

Of course, in order to construct zero energy resonances 
or zero energy ground states, 
one can take any Schr\"odinger operator $H$ which 
has essential spectrum 
$[0,\infty)$ and finitely many negative eigenvalues. 
Adding a suitable local positive perturbation then moves 
the ground state energy to zero, creating a zero 
energy resonance, or zero energy ground state, depending, 
for example, on which a priori bound from Theorem \ref{thm:intro absence} of Theorem \ref{thm:intro existence} holds.   

Specific examples of critical potentials in dimension one and two which are   
different from the zero potential seem to be  rare. 
In the following we construct a 
family of potentials $V_{\alpha,d}$ in any dimension which are critical 
for $\alpha\ge 0$, having a zero energy resonance when $0\le \alpha\le 1$ and 
a zero energy ground state when $\alpha>1$, and which are not critical 
when $\alpha<0$.   
To the best of our knowledge, our example is new.
\begin{remark}
  There are different definitions for a zero 
  energy resonance available in the literature. 
  One often calls $\psi$ a zero energy resonance if 
  it is a local positive eigenfunction of a  
  Schr\"odinger operator $H$ which is not square 
  integrable on $\R^d$ but its gradient $\nabla\psi$ 
  is square integrable. We will follow this convention, except that we also allow that the $L^2$--norm of 
  $\nabla\psi$ is logarithmically divergent at infinity.     
\end{remark}
For $\alpha\in \R$ and $d\in\N$ define the potential $V_{\alpha,d}$ on $\R^d$ by 
\begin{equation}\label{eq:V-alpha}
	V_{\alpha,d}(x)\coloneqq
		\frac{4\alpha^2 - (d-2)^2 }
		  {4\big(1+|x|^2\big)}
		  +\frac{1-(\alpha+d/2)^2}
		  {\big(1+|x|^2\big)^2}
\end{equation}
Clearly, $V_{\alpha,d}$ is bounded and goes to zero at infinity. Thus it is a Kato--class potential for all $d\ge 1$ and all $\alpha\ge 0$. In particular,  
$V_{\alpha,d}$ is both infinitesimally 
operator bounded, hence also infinitesimally 
form bounded,  w.r.t.\  $-\Delta$. Therefore    
the Schr\"odinger operator $H_{\alpha,d}= -\Delta+V_{\alpha,d}$ is a well 
defined self--adjoint operator on the domain $H^2(\R^d)$ 
with form domain $H^1(\R^d)$. 
\smallskip

The key to understanding why the potentials $V_{\alpha,d}$ are critical for all 
$\alpha\ge 0$ and $d\ge 1$, not critical for $\alpha<0$,  and switch from 
having zero energy resonances to having zero energy ground states at $\alpha=1$ 
is 

\begin{lemma}[Ground state representation of $H_{\alpha,d}$]\label{lem:ground state representation}
  Let $\alpha\in \R$, $d\ge 1$, and define 
  \begin{equation}\label{eq:psi-alpha-d}
  	\psi_{\alpha,d}(x)=(1+|x|^2)^{(2-d)/4 -\alpha/2}	
  \end{equation}
  for $x\in\R^d$ and the measure 
  \begin{equation}\label{eq:mu-alpha-d}
  	\mu_{\alpha,d}(B)= \int_B \psi_{\alpha,d}^2 \, dx \, 	
  \end{equation} 
  on the Borel sets $B$ in $\R^d$. 
  Then the map $U_{\alpha,d}:L^2(\R^d,d\mu_{\alpha,d})\to L^2(\R^d)$ given by 
  \begin{equation}\label{eq:ground state representation 0}
  	(U_{\alpha,d}\varphi)= \psi_{\alpha,d}\, \varphi   
  \end{equation}
  is unitary with 
  \begin{equation}\label{eq:ground state 1}
  	U_{\alpha,d}^{-1}(H^1(\R^d)) 
  		= 
  			\{\varphi\in L^2(\R^d,d\mu_{\alpha,d}):\, 
  				\nabla\varphi\in L^2(\R^d,d\mu_{\alpha,d})\}\, .
  \end{equation}
  Moreover, $U_{\alpha,d} H_{\alpha,d}U_{\alpha,d}^{-1} = -\Delta$ 
  in the sense that for all $\psi\in H^1(\R^d)$, the form domain 
  of $H_{\alpha,d}$, 
  \begin{equation}\label{eq:ground state representation}
  	\la \psi, H_{\alpha,d}\psi \ra 
  		= \la \nabla\psi, \nabla\psi \ra  - \la \psi, V_{\alpha,d}\psi\ra
  		=
  			\int_{\R^d} |\nabla\varphi|^2\, \psi_{\alpha,d}^2\, dx
  \end{equation}
  where $\varphi= U_{\alpha,d}^{-1}\psi$. 
%
\end{lemma}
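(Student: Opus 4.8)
The plan is to treat the three assertions in order, pushing all the analytic content into a single ground-state substitution whose justification reduces to the fact that $\psi_{\alpha,d}$ is a classical zero-energy solution of $H_{\alpha,d}$. Unitarity of $U_{\alpha,d}$ is immediate from the definition of $\mu_{\alpha,d}$: for $\varphi\in L^2(\R^d,d\mu_{\alpha,d})$ one has, by \eqref{eq:mu-alpha-d},
\begin{equation*}
  \|U_{\alpha,d}\varphi\|_{L^2(\R^d)}^2
    = \int_{\R^d}|\psi_{\alpha,d}\varphi|^2\,dx
    = \int_{\R^d}|\varphi|^2\,d\mu_{\alpha,d}
    = \|\varphi\|_{L^2(d\mu_{\alpha,d})}^2 \, ,
\end{equation*}
so $U_{\alpha,d}$ is an isometry; since $\psi_{\alpha,d}>0$ everywhere, the map $\psi\mapsto\psi/\psi_{\alpha,d}$ is a two-sided inverse carrying $L^2(\R^d)$ back into $L^2(d\mu_{\alpha,d})$, whence $U_{\alpha,d}$ is onto and therefore unitary.

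The observation driving the remaining two steps is that, in contrast to the comparison functions $\psi_{\ell,m}$ of \eqref{eq:comparison functions lower bound}, the function $\psi_{\alpha,d}$ is smooth and nowhere zero, with bounded logarithmic derivative
\begin{equation*}
  \frac{\nabla\psi_{\alpha,d}(x)}{\psi_{\alpha,d}(x)}
    = \frac{2-d-2\alpha}{2}\,\frac{x}{1+|x|^2}\, .
\end{equation*}
Writing $\psi=\psi_{\alpha,d}\varphi$ and applying the Leibniz rule for weak derivatives with the smooth multiplier $\psi_{\alpha,d}$ gives $\nabla\psi=\psi_{\alpha,d}\big(\nabla\varphi+\varphi\,\nabla\psi_{\alpha,d}/\psi_{\alpha,d}\big)$ distributionally. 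Because the logarithmic derivative is bounded, multiplication by it preserves $L^2(d\mu_{\alpha,d})$, so for $\varphi\in L^2(d\mu_{\alpha,d})$ one has $\nabla\psi\in L^2(\R^d)$ if and only if $\nabla\varphi\in L^2(d\mu_{\alpha,d})$; this is exactly \eqref{eq:ground state 1}.

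For the representation \eqref{eq:ground state representation} I would expand, pointwise, $|\nabla\psi|^2=\psi_{\alpha,d}^2|\nabla\varphi|^2+\tfrac12\,\nabla(\psi_{\alpha,d}^2)\cdot\nabla(\varphi^2)+\varphi^2|\nabla\psi_{\alpha,d}|^2$, integrate the middle term by parts using $\Delta(\psi_{\alpha,d}^2)=2|\nabla\psi_{\alpha,d}|^2+2\psi_{\alpha,d}\Delta\psi_{\alpha,d}$ to cancel the $\varphi^2|\nabla\psi_{\alpha,d}|^2$ contribution, and obtain
\begin{equation*}
  \la\nabla\psi,\nabla\psi\ra
    = \int_{\R^d}\psi_{\alpha,d}^2|\nabla\varphi|^2\,dx
      - \int_{\R^d}\frac{\Delta\psi_{\alpha,d}}{\psi_{\alpha,d}}\,\psi^2\,dx \, .
\end{equation*}
Thus the quadratic form $\la\psi,H_{\alpha,d}\psi\ra$ equals $\int_{\R^d}\psi_{\alpha,d}^2|\nabla\varphi|^2\,dx$ plus the remainder $\int_{\R^d}\big(V_{\alpha,d}-\Delta\psi_{\alpha,d}/\psi_{\alpha,d}\big)\psi^2\,dx$, and the claim follows once this remainder vanishes, i.e. once $(-\Delta+V_{\alpha,d})\psi_{\alpha,d}=0$. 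This last identity I would check directly: with $\beta=(2-d-2\alpha)/4$ and the radial Laplacian $\Delta f=f''+\tfrac{d-1}{r}f'$ applied to $f(r)=(1+r^2)^\beta$, splitting $r^2=(1+r^2)-1$ yields
\begin{equation*}
  \frac{\Delta\psi_{\alpha,d}}{\psi_{\alpha,d}}
    = \frac{2\beta d+4\beta(\beta-1)}{1+|x|^2}
      - \frac{4\beta(\beta-1)}{(1+|x|^2)^2}\, ,
\end{equation*}
and substituting $\beta=(2-d-2\alpha)/4$ reproduces $V_{\alpha,d}$ in \eqref{eq:V-alpha} term by term. Because $\psi_{\alpha,d}$ is smooth and strictly positive, this classical identity is also the distributional one, with none of the delta-type singularity that would arise for $|x|^{-\gamma}$.

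The one point needing genuine care is the integration by parts for an arbitrary $\psi\in H^1(\R^d)$, where the individual integrands need not be integrable against $dx$ and there are potential boundary terms at infinity. I would handle this by a density and cutoff argument: establish the identity first for $\varphi\in\calC^\infty_0(\R^d)$, then pass to general $\varphi$ with $\varphi,\nabla\varphi\in L^2(d\mu_{\alpha,d})$ using the domain identity \eqref{eq:ground state 1} together with the boundedness of $\nabla\psi_{\alpha,d}/\psi_{\alpha,d}$ to control the cross term and the cutoff remainders. The boundedness of the logarithmic derivative is precisely what guarantees these error terms converge, so this is the technical heart of the argument.
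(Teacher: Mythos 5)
Your proposal is correct and follows essentially the same route as the paper: unitarity from the isometry together with the explicit inverse $\psi\mapsto\psi_{\alpha,d}^{-1}\psi$, the domain identity \eqref{eq:ground state 1} from the product rule plus boundedness of the logarithmic derivative $\nabla\psi_{\alpha,d}/\psi_{\alpha,d}$, and the form identity \eqref{eq:ground state representation} by integration by parts for $\varphi\in\calC^\infty_0(\R^d)$ followed by density of $\calC^\infty_0$ in $H^1(\R^d)$. The only difference is bookkeeping: the paper expands $\la\nabla\psi,\nabla\psi\ra$ with a general exponent $\gamma$ and fixes $\gamma=(d-2)/2+\alpha$ only at the very end, whereas you cancel the $|\nabla\psi_{\alpha,d}|^2$ term first and reduce everything to the direct verification $\Delta\psi_{\alpha,d}=V_{\alpha,d}\,\psi_{\alpha,d}$ (your coefficient computation is correct, and your use of the quadratic form $\|\nabla\psi\|^2+\la\psi,V_{\alpha,d}\psi\ra$ correctly treats the minus sign in the middle member of \eqref{eq:ground state representation} as the typo it is).
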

\begin{remark}
  Lemma \ref{lem:ground state representation} shows that the 
  Schr\"odinger operator $H_{\alpha,d}$ is equivalent to the Dirichlet 
  form $ q(\varphi)= \la\nabla\varphi, \nabla\varphi \ra_{L^2(\R^d, d\mu_{\alpha,d})}$ 
  on the weighted $L^2$--space with measure 
  $d\mu_{\alpha,d}=(1+|x|^2)^{-(d-2)/2-\alpha}dx$. 
  Note that this measure is finite if and only if $\alpha>1$. 
\end{remark} 
We give the proof of the lemma at the end of the appendix. 
\begin{theorem}\label{thm: V-alpha-d}
  Let $d\in\N$, $\alpha\in\R$, and $H_{\alpha,d}= -\Delta+V_{\alpha,d}$ 
  be the self-adjoint Schr\"odinger operator with potential 
  $V_{\alpha,d}$ given by \eqref{eq:V-alpha}.  
  Then 
  \begin{SL}
  	\item\label{claim 1} $\sigma(H_{\alpha,d})=\sigma_{ess}(H_{\alpha,d})=[0,\infty)$. 
  	\item\label{claim 2} For all $\alpha\ge 0$ the potential $V_{\alpha,d}$ is critical, that is, the Schr\"odinger operator 
  			$H_{\alpha,d}$ has a virtual level. 
  	\item\label{claim 3} Zero is not an eigenvalue  of 
  		  $H_{\alpha,d}$ when $ 0\le \alpha\le 1$.  
  		  For  $\alpha>1$ zero is an eigenvalue. 
  		  The zero energy resonance for $0\le \alpha\le 1$, 
  		  respectively ground state for $\alpha>1$, is given by  
  		  \eqref{eq:psi-alpha-d}.
	\item\label{claim 4} For $\alpha<0$, the potential 
			$V_{\alpha,d}$ is subcritical, and zero is neither an eigenvalue nor a resonance. 
  \end{SL}
\end{theorem}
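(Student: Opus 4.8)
The entire analysis will run through the ground state representation of Lemma~\ref{lem:ground state representation}, which identifies $H_{\alpha,d}$ with the weighted Dirichlet form $q(\varphi)=\int_{\R^d}|\nabla\varphi|^2\,d\mu_{\alpha,d}$ on $L^2(\R^d,d\mu_{\alpha,d})$, where $d\mu_{\alpha,d}=\psi_{\alpha,d}^2\,dx=(1+|x|^2)^{(2-d)/2-\alpha}\,dx$, see \eqref{eq:ground state representation}. Claim~\ref{claim 1} then splits in two. Nonnegativity $H_{\alpha,d}\ge0$, hence $\sigma(H_{\alpha,d})\subset[0,\infty)$, is immediate from $q\ge0$. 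For the essential spectrum I would note that $V_{\alpha,d}$ in \eqref{eq:V-alpha} is bounded and $O(|x|^{-2})$ at infinity, hence relatively form compact with respect to $-\Delta$, so by Weyl's theorem $\sigma_{ess}(H_{\alpha,d})=\sigma_{ess}(-\Delta)=[0,\infty)$, giving $\sigma(H_{\alpha,d})=[0,\infty)$.

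For the remaining claims the decisive quantities are the integrals governing the size of $\psi_{\alpha,d}$ and the (ir)recurrence of the weight $w(r)=(1+r^2)^{(2-d)/2-\alpha}$. A direct computation gives $\psi_{\alpha,d}\in L^2$ iff $\int^\infty r^{1-2\alpha}\,dr<\infty$ iff $\alpha>1$ (equivalently $\mu_{\alpha,d}(\R^d)<\infty$), and $\nabla\psi_{\alpha,d}\in L^2$ iff $\int^\infty r^{-1-2\alpha}\,dr<\infty$ iff $\alpha>0$, with a logarithmic divergence at $\alpha=0$. The plan for Claim~\ref{claim 3} is then to observe that any zero--energy solution corresponds via $U_{\alpha,d}$ to a $\varphi$ with $q(\varphi)=0$, i.e. $\nabla\varphi=0$, so the only solution regular at the origin is $\varphi\equiv1$, namely $\psi_{\alpha,d}$ itself; the second radial solution $\int_r^\infty(w(s)s^{d-1})^{-1}\,ds$ is singular at the origin and hence not a global weak solution of $H_{\alpha,d}\psi=0$. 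Therefore zero is an eigenvalue exactly when $\psi_{\alpha,d}\in L^2$, i.e. for $\alpha>1$, with ground state $\psi_{\alpha,d}$; for $0\le\alpha\le1$ the function $\psi_{\alpha,d}$ is a positive, non--$L^2$ zero--energy solution with $\nabla\psi_{\alpha,d}\in L^2$ (logarithmically divergent at $\alpha=0$), that is, a zero energy resonance in the sense of the convention stated above.

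Claim~\ref{claim 2} I would reduce to recurrence of the weighted Dirichlet form. For $\alpha\ge0$ one has $\int_1^\infty (w(r)r^{d-1})^{-1}\,dr=\int^\infty r^{2\alpha-1}\,dr=\infty$, so I can choose radial cut--offs $\varphi_n$ with $\varphi_n\equiv1$ on $\{|x|\le n\}$, $\varphi_n\equiv0$ near infinity, and $q(\varphi_n)\to0$ (by minimizing the radial Dirichlet energy on the annulus). Given any nontrivial, compactly supported $W\ge0$, dominated convergence yields $\int|\varphi_n|^2 W\,d\mu_{\alpha,d}\to\int W\,d\mu_{\alpha,d}>0$, so that for every $\lambda>0$ and all large $n$ one has $q(\varphi_n)-\lambda\int|\varphi_n|^2 W\,d\mu_{\alpha,d}<0$. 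Translating back through $u_n=\psi_{\alpha,d}\varphi_n$ gives $\la u_n,(H_{\alpha,d}-\lambda W)u_n\ra<0$; since $W$ is compactly supported and infinitesimally form bounded the essential spectrum is unchanged, $\sigma_{ess}(H_\lambda)=[0,\infty)$, and the min--max principle produces a negative eigenvalue for every $\lambda>0$. By Definition~\ref{def:VirtualLevel} this makes $V_{\alpha,d}$ critical for all $\alpha\ge0$.

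Finally, Claim~\ref{claim 4} is the transient counterpart. For $\alpha<0$ the integral $\int^\infty r^{2\alpha-1}\,dr$ converges, and I would establish subcriticality through a weighted Hardy inequality: using the sharp estimate $\int_{\R^d}|\nabla\varphi|^2|x|^{a}\,dx\ge\big(\tfrac{d-2+a}{2}\big)^2\int_{\R^d}|\varphi|^2|x|^{a-2}\,dx$ with $a=2-d-2\alpha$, whose constant equals $\alpha^2>0$ and whose origin behaviour is harmless precisely because $\alpha<0$, one obtains $q(\varphi)\ge\lambda\int|\varphi|^2 W\,d\mu_{\alpha,d}$ for a nontrivial, infinitesimally form bounded $W$ such as $W=c\,(1+|x|^2)^{-1}$, i.e. $H_{\alpha,d}-\lambda W\ge0$. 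For the last assertion, the unique regular positive solution $\psi_{\alpha,d}$ now \emph{grows} at infinity, so $\psi_{\alpha,d}\notin L^2$ (no eigenvalue) and $\nabla\psi_{\alpha,d}\notin L^2$ with polynomial divergence (hence not even a resonance under the extended convention). I expect the main obstacle to be exactly this recurrence/transience dichotomy underlying Claims~\ref{claim 2} and~\ref{claim 4}: controlling the cut--off energies $q(\varphi_n)\to0$ in the recurrent case and, more delicately, proving the weighted Hardy inequality with the correct passage between the $|x|^a$ behaviour at infinity and the regular behaviour of $(1+|x|^2)^{a/2}$ near the origin.
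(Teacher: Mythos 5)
Your claims a), b), and c) are handled essentially the way the paper handles them: nonnegativity and the classification of zero--energy eigenfunctions both come from the ground state representation \eqref{eq:ground state representation} (an $L^2$ eigenfunction at energy zero forces $q(\varphi)=0$, hence $\varphi$ constant, hence $\psi=c\,\psi_{\alpha,d}$, which is square integrable iff $\alpha>1$), and criticality for $\alpha\ge 0$ comes from plugging cut--off versions of $\psi_{\alpha,d}$ into the form and invoking the min--max principle. One genuine improvement on your side: by taking the capacity--minimizing (recurrence) cutoffs, whose annulus energy is comparable to $\big(\int_n^N r^{2\alpha-1}\,dr\big)^{-1}\to 0$, you treat $\alpha>0$ and $\alpha=0$ in one stroke, whereas the paper needs two separate ans\"atze: the scaled cutoff $\varphi(|x|/R)$ with energy $O(R^{-2\alpha})$ for $\alpha>0$, and a logarithmic cutoff $(1-\delta\ln|x|)_+$ for $\alpha=0$. (Your digression about the ``second radial solution'' in claim c) is unnecessary: the quadratic--form argument already excludes \emph{all} $L^2$ zero--energy eigenfunctions, radial or not.)

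The gap is in claim d). You propose to deduce $q(\varphi)\ge\lambda\int|\varphi|^2(1+|x|^2)^{-1}\,d\mu_{\alpha,d}$ from the homogeneous weighted Hardy inequality with weight $|x|^{a}$, $a=2-d-2\alpha$. But when $a<0$ --- which is exactly the generic regime $d\ge 3$, $(2-d)/2<\alpha<0$ --- the pointwise comparison runs the wrong way: $(1+|x|^2)^{a/2}\le |x|^{a}$ everywhere, with a severe discrepancy near the origin, so the Hardy inequality lower--bounds $\int|\nabla\varphi|^2|x|^a\,dx$, a quantity \emph{larger} than $q(\varphi)$, and gives no lower bound on $q(\varphi)$ at all. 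Nor can you retreat to the exterior region $\{|x|\ge 1\}$, where the weights are comparable: the exterior inequality fails for functions not vanishing on $\{|x|=1\}$ (test with constants), so interior and exterior must be coupled --- and this gluing is precisely the transience statement you yourself flag as ``the main obstacle'' and never prove. The paper avoids all of this with a one--line algebraic identity: for $\alpha>0$ one has $V_{-\alpha,d}(x)=V_{\alpha,d}(x)+2\alpha d\,(1+|x|^2)^{-2}$, so with $W(x)=(1+|x|^2)^{-2}$ and $\lambda=2\alpha d$ one gets $H_{-\alpha,d}-\lambda W=H_{\alpha,d}\ge 0$ by claim a), which is subcriticality. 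Incidentally, your stronger inequality with $W\sim(1+|x|^2)^{-1}$ is true and can be proved \emph{inside the same family}: from \eqref{eq:V-alpha}, $V_{\alpha,d}-V_{\alpha',d}=(\alpha^2-\alpha'^2)(1+|x|^2)^{-1}-(\alpha-\alpha')(\alpha+\alpha'+d)(1+|x|^2)^{-2}$, and for $\alpha<0$ the choice $\alpha'=|\alpha|-\epsilon$ with $0<\epsilon<\min(|\alpha|,d)$ makes the last term nonnegative, whence $H_{\alpha,d}-(\alpha^2-\alpha'^2)(1+|x|^2)^{-1}\ge H_{\alpha',d}\ge 0$; but some such argument must actually be supplied, and your Hardy--comparison route as written does not close.
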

\begin{remark}
  Using the early result of Kato, \cite{Kat59}, see 
  also \cite{Agm70,Sim69}, the operator $H_{\alpha,d}$ 
  has no strictly positive embedded eigenvalues. 
  Since the potential $V_{\alpha,d}$ is short range, 
  the spectrum of $H_{\alpha,d}$ is even purely 
  absolutely continuous inside $(0,\infty)$, see \cite[Theorem 5.10]{CycFroKirSim87}.  
\end{remark}
\begin{proof}
  Using standard methods, \cite{Tes14}, one 
  sees that $\sigma_{ess}(H_{\alpha,d})=\sigma_{ess}(-\Delta)=[0,\infty)$ 
  since $V_{\alpha,d}$ is bounded and goes to zero at infinity.  
  Moreover, the ground state representation 
  \eqref{eq:ground state representation} implies  
  $\sigma(H_{\alpha,d})\subset [0,\infty)$. Hence 
  $\sigma(H_{a,d})= \sigma_{ess}(H_{a,d})=[0,\infty)$.    
  This proves claim \ref{claim 1}).   
  
  \smallskip
  
  Given $\varphi\in L^2(\R^d, d\mu_{\alpha,d})$ let $\psi\coloneqq U_{\alpha,d}\varphi$. 
  Taking $\varphi=1$ gives 
  $\psi= \psi_{\alpha,d}>0$ which is in $H^1(\R^d)$ 
  if and only if $\alpha>1$. 
  In this case \eqref{eq:ground state representation} shows that 
  $\psi_{\alpha,d}$ is the ground state of $H_{\alpha,d}$ 
  corresponding to the eigenvalue zero. 
  This proves the second claim in \ref{claim 3}). 
  Lemma \ref{lem:criticalV} also shows that the potential $V_{\alpha,d}$ is critical when $\alpha>1$. 
  
  In addition, note that the right hand side 
  of \eqref{eq:ground state representation} is strictly 
  positive unless  $\varphi$ is constant.  
  Hence  zero is not an eigenvalue  of $H_{\alpha,d}$ when 
  $0\le \alpha\le 1$ because $\psi_{\alpha,d}$ is not 
  square integrable in this case.   
  
  \smallskip
  
  When $0<\alpha\le 1$ we take any $\varphi\in \calC^\infty_0(\R)$ 
  with $\varphi(t)=1$ for $|t|\le 1$, $\varphi(t)=0$ for 
  $|t|\ge 2$, and define  
  \begin{align*}
  	\varphi_R(x) = \varphi(|x|/R)\, 
  \end{align*}
  for $R>0$. 
  Then $|\nabla\varphi_R(x)|= R^{-1}|\varphi'|(|x|/R)$. 
  Using $\psi_R= \psi_{\alpha,d}\, \varphi_R$ we get
  \begin{align*}
  	\la \psi_R, & H_{\alpha,d}\psi_R \ra 
  	  = 
  		R^{-2} \int |\varphi'(|x|/R)|^2 (1+|x|^2)^{(2-d)/2-\alpha}\, dx \\
  	 &\lesssim R^{-2}\int_R^{2R} (1+r^2)^{(2-d)/2-\alpha}\, r^{d-1} dr 
  	 	\sim   R^{-2} \int_R^{2R} (1+r^2)^{-\alpha}\, r dr 
	 \lesssim R^{-2\alpha}  
		\to 0
  \end{align*}
 for $R\to\infty$ and $\alpha>0$. Now let $W\ge 0$ have compact 
 support, be infinitesimally form bounded w.r.t.\ $-\Delta$,  and 
 $W>0$ on a set of positive Lebesque measure. Since 
 $\psi_R(x)\to (1+|x|^2)^{(2-d)/4-\alpha/2}$ as $R\to\infty$ 
 uniformly on compact sets we have  
 \begin{align*}
 	\lim_{R\to\infty} \la \psi_R, (H_{\alpha,d}-\lambda W)\psi_R \ra
 	= -\lambda \int W(x) (1+|x|^2)^{(2-d)/2-\alpha}\, dx <0
 \end{align*}
 for all $\lambda>0$. Thus 
 $\la \psi_R, (H_{\alpha,d}-\lambda W)\psi_R \ra <0$ for all 
 large enough $R>0$. Since 
 $\sigma_{ess}(H_{\alpha,d}-\lambda W) = [0,\infty)$, the 
 Rayleigh Ritz principle shows that $H_{\alpha,d}-\lambda W$ 
 has a negative eigenvalue for any $\lambda>0$. Thus the potential 
 $V_{\alpha,d}$ is critical. Clearly zero cannot be an eigenvalue nor 
 a resonance, since then the potential $V_{\alpha,d}$ would have 
 to be critical.  
 
 \smallskip
 
 To see that $V_{0,d}$ is critical one needs to modify the ansatz function. Let $\delta>0$ 
 and set 
 \begin{align*}
 	\varphi_\delta(x)
 	  \coloneqq 
 	    \left\{\begin{array}{ccc}
	      1 &\text{if}& |x|\le 1 \\
	      (1-\delta \ln|x|)_+ &\text{if}& |x|>1
        \end{array}\right.\, 
 \end{align*}
 and $\psi_\delta = U_{\alpha,d}\varphi_\delta$. A straightforward calculation shows 
   \begin{align*}
  	\la \psi_\delta, H_{0,d}\psi_\delta \ra 
  	  &= 
  		\delta^{2} \int_{1\le|x|\le e^{1/\delta}} (1+|x|^2)^{(2-d)/2}|x|^{-2}\, dx \\
  	 &\lesssim \delta^{2}\int_1^{e^{1/\delta}} (1+r^2)^{(2-d)/2}\, r^{d-3} dr 
  	 	\sim   \delta^{2} \int_1^{e^{1/\delta}} (1+r^2)^{-1}\, r dr \\
  	 &= \frac{\delta^2}{2} \ln(1+e^{1/\delta})  
		\to 0 \quad \text{as } \delta \to 0\, .
  \end{align*}
 Thus $\lim_{\delta\to 0}\la \psi_\delta, (H_{0,d}-\lambda W)\psi_\delta\ra= -\lambda \int W(x)(1+|x|^2)^{(2-d)/2}\, dx<0$. 
 As before this shows that $V_{0,d}$ is critical.   
 Moreover, even though $\psi_{\alpha,d}\not\in  L^2(\R^d)$ when $0\le \alpha\le 1$, its gradient 
 $\nabla\psi_{\alpha,d}$ is in $L^2(\R^d)$ when $0<\alpha\le 1$ and the $L^2$-norm of 
 $\nabla\psi_{0,d}$ is only logarithmically divergent.  
 Hence  $\psi_{\alpha,d}$ is a zero energy resonance  
 for $H_{\alpha,d}$ when $0\le \alpha\le 1$.  
 
 Finally, we look at $V_{-\alpha,d}$ for $\alpha>0$. 
 A simple calculation shows 
 \begin{align*}
 	V_{-\alpha,d}(x) 
 		= 
 			V_{\alpha,d}(x) + 2\alpha d (1+|x|^2)^{-2}\, . 
 \end{align*}  
 Thus with $W(x)= (1+|x|^2)^{-2}>0$ and $\lambda = 2\alpha d>0$ 
 we have 
 \begin{align*}
 	\la \psi, (H_{-\alpha,d}-\lambda W)\psi\ra
 		=
 			\la \psi, H_{\alpha,d}\psi\ra \ge 0
 \end{align*}
 for all $\psi\in H^1(\R^d)$, since $\sigma(H_{\alpha,d})=[0,\infty)$ 
 by part \ref{claim 1}).  
 Hence $V_{-\alpha,d}$ is subcritical for $\alpha>0$.  
\end{proof}

The family of potential  $V_{\alpha,d}$ has several interesting properties summarized in   
\begin{lemma}[Properties of $V_{\alpha,d}$]  
Let $\alpha\in\R$ and $d\in\N$. Then 
\begin{SL}
  \item\label{properties V-a} In dimensions $d=1,2$ the potential $V_{\alpha,d}$ is non--trivial if $\alpha\neq |d-2|/2$ and in dimension $d\ge 3$ it is non-trivial if $\alpha\neq (2-d)/2$.  
  \item\label{properties V-b} In dimension $d=1$ we have $V_{\alpha,1}>0$ for $\alpha \le -1/2$. If $-1/2<\alpha<1/2$ then $V_{\alpha,1}>0$ near zero and it has a negative tail, i.e., $V_{\alpha,1}(x)<0$ for large $|x|$. If $\alpha>1/2$, then $V_{\alpha,1}$ is negative near zero and it has a positive tail. 
  \item\label{properties V-c} In dimension $d=2$ we have  
  		$V_{\alpha,2}>0$, i.e., $V_{\alpha,2}$ is purely repulsive 
  		for all $\alpha<0$. For $\alpha>0$ the potential $V_{\alpha,2}$ 
  		is negative near zero and has a positive tail. 
  \item\label{properties V-d} In dimension $d\ge 3$ we have  
  		$V_{\alpha,d}>0$, i.e, the potential is repulsive, 
  		for $\alpha<(2-d)/2$. 
  		For $(2-d)/2 <\alpha\le (d-2)/2$ we have $V_{\alpha,d}<0$, 
  		i.e., the potential is attractive. 
  		If $\alpha>(d-2)/2$, then $V_{\alpha,d}$ is negative 
  		near zero and has a positive tail. 
  \item\label{properties V-e} For $d=1$ the potential $V_{\alpha,1}$ is integrable and 
  		\begin{equation*}
  			\int_{-\infty}^\infty V_{\alpha,1}(x)\, dx 
  			  =
  			    \frac{\pi}{2}(\alpha-1/2)^2 >0 \quad \text{for } \alpha\neq 1/2\, . 
  		\end{equation*}
  \item\label{properties V-f}  For large enough $R$ and all dimensions $d\ge 1$ the  potentials $V_{\alpha,d}$ satisfy  the bounds \eqref{eq:absence-m zero} for  $0\le \alpha<1$, respectively   \eqref{eq:intro absence m=1} for $\alpha=1$, while they  satisfy 
  the complementary bound \eqref{eq:existence-m zero} for $0<\epsilon<4(\alpha^2-1)$ when $\alpha>1$.  
\end{SL}
\end{lemma}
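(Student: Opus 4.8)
The plan is to handle all six claims through one change of variable that linearizes the potential. Writing $t=1+|x|^2\in[1,\infty)$, the potential becomes
\[
V_{\alpha,d}(x)=\frac{At+B}{t^2},\qquad A=\alpha^2-\frac{(d-2)^2}{4},\quad B=1-\Big(\alpha+\frac d2\Big)^2,
\]
so $V_{\alpha,d}=f(t)/t^2$ with $f$ \emph{affine} in $t$. Two algebraic identities carry most of the weight. First, completing the square gives
\[
A=\frac{d(4-d)}{4}+\alpha^2-1,
\]
tying the leading coefficient to the threshold constant of Theorems \ref{thm:intro absence} and \ref{thm:intro existence}. Second, the value at the origin is
\[
V_{\alpha,d}(0)=f(1)=A+B=-d\Big(\alpha+\frac{d-2}{2}\Big).
\]

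First I would settle (a): since $V_{\alpha,d}\equiv0$ forces both $A=0$ and $B=0$, I solve $A=0$ (giving $\alpha=\pm(d-2)/2$) and $B=0$ (giving $\alpha=(2-d)/2$ or $\alpha=-(d+2)/2$) and observe the unique common root is $\alpha=(2-d)/2$, which equals $|d-2|/2$ when $d\le2$. For the sign statements (b), (c), (d) the decisive point is that $f$ is affine on $[1,\infty)$, so its sign is controlled entirely by its slope $A$ and its endpoint $f(1)=V_{\alpha,d}(0)$: when these have the same sign (or $A=0$ with $f(1)\neq0$), $f$ keeps that sign throughout, yielding a purely repulsive or purely attractive potential; when they have opposite signs, $f$ changes sign exactly once, which under $t=1+|x|^2$ means one behaviour near $x=0$ and the opposite in the tail. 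Inserting $A=\alpha^2-(d-2)^2/4$ and $f(1)=-d(\alpha+(d-2)/2)$ and reading off the sign across each stated $\alpha$-range produces (b)--(d); the boundary cases (e.g.\ $\alpha=(d-2)/2$, where $A=0$) are covered since $f(1)\neq0$ there.

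Claim (e) is a one-line computation in $d=1$: using $\int_{\R}(1+x^2)^{-1}dx=\pi$ and $\int_{\R}(1+x^2)^{-2}dx=\pi/2$, the integral equals $\pi A+\tfrac\pi2 B$, which after substituting $A=\alpha^2-\tfrac14$, $B=\tfrac34-\alpha^2-\alpha$ collapses to $\tfrac\pi2(\alpha-\tfrac12)^2$. For (f) I would expand in powers of $|x|^{-2}$: since $(1+|x|^2)^{-1}=|x|^{-2}+O(|x|^{-4})$, the leading term is $A|x|^{-2}$ with everything else $O(|x|^{-4})$. Subtracting the threshold potential and using $A-\tfrac{d(4-d)}{4}=\alpha^2-1$ gives
\[
\frac{d(4-d)}{4|x|^2}-V_{\alpha,d}(x)=\frac{1-\alpha^2}{|x|^2}+O(|x|^{-4}),
\]
positive for large $|x|$ exactly when $\alpha^2<1$, which is \eqref{eq:absence-m zero} for $0\le\alpha<1$; reversing and comparing with $\tfrac{d(4-d)+\epsilon}{4|x|^2}$ yields leading coefficient $\alpha^2-1-\epsilon/4$, positive precisely for $0<\epsilon<4(\alpha^2-1)$, giving \eqref{eq:existence-m zero} when $\alpha>1$.

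The one genuinely delicate point is the borderline $\alpha=1$ in (f): there $A=\tfrac{d(4-d)}{4}$ exactly, so the $|x|^{-2}$ terms cancel and the leading-order comparison is inconclusive. The hard part will be to keep the expansion honest to the next order, where the remaining difference against \eqref{eq:intro absence m=1} is $\tfrac{1}{|x|^2\ln|x|}+O(|x|^{-4})$, positive for large $|x|$ because the logarithmic correction dominates the $O(|x|^{-4})$ error. Every other claim reduces to the affine sign analysis of $f$ together with the two elementary integrals above.
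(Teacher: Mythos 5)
Your proposal is correct and takes essentially the same route as the paper: the same two coefficients $A=a_{\alpha,d}=\alpha^2-(d-2)^2/4$ and $B=b_{\alpha,d}=1-(\alpha+d/2)^2$, the sign of $V_{\alpha,d}(0)=A+B=-d(\alpha+(d-2)/2)$ against the sign of the tail coefficient $A$, the two elementary integrals for claim e), and the leading-order comparison $A-\tfrac{d(4-d)}{4}=\alpha^2-1$ for claim f). Your affine packaging $f(t)=At+B$ on $t\in[1,\infty)$ is a clean way of making rigorous what the paper calls ``straightforward to deduce'' --- in particular that $f$ changes sign at most once, so same-sign endpoint and slope give a globally one-signed potential, and that the borderline case $\alpha=1$ in f) is settled because the $O(|x|^{-4})$ remainder is beaten by the $|x|^{-2}\ln^{-1}|x|$ correction.
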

\begin{remark}
  Claims \ref{properties V-b}) and \ref{properties V-c}) above 
  are consistent with what is known about weakly coupled bound states 
  in low dimensions.  It is known that if 
  $V\in L^1(\R^d)$ and $\int_{\R^d} V\, dx\le 0$, where  $V$ is supposed to be non--trivial when $\int V \, dx=0$,  then the operator 
  $-\Delta+\lambda V$ always has a negative bound state, no matter how small the coupling parameter $\lambda>0$ is, when $d=1,2$. 
  See, for example, \cite{Sim76} where this is proved under some additional assumptions, or \cite{HoaHunRicVug17} for the full result. 
  In particular, this implies that critical potentials 
  in one and two dimensions have to change sign and, if 
  they are integrable, then $\int_{\R^d} V\, dx>0$ unless 
  $V$ is trivial.  
  
  In addition, claim \ref{properties V-d}) is consistent with our 
  non--existence Theorem \ref{thm:intro absence}. 
  Non--positive  potentials cannot have a zero energy ground state 
  in dimensions $d\le 4$. They need to have a 
  strong enough positive tail in order to be able to have zero 
  energy bound states. Moreover, claim \ref{properties V-f}) 
  together with the fact that the potential $V_{\alpha,d}$ 
  supports zero energy ground states if and only if $\alpha>1$, 
  see Theorem \ref{thm: V-alpha-d}, is consistent with our 
  Theorems \ref{thm:intro absence} 
  and \ref{thm:intro existence}.     
  
  It is illuminating to plot $V_{\alpha,d}(x)$ for $|x|=r$ to 
  explicitly see the behavior of  $V_{\alpha,d}$ for various 
  values of the parameters  $\alpha$ and $d$. 
%
\end{remark}
\begin{proof}
The first claim \ref{properties V-a}) is easy to check. To prove the rest, 
let $a_{\alpha,d}= \alpha^2-(d-2)^2/4$ and $b_{\alpha,d}= 1-(\alpha+d/2)^2$. Then 
  \begin{align*}
  	4V_{\alpha,d}(0) = 4(a_{\alpha,d}+b_{\alpha,d}) = -2d(d-2+2\alpha) >0  
  \end{align*}
  if and only if $\alpha<(2-d)/2$. 
  Moreover, unless $a_{\alpha,d}=0$, the sign of $V_{\alpha,d}(x)$ for large $|x|$ is determined by the sign of $a_{\alpha,d}$. Since 
  $a_{\alpha,d}>0$ if and only if $|\alpha|>|d-2|/2$, it is straightforward to deduce the claims \ref{properties V-b}), \ref{properties V-c}), and \ref{properties V-d}) from this.

  Clearly, $V_{\alpha,1}$ is integrable. Using  
  $\int_{-\infty}^\infty (1+x^2)^{-1}\, dx= \pi$ and 
  $\int_{-\infty}^\infty (1+x^2)^{-2}\, dx= \pi/2$,  
  claim \ref{properties V-e}) follows from a 
  simple calculation. 
 
 Since for large $|x|$ the second term in the definition of $V_{\alpha,d}$ is much 
 smaller than the first, the last claim \ref{properties V-f}) follows 
 from a straightforward computation. 
\end{proof}

It remains to give the proof of the ground state representation.

\begin{proof}[Proof of Lemma \ref{lem:ground state representation}]
  Let $\gamma\in \R $ and set 
  $\psi_\gamma(x)= (1+|x|^2)^{-\gamma/2}$ for 
  $x\in\R^d$, which is a regularized version of 
  $|x|^{-\gamma}$ used at the end of Section \ref{sec:DefAndResult}. When $\psi$ and $\varphi$ are related by  
  \begin{equation}\label{eq:relation psi varphi}
  	\psi= \psi_\gamma\, \varphi
  \end{equation}
  then $\psi\in L^2(\R^d)$ is clearly equivalent to 
  $\varphi\in L^2(\R^d, \psi_\gamma^2\, dx)$ and 
  the corresponding norms are the same. So the map 
  $U_\gamma:L^2(\R^d,\psi_\gamma^2\, dx) \to L^2(\R^d)$, $\varphi\mapsto \psi_\gamma\varphi$  
  preserves the corresponding norms. Its inverse is given by 
  $U_\gamma^{-1}\psi = U_{-\gamma}\psi = \psi_\gamma^{-1}\psi$ and from this 
  one easily checks that $U_\gamma$ is a unitary map 
  from the weighted space $L^2(\R^d, \psi_\gamma^2\, dx )$ 
  to $L^2(\R^d)$. This proves 
  \eqref{eq:ground state representation 0}.

  If $\psi\in L^2(\R^d)$ and $\varphi=U_\gamma^{-1}\psi\in L^2(\R^d, \psi_\gamma^2\, dx)$, then we have, 
  in the sense of distributions,   
  \begin{equation}\label{eq:nice 1}
  	\nabla\psi = \varphi\nabla\psi_\gamma + \psi_\gamma\nabla\varphi 
  		=
  			-\gamma \psi_\gamma (1+|x|^2)^{-1}x \varphi 
  			+ \psi_\gamma\nabla\varphi
  		= 
  			-\gamma (1+|x|^2)^{-1}x \psi 
  			+ \psi_\gamma\nabla\varphi\, 
  \end{equation}
  since $\psi_\gamma\in \calC^\infty(\R^d)$. 
  Clearly, $ (1+|x|^2)^{-1}x $ is bounded on 
  $\R^d$. 
  Therefore, if $\varphi\in L^2(\R^d, \psi_\gamma^2\, dx)$ and $\nabla\varphi\in L^2(\R^d, \psi_\gamma^2\, dx)$ then \eqref{eq:nice 1} shows that 
  $\nabla\psi\in L^2(\R^d)$.  
  Hence, if $\varphi$ and $\nabla\varphi$ are in 
  $L^2(\R^d, \psi_\gamma^2\, dx)$ then $\psi=U_\gamma\varphi$ 
  is in  $H^1(\R^d)$ . 
  
  Conversely, if $\psi\in H^1(\R^d)$, then, as distributions, 
  $
  	\nabla\varphi 
  		= 
  			\gamma (1+|x|^2)^{\gamma/2-1}x \psi 
  			+ (1+|x|^2)^{\gamma/2}\nabla \psi \, 
  $, which shows that 
  \begin{equation}\label{eq:nice 2}
  	\psi_\gamma\nabla\varphi 
  		=
  			(1+|x|^2)^{-1}x\psi + \nabla\psi \in L^2(\R^d)\, . 
  \end{equation}
  That is, if $\psi\in L^2(\R^d)$, then $\varphi=U_\gamma^{-1}\psi\in L^2(\R^d,\psi_\gamma^2\, dx)$ and if, in addition,  
  $\nabla\psi\in L^2(\R^d)$ then \eqref{eq:nice 2} 
  shows that  
  $\nabla\varphi\in L^2(\R^d,\psi_\gamma^2\, dx)$. 
  Altogether, this proves  
  \begin{equation*}
  	U_\gamma^{-1}(H^1(\R^d)) 
  		= 
  			\big\{
  				\varphi\in L^2(\R^d,\psi_\gamma^2\, dx):\, 
  				\nabla\varphi\in L^2(\R^d,\psi_\gamma^2\, dx)
  			\big\}\, 
  \end{equation*} 
  which is \eqref{eq:ground state 1}. 
  Moreover, $\calC^\infty_0(\R^d)$ is dense in $H^1(\R^d)$ and since $U_\gamma$ maps $\calC^\infty_0(\R^d)$ into itself it is also dense in $  	U_\gamma^{-1}(H^1(\R^d)) $. 
  So we only have to prove \eqref{eq:ground state representation} for  
  $\varphi\in \calC^\infty_0(\R^d)$. 
  
  Let $\gamma\in\R$ and  $\psi= \psi_\gamma\varphi$ with 
  $\varphi\in \calC^\infty_0(\R^d)$. 
  Then, as already noticed before,  
  \begin{equation*}
  	\nabla\psi(x) = -\gamma(1+|x|^2)^{-\gamma/2-1}x \varphi(x)
  		+ (1+|x|^2)^{-\gamma/2} \nabla\varphi(x)\, , 
  \end{equation*}
 hence 
 \begin{equation}\label{eq:nice 3}
 \begin{split}
   \la \nabla\psi,\nabla\psi \ra 
   	&= 
   		\la\nabla\varphi, (1+|x|^2)^{-\gamma}\nabla\varphi \ra 
   		-2\gamma\re(\la \nabla\varphi, (1+|x|^2)^{-\gamma-1}x\varphi \ra) \\
   	&\phantom{=~~}
   		+\gamma^2 \la \varphi, (1+|x|^2)^{-\gamma-2}|x|^2\varphi \ra \, .
 \end{split}	
 \end{equation}
An integration by parts shows 
\begin{align*}
  \re(\la &\nabla\varphi, (1+|x|^2)^{-\gamma-1}x\varphi \ra)
  	= -\re(\la \varphi, \nabla\cdot((1+|x|^2)^{-\gamma-1}x\varphi )  \ra  )\\
  	&= 2(\gamma+1) \la \varphi, (1+|x|^2)^{-\gamma-2}|x|^2\varphi \ra 
  		- d \la \varphi, (1+|x|^2)^{-\gamma-1}\varphi \ra 
  		- \re(\la \varphi, (1+|x|^2)^{-\gamma-1}x\nabla\varphi \ra )\, .
\end{align*} 
Noticing that $ \re(\la \varphi, (1+|x|^2)^{-\gamma-1}x\nabla\varphi \ra ) =   \re(\la \nabla\varphi, (1+|x|^2)^{-\gamma-1}x\varphi \ra) $ we get 
\begin{equation*}
    2\gamma \re(\la \nabla\varphi, (1+|x|^2)^{-\gamma-1}x\varphi \ra)
	=  2\gamma(\gamma+1) \la \varphi, (1+|x|^2)^{-\gamma-2}|x|^2\varphi \ra 
  		- d\gamma \la \varphi, (1+|x|^2)^{-\gamma-1}\varphi \ra \,
\end{equation*}
and plugging this into \eqref{eq:nice 3} we arrive at 
 \begin{equation*}
 \begin{split}
   \la \nabla\psi,\nabla\psi \ra 
   	&= 
   		\la\nabla\varphi, (1+|x|^2)^{-\gamma}\nabla\varphi \ra 
   		- 2\gamma(\gamma+1) \la \varphi, (1+|x|^2)^{-\gamma-2}|x|^2\varphi \ra \\
   	&\phantom{+~~~}	  		+ d\gamma \la \varphi, (1+|x|^2)^{-\gamma-1}\varphi \ra 
   		+\gamma^2 \la \varphi, (1+|x|^2)^{-\gamma-2}|x|^2\varphi \ra \, \\
    &=  
    	\la\nabla\varphi, (1+|x|^2)^{-\gamma}\nabla\varphi \ra 
      	- \gamma(\gamma+2-d) \la \varphi, (1+|x|^2)^{-\gamma-1}\varphi \ra  \\
   	&\phantom{+~~~}	
   		+\gamma(\gamma+2) \la \varphi, (1+|x|^2)^{-\gamma-2}\varphi \ra \\
   	&=  
    	\la\nabla\varphi, (1+|x|^2)^{-\gamma}\nabla\varphi \ra 
      	- \gamma(\gamma+2-d) \la \psi, (1+|x|^2)^{-1}\psi \ra  \\
   	&\phantom{+~~~}	
   		+\gamma(\gamma+2) \la \psi, (1+|x|^2)^{-2}\psi \ra\, .
 \end{split}	
 \end{equation*}
Choosing $\gamma= (d-2)/2+\alpha$ finishes the proof of Lemma \ref{lem:ground state representation}.
\end{proof}
\begin{remarks}
  The proof of Lemma \ref{lem:ground state representation} is clearly 
  inspired by the proof of Hardy's inequality on $L^2(\R^d)$ for 
  $d\ge 3$, where one considers $\psi(x)= |x|^{-\gamma/2}\varphi(x)$ 
  for $\varphi\in \calC^\infty_0(\R^d\setminus\{0\})$ and optimizes 
  in $\gamma>0$. One needs to restrict to $\varphi\in \calC^\infty_0(\R^d\setminus\{0\}) $ due to the singularity of 
  $|x|^{-\gamma/2}$ in zero. 
   Since $\calC^\infty_0(\R^d\setminus\{0\})$ is dense in $L^2(\R^d)$ only when $d\ge 3$, this leads to the well-known fact that 
   Hardy's inequality only holds in dimensions $d\ge 3$.  
\end{remarks}

\end{appendix}
\smallskip\noindent
\textbf{Acknowledgements:} 
Funded by the Deutsche Forschungsgemeinschaft (DFG, German Research Foundation) -- Project-ID 258734477 -- SFB 1173. 
This project also received funding from the 
European Research Council (ERC) under the 
European Union's Horizon 2020 research and 
innovation programme (grant agreement MDFT No.\ 725528).
Michal Jex also received financial support from the 
Ministry of Education, Youth and Sport of the Czech Republic under the Grant No.\ RVO 14000.  
Markus Lange was supported by NSERC of Canada and also acknowledges financial support from the European Research Council (ERC) under the European Union Horizon 2020 research and innovation programme (ERC StG MaMBoQ, grant agreement No. 802901).
It is also a pleasure to thank the CIRM, Luminy, for 
the REB  (research in residence) program, where part 
of this work was done.

%

\providecommand{\MR}{\relax\ifhmode\unskip\space\fi MR }
\providecommand{\MRhref}[2]{%
  \href{http://www.ams.org/mathscinet-getitem?mr=#1}{#2}
}
\providecommand{\href}[2]{#2}

\end{document}